\documentclass[reqno]{amsart}
\oddsidemargin9mm
\evensidemargin9mm 
\textwidth15.0cm  
\parskip0mm   
\usepackage{tkz-euclide}

\usepackage[colorlinks,linkcolor=blue]{hyperref}
\usepackage[nameinlink]{cleveref}
\usepackage{amsmath}  
\usepackage{amsthm}
\usepackage{amsfonts}
\usepackage{amssymb} 
\usepackage{mathrsfs} 
\usepackage{graphicx}  
\usepackage{bm}
\usepackage{tikz} 
\usetikzlibrary{svg.path} 
\usepackage{amssymb}
\usepackage{pgfplots}

\usetikzlibrary{calc} 
\usetikzlibrary{arrows.meta} 
\usepackage{amsthm} 
\usepackage{float}
\usepackage{enumitem}
\usepackage[english]{babel}
\usepackage[font=footnotesize, labelfont=bf]{ caption}
\usepackage{ esint }
\usepackage{stackengine,scalerel,graphicx}
\stackMath

\definecolor{trueblue}{rgb}{0.0, 0.45, 0.81}
\definecolor{truegreen}{rgb}{0.13, 0.55, 0.13}

\newcommand{\eps}{\varepsilon}

\theoremstyle{plain}
\begingroup
\newtheorem{theorem}{Theorem}[section]
\newtheorem{lemma}[theorem]{Lemma}

\newtheorem{remark}[theorem]{Remark}

\newtheorem{corollary}[theorem]{Corollary}

\newenvironment{step}[1]{\underline{Step #1}.}{}

\theoremstyle{definition}
\begingroup
\newtheorem{definition}[theorem]{Definition}
\endgroup

\numberwithin{equation}{section}
\newcommand{\N}{\mathbb{N}}
\newcommand{\Z}{\mathbb{Z}}
\newcommand{\R}{\mathbb{R}}

\newcommand{\x}{{\times}}
\newcommand{\defas}{:=}

\usepackage[normalem]{ulem}



\begin{document}

\title[Crystallization in the Winterbottom shape and sharp fluctuation laws]{Crystallization in the Winterbottom shape\\ and sharp fluctuation laws}


\author[M. Friedrich]{Manuel Friedrich} 
\address[Manuel Friedrich]{Department of Mathematics, Friedrich-Alexander Universit\"at Erlangen-N\"urnberg. Cauerstr.~11,
D-91058 Erlangen, Germany}
\email{manuel.friedrich@fau.de}
\urladdr{https://www.math.fau.de/angewandte-mathematik-1/mitarbeiter/prof-dr-manuel-friedrich/}

\author[L.~Kreutz]{Leonard Kreutz}
\address[Leonard Kreutz]{Zentrum Mathematik - M7, Technische Universit\"at M\"unchen,  Boltzmannstr. 3
D-85748 Garching b. M\"unchen,  Germany}
\email{leonard.kreutz@.tum.de}
\urladdr{https://www.math.cit.tum.de/math/personen/wissenschaftliches-personal/kreutz-leonard/}
  
\author[U.~Stefanelli]{Ulisse Stefanelli}
\address[Ulisse Stefanelli]{Faculty of Mathematics, University of
  Vienna, Oskar-Morgenstern-Platz 1, A-1090 Vienna, Austria, \& Vienna
  Research Platform on Accelerating Photoreaction Discovery,
  University of Vienna, Wahringerstraße 17, 1090 Wien, Austria.}
\email{ulisse.stefanelli@univie.ac.at}
\urladdr{https://www.mat.univie.ac.at/~stefanelli/}


\begin{abstract} 
We address finite crystallization in two dimensions in the presence of
a flat crystalline substrate. Particles interact through short-range two-
and three-body potentials favoring local square-lattice
arrangements. An additional interaction term of relative strength $\beta>0$ couples
the particles and the substrate. Our first main result proves
crystallization for all $\beta>0$, corresponding to
 the onset of discrete Winterbottom configurations. 
The
proof relies on a stratification technique from
\cite{FriedrichKreutz}, characterizing the topology of the bond graph
of minimizing configurations.

Our second main result concerns
fluctuations estimates for $\beta\in (0,1)$. We obtain bounds on
the distance between distinct minimizers with the same number $N$ of
particles, showing a sharp scaling law
$N^{3/4}$ when $\beta$ is rational,  and $N^{1/3}$ 
when $\beta$ is irrational and algebraic. This reveals a genuine
substrate-driven effect on fluctuation laws. As a corollary, we derive
a discrete-to-continuum convergence of minimizers towards the
Winterbottom equilibrium shape in the large-particle limit.  
\end{abstract}

\subjclass[2010]{}
\keywords{Crystallization, square lattice,  particle  interaction
  potentials, stratification,  epitaxial growth, Winterbottom shape}

\maketitle

\section{Introduction}

At low temperatures, matter typically exhibits crystalline order. In this
regime, the interactions between atoms and molecules can be well
approximated by configurational potentials. In the zero-temperature
limit, it is usually assumed that matter organizes into optimal
configurations, minimizing such configurational energies. Proving
mathematically that such optimal
configurations are periodic ({\it crystalline}) is precisely the aim of the so-called
{\it crystallization problem} ~\cite{Friesecke-Theil15}. Despite considerable attention,
rigorous crystallization results remain scarce. To date,
crystallization of a finite number of particles ({\it finite crystallization}) has been established in one and two spatial dimensions, under various assumptions on the interaction potentials, see Section
\ref{sec:literature} below.

In this paper, we investigate the {\it discrete Winterbottom
  problem}, namely, the determination of equilibrium particle
configurations in a crystal in contact with a
substrate~\cite{Winterbottom}. Specifically, we establish the first crystallization result in the presence
of a flat substrate.  Restricting to two space dimensions, we consider a
finite system of $N$ 
particles interacting at short range through two- and three-body
potentials. These interactions are designed to favor local
arrangements with up to four nearest-neighbors bonds forming $\pi/2$ angles between them. In
addition, the particles interact with a  fixed  flat crystalline substrate.  We are
interested in characterizing optimal configurations of such  
particles. This framework is inspired
by epitaxial growth where the crystal develops layer by layer on a substrate. The relative intensity of the
particle-particle and particle-substrate interaction is described by a
parameter $\beta>0$.

Our results are twofold. At first, we prove crystallization for all
$\beta>0$, see \Cref{thm:crystallization}. To the best of our
knowledge, this constitutes the first two-dimensional
finite-crystallization result {\it relative} to a prescribed
substrate. 
The crystallization
  proof relies on the {\it stratification} technique introduced in
\cite{FriedrichKreutz}, see Section~\ref{sec:stratification}
below. Due to the substrate
interaction, the minimizing {\it Winterbottom configurations} differ from minimizers 
in the absence of a substrate, in agreement with the predictions of
the Winterbottom problem. Note that for $\beta\geq 1$ the system falls into the so-called
{\it wetting regime}, see \Cref{rem:wetting}, where minimizers flatten against the substrate
and hence are  less interesting. 

Secondly, for $\beta\in(0,1)$ we prove sharp {\it fluctuations}
estimates, see \Cref{thm:fluctuation}. As is typical in
crystallization problems, minimizers are nonunique for most
values of $N$~\cite{DeLuca4,Mainini-Piovano,  Schmidt-disk}.  We prove matching upper and lower bounds on the distance between
minimizers, depending on $\beta$. More precisely,
if $\beta$ is rational, we prove that minimizers can differ up to 
$  C N^{3/4}$ particles for some constant $C>0$. This
scaling coincides with that observed in other two-dimensional lattice regimes without
a substrate~\cite{Davoli15,Davoli16,Mainini-Piovano,Schmidt-disk}. In contrast, for
$\beta$ irrational but algebraic, minimizers may differ up to  $ C_{\beta, \delta} N^{1/3+ \delta }$ particles, for arbitrarily small $ \delta >0$  with $C_{\beta,\delta }\to \infty$ as ${ \delta} \to 0$.   This a
genuine effect of the presence of
the substrate of fluctuation laws. In a much simplified
one-dimensional setting, the sensitivity of ground-state configurations on the value of $\beta$ has been already observed in~\cite{FriedrichStefanelli}.

An immediate
consequence of our  sharp  fluctuation results is the characterization of the
macroscopic shape of minimizers in the large-system limit $N \to
\infty$. We establish a quantitative
discrete-to-continuum convergence of {\it Winterbottom configurations}  towards the
corresponding macroscopic {\it Winterbottom shape}, that is, the
equilibrium shape of a macroscopic crystal in contact with the
substrate.

\subsection{Relation with the literature}\label{sec:literature}
In two dimensions, finite crystallization of hard spheres into the
triangular lattice was first established by  {\sc Heitman \& Radin}~\cite{HR}, building on a result by
 {\sc Harborth}~\cite{Harborth}. The same result was later reobtained
 independently by {\sc De Luca \& Friesecke}~\cite{Lucia}  through
  an entirely different approach, see
 Section~\ref{sec:stratification} below. The case of soft spheres was addressed
by {\sc Radin}~\cite{Radin} and {\sc Wagner}
\cite{Wagner83}, and subsequently extended by {\sc Del Nin \& De Luca}~\cite{LuciaDelNin}. Two-body interactions
in combination with three-body ones can lead either to the square lattice
\cite{Mainini-Piovano} or the hexagonal one~\cite{Mainini}, depending on the
specific choices. Among these, \cite{FriedrichKreutz} introduces a new proof
technique via stratification, which is the one used in this paper as
well,  see
 Section~\ref{sec:stratification} below. Finite crystallization of ionic  dimers  systems has
been obtained both in the square case
 \cite{FriedrichKreutzSquare}  and in the hexagonal case
\cite{FriedrichKreutzHexagonal}. {\sc De Luca, Ninno, \& Ponsiglione}
\cite{DeLuca2} deal with a finite-crystallization
setting including orientations. Apart from the very special setting
considered  in~\cite{Lazzaroni}, no finite-crystallization result in
three  dimensions is currently available. To the best of our
knowledge,  our result in \Cref{thm:crystallization} is the first
finite-crystallization theorem in which the interaction with a substrate is taken into account.

Note that the crystallization problem takes another flavor in the thermodynamic
limit $N \to \infty$. One-dimensional results can be found in
\cite{Gardner,Radi,Ventevogel}, while the stability of periodic one-dimensional configuration, or
lack thereof, is
discussed in~\cite{Hamrick}.
In two dimensions, {\sc Theil}~\cite{Theil} proved that
certain long-range two-body interactions lead to crystallization in the triangular
lattice.  Crystallization as $N \to \infty$ has been obtained by {\sc B\'etermin, De Luca,
  \& Petrache}~\cite{Betermin0} in the square-lattice case and by {\sc
  Farmer, Esedo\=glu, \& Smereka} in the hexagonal one
\cite{Smereka15}.  In three dimensions, {\sc Flatley \& Theil}~\cite{Flateley2}
proved that the face-centered-cubic lattice arises when a specific
selection mechanism on next-to-nearest neighbors interactions is imposed, see also~\cite{Flateley1}. A related computation of the
corresponding {\it Wulff shape} for $N\to \infty$ is
in~\cite{cicalese2}. The defective case has also been
considered. The emergence of rigid polycrystaline structures has been
tackled in \cite{FriedrichKreutzSchmidt}. Dislocations in discrete
structures and their coarse-graining have been studied under different
assumptions on the interaction energy in \cite{Ariza,Alicandro1,Alicandro2,Giuliani},
among others.
Yet another different setting is that of lattice-crystallization,
where one considers  the best lattice for a given lattice
energy. Here, the literature is vast and the reader is referred to
\cite{Betermin2,Betermin3,Cohn, Coulangeon} for references.

Before the present work, the only study dealing with finite
crystallization in the presence of a substrate was the already mentioned~\cite{FriedrichStefanelli}. There, a one-dimensional
hard-sphere finite-crystallization problem is addressed, under the
influence of a periodic background modeling substrate
interaction. In contrast to the setting considered here,
\cite{FriedrichStefanelli} assumes that the substrate and the
crystallizing particles favor different lattice parameters. 

Uniqueness in finite-crystallization problems occurs only for
specific values of $N$. In case of nonuniqueness, the study of the
distance of two distinct minimizers (up to lattice translations) has
led to different results, depending on the underlying lattice
structure. For the triangular lattice,  {\sc
  Schmidt}~\cite{Schmidt-disk} proved  that minimizers can differ
by $CN^{3/4}$ particles, confirming a conjecture from~\cite{Yuen}. This has been revisited in~\cite{Davoli16},
where a sharp constant $C$ is identified.  {\sc Cicalese \&
  Leonardi}~\cite{cicalese} proved the same law by a different approach based
on quantitative isoperimetric inequalities~\cite{Bollobas,Figalli-Maggi-Pratelli,maggi2012sets}. Their method extends to $\Z^d$, as well,
providing the upper bound $CN^{1-1/2d}$. This bound, however, is not
sharp for $d>2$, as was shown by identifying the sharp regime, first in the cubic lattice $\Z^3$~\cite{Mainini-Piovano-schmidt} and eventually  by  {\sc Mainini \& Schmidt}~\cite{edo-bernd}   in any dimension.  The fluctuation law $CN^{3/4}$ has also been shown to hold for the square lattice~\cite{Mainini-Piovano}, the hexagonal lattice~\cite{Davoli15}, and ionic dimer systems~\cite{FriedrichKreutzHexagonal,FriedrichKreutzSquare}.  To the best of our knowledge, \Cref{thm:fluctuation} provides the
first fluctuation estimates in the presence of a substrate.

Under the assumption of crystallinity, the  Winterbottom shape has  already been
identified in the setting of the discrete double-bubble problem in the
square lattice in~\cite{gorny}, see also~\cite{Duncan}, and the
continuous counterparts~\cite{Duncan0,Gorny2,Gorny3}. A
discrete-to-continuous justification of the emergence of a hexagonal Winterbottom shape, emerging from a discrete model with two crystals with mismatched lattice parameters, was given by {\sc Piovano \& Vel\v ci\' c} in~\cite{Piovano1,Piovano2}.
The stability of Winterbottom shapes was investigated in
\cite{Kholmatov, Kotecky,Kreutz}.  In contrast to these works, the novelty of our
contribution lies in proving crystallization rather than assuming
it.   We note, however, that our assumptions on the local geometry of
the crystalline lattice are comparatively more stringent.

\subsection{Stratification technique}\label{sec:stratification} 
In the two-dimensional case, finite-crystallization for hard spheres has often been  obtained
by adapting the {\it induction method over bond-graph layers} by {\sc Heitman \& Radin}
\cite{HR}. This approach determines the fine geometry of
the minimizers by inductively considering the relative
effect of boundary vs.\ bulk particles in the configuration. A second
elegant technique to prove finite-cristallizaton for hard spheres has
been introduced by {\sc De Luca \& Friesecke}~\cite{Lucia},  based on {\it
  discrete geometry}. Here, a notion of discrete combinatorial
curvature is associated to the natural bond graph of the configuration, and a discrete Gau\ss-Bonnet-like
theorem is applied. However, neither of these techniques appears to
extend naturally to the case of particle interactions with a
substrate. This is particularly evident for the induction method over bond-graph layers \cite{HR}, relying on the idea that adding (or removing) a boundary layer to a minimizer should maintain minimality -- a property  which is simply false in the presence of a substrate. 

A key technical difference of the present work compared to earlier
ones is our use of an alternative argument based on   {\it
  stratification}. This method, first introduced by the first two
authors in~\cite{FriedrichKreutz}, provided an alternative
proof  of the finite-crystallization result for the square lattice of~\cite{Mainini-Piovano}.

 We now give a heuristic overview of the stratification technique,
 postponing details to Section~\ref{sec:mainproof} below.  Using purely variational arguments, we first show that minimizers
are {\it regular}: every bond between two neighboring
particles has lenght approximately
$1$, and the angle formed by two adjacent bonds is approximately a
multiple of $\pi/2$. To each such regular configuration we associate
its {\it strata}, namely all bond paths that are approximately
straight. The crystallization result follows by characterizing the
topology of these strata. In particular, due to the presence of the
substrate, we can distinguish between those strata that are {\it interacting} or {\it
  noninteracting} with it. We prove that two interacting strata, or
two noninteracting ones, cannot intersect, and that each interacting stratum crosses all
noninteracting ones. This structural property allows us to reconstruct the global topology of the bond graph of minimizers, ultimately proving \Cref{thm:crystallization}.

\subsection{Discrepancy theory}\label{sec:discrepancy}
Previous results on fluctuation estimates have relied either
on direct manipulations of minimizing configurations
\cite{Davoli15,Mainini-Piovano-schmidt,edo-bernd,Schmidt-disk}
or on quantitative discrete isoperimetric inequalities
\cite{cicalese,Davoli16,FriedrichKreutz, Mainini-Piovano} (or both).  Extending these techniques
to the case of a substrate is not straightforward. On the one hand,
the substrate restricts the possibility of directly manipulating
minimizers. On the other hand, discrete isoperimetric inequalities
in the presence of a substrate are  currently  not available.
 
In this paper, we are hence forced to follow a different path, 
essentially based on {\it discrepancy theory} for sequences
\cite{Kuipers}. This plays a crucial role to treat the
case  $\beta \in (0,1) $ irrational but 
algebraic. In particular, it is used to prove the following fact: for all $N_0$ large enough, one can find at
most  $ C_{\beta,\delta}  N_0^{2/3+ \delta }$ and at least  $  c_{\beta,\delta} N_0^{2/3- \delta }$ 
minimizers with $N$ particles for $N\in [N_0,2N_0]$ taking the form of an
exact {\it rectangle} (i.e., arrangements of particles in $\Z^2$
with equal rows and equal  columns).  Here, $C_\delta \to \infty$ and
$c_\delta \to 0$ as $\delta \to 0$.  Moreover, we prove that any two such
optimal rectangles  differ  in the number
of particles at least by $ c_{\beta,\delta}  N_0^{1/3- \delta }$ and at most by
$  C_{\beta,\delta} N_0^{1/3+ \delta }$. As a consequence, all minimizers with  $N\in
[N_0,2N_0]$ are at most $ C_{\beta,\delta} N_0^{1/3+ \delta }$ far from an optimal
rectangle. At the same time, one finds $N\in [N_0,2N_0]$   such that
the closest optimal rectangle has distance at least
$ c_{\beta,\delta}   N_0^{1/3- \delta }$ and uses such a rectangle to the same lower bound
on the fluctuation.

\subsection{Structure of the paper}
The model is introduced in Section~\ref{sec:setting}, where we also
state the main results. In particular, \Cref{thm:crystallization} and
\Cref{thm:fluctuation} contain the crystallization and the
fluctuation results, respectively. Section~\ref{sec:mainproof}
presents the stratification technique used in the analysis and
applies it to derive both local and global properties of
minimizers. The proof of a technical lemma is postponed to the
Appendix. Finally, Section~\ref{sec: main} contains the proof of the 
main results. Specifically,  \Cref{thm:crystallization} is
proved in Section~\ref{sec:thm:crystallization} and \Cref{thm:fluctuation}  is
proved in Section~\ref{sec:fluctuation}.

\section{Setting and main results}\label{sec:setting}

We consider particle systems in two dimensions, and model their
interaction by classical potentials in the frame of Molecular
Mechanics~\cite{Molecular, Lewars}. Let $\beta >0$ be given and define the \emph{substrate} by $\mathcal{L}^-=\mathbb{Z}^2 \cap \{x_2 \leq 0\}$. Indicating  a   \emph{configuration of particles} by $C_N =\{x_1,\ldots,x_N\} \subset \mathbb{R}^2\cap \{x_2 >0\}$, we define its energy by
\begin{align}\label{eq: main energy}
\mathcal{F}_\beta(C_N) = \frac{1}{2}\underset{x_i \neq x_j}{\sum_{x_i,x_j \in C_N}} v_2(|x_i-x_j|) + 
\beta \sum_{x_i \in C_N,  z \in \mathcal{L}^-} v_2(|x_i-z|) + \frac{1}{2} \sum_{x,y,z } v_3(\theta_{x,y,z})\,,
\end{align}
where $v_2$ and $v_3$ denote two-body and three-body interaction
potentials, respectively, which are specified below and depicted in
Figure \ref{fig:potentials}.  The third sum runs over triples
$(x,y,z)\in ( C_N \cup \mathcal{L}^-)^3$ such that $|x-y|,|y-z|<r_0$
(to be defined later, see ($\rm ii_2$)) and $\theta_{x,y,z}$ denotes
the angle formed by the vectors $x-y$ and $z-y$ (counted clockwisely).
The factor $\frac{1}{2}$ accounts for double counting of bonds and
angles. We fix  $0<\varepsilon<\varepsilon_0$ for $\varepsilon_0 <
\frac{\pi}{6}$ specified in \Cref{lemma:elementaryprop} and
\Cref{rem: eps}.  The \emph{two-body potential} $v_2 \colon
[0,+\infty) \to {\mathbb{R}}  \cup \lbrace + \infty \rbrace $  is
asked to satisfy  
\begin{itemize}
\item[($\rm i_2$)] $\min_{r \geq 0} v_2(r) = v_2(1)=-1$ and $v_2(r) >-1$ if $r \neq 1$;
\item[($\rm ii_2$)] There exists $1<r_0<\sqrt{2}$ such that $v_2(r)=0$ for all $r \geq r_0$;
\item[($\rm iii_2$)] For all $r \in [0,1-\varepsilon] $  it holds that   $v_2(r) > \varepsilon^{-1}$.
\end{itemize}
The \emph{three-body potential} $v_3 \colon [0,2\pi] \to \mathbb{R}$
 is asked to satisfy 
\begin{itemize}
\item[($\rm i_3$)] $v_3(\theta)=v_3(2\pi-\theta)$ for all $\theta \in [0,2\pi]$;
\item[($\rm ii_3$)] $v_3(k\pi/2)=0$ for $k=1,2,3$ and $v_3(\theta) >0$ if $\theta \notin \{\pi/2,\pi,3\pi/2\}$;
\item[($\rm iii_3$)] $v_3(\theta)  \geq   4(\pi/10 -\varepsilon)^{-1} \, |\theta -\pi |$ for all $\theta \in [\pi- \eps,\pi + \eps]$     with equality only if $\theta = \pi$;  
\item[($\rm iv_3$)] If $\theta \notin [\pi/2-\varepsilon,\pi/2+\varepsilon] \cup [\pi-\varepsilon,\pi+\varepsilon] \cup [3\pi/2-\varepsilon,3\pi/2+\varepsilon]$, then 
\begin{align*}
 v_3(\theta) >  \max\{1, 2  \beta\}  \frac{4}{(1-\varepsilon)^2}\Big(\sqrt{2} +\frac{1}{2}\Big)^2\,.
\end{align*} 
\end{itemize}
We briefly comment on the assumptions. Condition ($\rm i_2$) on a
unique minimum (here normalized to $1$)   is natural, e.g., it is
valid for Lennard-Jones-type potentials.  Assumption ($\rm ii_2$)
states that $v_2$ has compact support. In particular, it ensures that
for configurations $ C_N  \subset \mathbb{Z}^2$ only  particles  at distance $1$ interact, which are  usually  referred to as \emph{nearest neighbors} in the literature. Eventually, ($\rm iii_2$) prevents clustering of points  in the sense that pairs of  particles  have distance at least $1-\eps$, see    \Cref{lemma:elementaryprop} below for details. 

  Condition ($\rm i_3$) ensures that the potential $v_3$ does not depend on how (clockwise or counter-clockwise) bond angles are measured, and ($\rm ii_3$) guarantees  that for $ C_N \subset \mathbb{Z}^2$ there is no contribution stemming from the three-body interaction. Slope conditions similar to   ($\rm iii_3$) have been used in~\cite{FriedrichKreutzHexagonal,FriedrichKreutzSquare,FriedrichKreutz, Mainini-Piovano, Mainini} in order to obtain crystallization on the square or hexagonal lattice. As a consequence,  the potential is necessarily nonsmooth  at $\pi$.  Let us mention that in the  other works (except for~\cite{FriedrichKreutz}) the condition is needed at  \emph{all} minimum points of $v_3$, whereas here it is only required at $\pi$.  
  
  We also point out that in this work the focus lies on devising a general proof strategy for crystallization relative to a substrate,  and thus all appearing  specific numerical constants   are chosen for computational simplicity rather than optimality.  The shape of the two potentials $v_2$ and $v_3$ is illustrated in Figure~\ref{fig:potentials}.

\begin{figure}[htp]
\begin{tikzpicture}
\tikzset{>={Latex[width=1mm,length=1mm]}};

\draw[dash pattern=on 1.5pt off 1pt,ultra thin](1,-.04)--(1,-.475);

\draw[->](0,-1)--++(0,4) node[anchor= east] {$v_{2}(r)$};
\draw[->](-1,0)--++(4.5,0) node[anchor =north] {$r$};

\draw(2.6,-.04)--++(0,.08) node[anchor=south]{$r_0$}; 

\draw(1,-.04)--++(0,.08);

\draw(1,0) node[anchor=south]{$1$}; 

\draw[dash pattern=on 1.5pt off 1pt,ultra thin](0,-.475)--(1,-.475);

\draw(0,-.475) node[anchor=east]{$-1$};

\draw[thick,domain=.725:2.5, smooth, variable=\x] plot ({\x}, {.025+.5/(\x*\x*\x*\x*\x*\x*\x*\x)-1/(\x*\x*\x*\x)});
\draw[thick](2.49,0)--++(0:.8);

\begin{scope}[shift={(7,0)}]

\draw[dash pattern=on 1.5pt off 1pt,ultra thin](2,0)--++(110:.5);

\draw[dash pattern=on 1.5pt off 1pt,ultra thin](2,0)--++(110:-.25);
\draw[->](0,-1)--++(0,4) node[anchor= east] {$v_{3}(\theta)$};
\draw[->](-1,0)--++(6,0) node[anchor =north] {$\theta$};

\draw(1,-.04)--++(0,.08); 
\draw(3,-.04)--++(0,.08); 

\draw(1,0) node[anchor=north]{$\pi/2$};

\draw(3,0) node[anchor=north]{$3\pi/2$};

\draw(4,-.04)--++(0,.08); 
\draw(2,-.04)--++(0,.08); 
\draw(4,0) node[anchor=north]{$2\pi$};
\draw(2,-.06) node[anchor=north]{$\pi$};

\draw[thick,domain=2.93333:3.04196, smooth, variable=\x] plot ({\x}, {70*(\x-3)*(\x-3)});

\draw[thick,domain=3.04196:4, smooth, variable=\x] plot ({\x}, {1.5*(-.21 - (\x-2.9)*(\x-5.1))});

\draw[thick,domain=0:0.958042, smooth, variable=\x] plot ({\x}, {1.5*(-.21 - (\x-1.1)*(\x+1.1))});

\draw[thick,domain=2:2.93333, smooth, variable=\x] plot ({\x}, {5*(-.11 - (\x-1.9)*(\x-3.1))});

\draw[thick,domain=0.958042:1.06667, smooth, variable=\x] plot ({\x}, {70*(\x-1)*(\x-1)});

\draw[thick,domain=1.06667:2, smooth, variable=\x] plot ({\x}, {5*(-.11 - (\x-.9)*(\x-2.1))});
\end{scope}
\end{tikzpicture}
\caption{The potentials $v_2$ and $v_3$.}
\label{fig:potentials}
\end{figure}
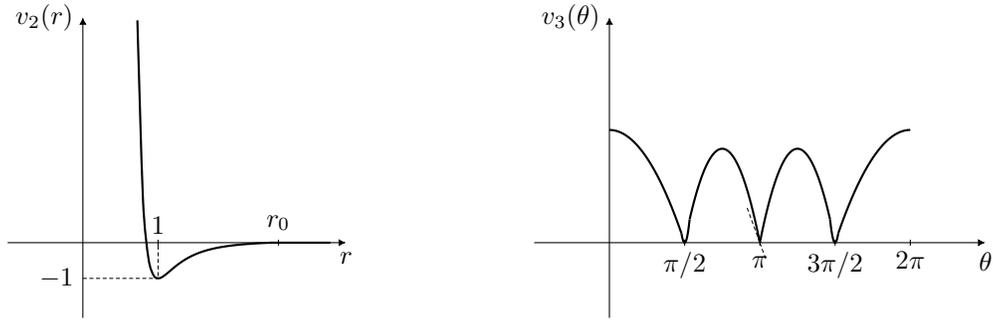

 Given $\beta >0 $ and $N \in \mathbb{N}$, it will be convenient to
 consider the \emph{normalized energy}  $ 2  (\mathcal{F}_\beta(C_N)
 +2N)$ which essentially counts the number of missing bonds at the
  free  surface of the configuration and the number of bonds
 between configuration and substrate  (plus the angle part, if
 $C_N \not \subset \Z^2$).  In this regard, we further define the \emph{normalized  minimal energy} 
\begin{align}\label{def:mbetan}
m_\beta(N) = \min_{h \in \{1,\ldots,N\}} \left(2h +  2(1-\beta) \left\lceil
\frac{N}{h} \right\rceil \right)
\end{align}
 and the largest \emph{optimal height} by 
\begin{align}\label{hstern}
 h_*(\beta,N) =\max\left\{h  \in \{1,\ldots,N\}  \colon 2h + 2(1-\beta) \left\lceil
\frac{N}{h} \right\rceil =m_\beta(N) \right\}\,.
\end{align}
We now  state the main results of the paper.   

\begin{theorem}[Crystallization]\label{thm:crystallization}
For all $\beta >0  $ there exists $N_\beta \in \mathbb{N}$ such that for each $C_N \in (\R^2)^N$ with $N \geq N_\beta$  it holds that 
 \begin{align}\label{eq: main claim0}
 \mathcal{F}_\beta(C_N) \geq  -2N + \frac{1}{2}   m_\beta(N)\,.
\end{align}
In case of equality, we have  $C_N \subset \mathbb{Z}^2 \cap
\{x_2>0\}$  and  $\#\left(C_N \cap \{x_2=1\}\right)\geq N/h_*(\beta,N) $.  
\end{theorem}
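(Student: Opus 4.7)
The plan is to follow the stratification strategy announced in Section~\ref{sec:stratification}. Setting $\widetilde{\mathcal{F}}_\beta(C_N) := 2\mathcal{F}_\beta(C_N) + 4N$, the claim \eqref{eq: main claim0} is equivalent to $\widetilde{\mathcal{F}}_\beta(C_N) \geq m_\beta(N)$. Heuristically, $\widetilde{\mathcal{F}}_\beta$ counts twice the number of ``missing bonds'' with respect to a fully coordinated square-lattice interior, reduced by $2\beta$ per particle--substrate bond, plus nonnegative contributions from bond-length misfits and from three-body angles that are not multiples of $\pi/2$. A preliminary step uses $(\mathrm{iii}_2)$, $(\mathrm{iii}_3)$, $(\mathrm{iv}_3)$ together with a comparison against an explicit competitor---for instance a suitably sized rectangle of $\mathbb{Z}^2$-particles sitting on the substrate---to establish a \emph{regularity} statement: every configuration $C_N$ with $\widetilde{\mathcal{F}}_\beta(C_N) \leq m_\beta(N)$ has all bonds of length in $[1-\varepsilon, r_0]$ and all three-body bond angles within $\varepsilon$ of some multiple of $\pi/2$. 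The bond graph is then a perturbation of a subgraph of a rotated square lattice, and the substrate interaction pins the rotation to the axes of $\mathcal{L}^-$; in particular each particle has at most four bonds.

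Next I would define the \emph{strata} of a regular configuration as the maximal bond-paths whose consecutive bonds meet at angles in $[\pi-\varepsilon, \pi+\varepsilon]$. By regularity the strata split into two classes along the horizontal and vertical axes, say $H$ horizontal and $V$ vertical strata, and each particle belongs to exactly one stratum of each class. A standard counting yields that the number of bonds equals $2N - (H+V) - \Delta$ with a nonnegative correction $\Delta$ coming from closed cycles in the bond graph. Substrate bonds are in bijection with the \emph{interacting} vertical strata, those whose bottom particle sits at $x_2 = 1$; denote their number by $V^{\mathrm{int}} \leq V$. Plugging in and collecting the nonnegative three-body and misfit penalties, one obtains the clean lower bound
\begin{equation*}
\widetilde{\mathcal{F}}_\beta(C_N) \;\geq\; 2H + 2V - 2\beta V^{\mathrm{int}}.
\end{equation*}

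The crucial topological input---and the main obstacle---is the combinatorial counting $N \leq HV$. Following the paradigm of \cite{FriedrichKreutz}, I would prove that two strata of the same class are pairwise disjoint while each horizontal stratum meets each vertical stratum in at most one particle, so that the map sending a particle to the pair of strata through it injects $C_N$ into $\{H\text{-strata}\} \times \{V\text{-strata}\}$. The presence of the substrate makes this analysis substantially more delicate than in \cite{FriedrichKreutz}: the classification into interacting vs.\ noninteracting strata has to be compatible with the crossing structure, and one must exclude configurations in which a noninteracting stratum wraps beneath an interacting one. I expect this is where the dichotomy of Section~\ref{sec:stratification}---``interacting strata cannot intersect, noninteracting strata cannot intersect, and each interacting stratum crosses every noninteracting one''---enters decisively, presumably through a discrete Jordan-curve argument combined with a careful inspection of the strata meeting the substrate line. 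Granted $N \leq HV$, the bound $\widetilde{\mathcal{F}}_\beta(C_N) \geq m_\beta(N)$ follows by a short case split in $\beta$: for $\beta \leq 1$ one uses $V^{\mathrm{int}} \leq V$ and $V \geq \lceil N/H \rceil$ to reach $\widetilde{\mathcal{F}}_\beta \geq 2H + 2(1-\beta)\lceil N/H \rceil \geq m_\beta(N)$; for $\beta > 1$ one uses $V^{\mathrm{int}} \leq N$ and $H \geq 1$ to reach $\widetilde{\mathcal{F}}_\beta \geq 2 + 2(1-\beta)N = m_\beta(N)$, the optimum in the wetting regime being realised by $h=1$.

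For the equality case in \eqref{eq: main claim0} every inequality in the chain must be saturated: all angle and bond-length penalties vanish, the cycle defect $\Delta$ is zero, and $V^{\mathrm{int}} = V$. Combined with the axis alignment already forced by the regularity step, this places $C_N$ inside $\mathbb{Z}^2 \cap \{x_2 > 0\}$ and makes every vertical stratum rest on the substrate. Finally, saturation in the minimization \eqref{def:mbetan} forces $H$ to be an optimal height, and then the maximality in \eqref{hstern} gives $H = h_*(\beta,N)$, so that the bottom row contains $V = \lceil N/h_*(\beta,N) \rceil \geq N/h_*(\beta,N)$ particles, as claimed.
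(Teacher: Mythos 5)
Your overall framing (regularity, strata, reduction to a counting estimate, case split in $\beta$, saturation analysis for the equality case) matches the paper's strategy, but the step you yourself flag as ``the crucial topological input'' is precisely where the proposal has a genuine gap, and it is not the route the paper takes. Your chain needs the global product bound $N\le HV$, obtained from a \emph{bipartite} classification of all strata into two classes such that strata of the same class are pairwise disjoint, crossing strata lie in different classes, and all interacting strata sit in one class. None of this is established: with the available quantitative information (angle excess $<\tfrac{\pi}{10}$ per stratum, crossings within $\varepsilon$ of $\tfrac{\pi}{2}$ or $\tfrac{3\pi}{2}$), one can rule out odd cycles of length $3$ and $5$ in the strata intersection graph (this is exactly the kind of closed-polygon bookkeeping behind \Cref{lemma:excess}(vi)--(vii)), but longer odd cycles are not excluded by these estimates alone, so bipartiteness --- and hence the injection of particles into $\{H\text{-strata}\}\times\{V\text{-strata}\}$ --- does not follow. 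The paper deliberately avoids any such global ``grid'' structure: it splits according to whether some stratum $s$ satisfies $\mathrm{span}(s)=V$. In the spanning case, $N\le l(s)\,l(\hat s)$ for the longest orthogonal stratum $\hat s$ and \Cref{lem:orthogonalstrata}(i) already gives $m_\beta(N)$. In the non-spanning case there is no counting of the form $N\le HV$ at all; instead a weighted Jensen inequality over all strata (weights $\tfrac12$ and $\tfrac1{2(1-\beta)}$) produces a single particle $x_0$ whose two strata are long enough, and the strengthened estimate \Cref{lem:orthogonalstrata}(ii), carrying the extra term $(4-2\beta)$, leads to a quadratic inequality whose solution strictly exceeds $m_\beta(N)$ for $N\ge N_\beta$; this is both how that case is excluded for minimizers and where the threshold $N_\beta$ comes from, a mechanism entirely absent from your sketch. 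The case $\beta\ge1$ is handled as you indicate (\Cref{lem:orthogonalstrata2}), but for $\beta\in(0,1)$ your argument is incomplete unless you supply a proof of the bipartite crossing structure, which is substantially harder than invoking ``a discrete Jordan-curve argument.''

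Two smaller points. First, the bijection between substrate bonds and interacting strata requires the ``no double touching'' property (\Cref{lemma:excess}(v)) and the induction of \Cref{lemma:opengraph}(ii); without it a stratum touching the substrate at both ends would spoil the count $-2\beta V^{\mathrm{int}}$. Second, in the equality case your assertion that maximality in \eqref{hstern} forces $H=h_*(\beta,N)$ is not correct: saturation only forces the height $h$ of the minimizer to be \emph{some} optimal height, possibly smaller than $h_*(\beta,N)$; the claimed bound on the bottom row then follows from $\#\big(C_N\cap\{x_2=1\}\big)\ge\lceil N/h\rceil\ge N/h_*(\beta,N)$ since $h\le h_*(\beta,N)$, and the paper additionally argues (via connectedness, repositioning onto the substrate, and the convexity properties of \Cref{rem:convexity etc.}) that $C_N\subset\Z^2\cap\{x_2>0\}$ rather than merely a rotated copy of it.
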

A configuration of minimal energy is depicted in Figure~\ref{fig:groundstates}.

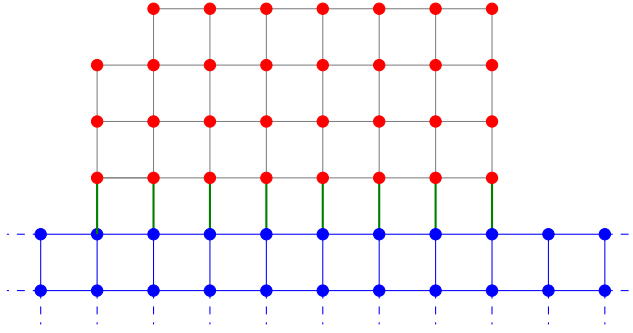
\begin{figure}[htp]
\begin{tikzpicture}[scale=0.75]

\foreach \j in {1,2,3}{
\draw[gray](1,\j)--++(0,-1);
\draw[gray](1,\j)--++(1,0);
}

\foreach \i in {2,...,8}{
\draw[gray](\i,1)--++(0,3);
}

\foreach \i in {2,...,8}{
\draw[thick, green!50!black](\i,1)--++(0,-1);
}

\draw[blue](0,0)--++(10,0);
\draw[blue](0,-1)--++(10,0);
\draw[blue,dashed](0,0)--++(-.6,0);
\draw[blue,dashed](0,-1)--++(-.6,0);
\draw[blue,dashed](10,0)--++(.6,0);
\draw[,blue,dashed](10,-1)--++(.6,0);

\foreach \i in {0,...,10}{
\draw[blue](\i,0)--++(0,-1);
\draw[blue,dashed](\i,-1)--++(0,-.6);

}

\foreach \j in {0,-1}{
\foreach \i in {0,...,10}{
\draw[fill=blue,blue](\i,\j) circle(.1);

}
}

\draw[gray](2,1)--++(-1,0)++(0,-1);
\draw[thick, green!50!black](2,1)++(-1,0)--++(0,-1);

\foreach \j in {1,...,4}{
\draw[gray](2,\j)--++(6,0);
}

\draw[fill=red,red](1,1) circle(.1);
\draw[fill=red,red](1,2) circle(.1);
\draw[fill=red,red](1,3) circle(.1);

\foreach \j in {1,...,4}{
\foreach \i in {2,...,8}{
\draw[fill=red,red](\i,\j) circle(.1);

}
}

\end{tikzpicture}
\caption{A configuration of minimal energy for  $N=31$ and $\beta=3/7$. The  bold  green bonds   indicate the interactions between the substrate and the crystal and the gray bonds indicate the interaction between crystal  particles.  }
\label{fig:groundstates}
\end{figure}

\begin{remark}[Finer characterization of minimizers]\label{rem:convexity etc.} \rm
Given $N \geq N_\beta$, minimizers  $C_N$  of $\mathcal{F}_\beta$ satisfy some more geometric properties:
\begin{itemize}
\item[{\rm(i)}] $C_N$ is {\it convex by rows and columns}, i.e., if for some $k \in \mathbb{N}$ and $j \in \{1,2\}$ we have that  $x, x+ke_j \in C_N$,  then $x+m e_j \in C_N$ for all $m=0,\ldots,k$.
\item[{\rm (ii)}] Every {\it column} of $C_N$ is either empty or of the form $C_N \cap (\{k\} \times \mathbb{Z}) = \{1,\ldots,\alpha_k\}$ for some $\alpha_k \in \mathbb{N}$.
\item[{\rm (iii)}] Setting $l= \#\{k\colon (\{k\} \times \mathbb{Z})
  \cap C_N   \neq \emptyset\}$ and $h =\#\{k\colon (\mathbb{Z}\times
  \{k\})  \cap C_N   \neq \emptyset\}$ we have that 
\begin{align*}
\mathcal{F}_\beta(C_N) =  - 2N + \frac{1}{2} \big(  2(1-\beta)  l +2h
  \big)\,.   
\end{align*}
\end{itemize}
\end{remark}

\begin{remark}[$\beta \geq 1$: Wetting regime]\label{rem:wetting} \rm
If $\beta \ge 1$, it is elementary to check that $ h_*(\beta,N) =
1$. Thus,  \Cref{thm:crystallization} and \Cref{rem:convexity etc.}  show that in this case  minimizers are  horizontal   integer shifts of the single chain of  particles   $\lbrace (i,1) \colon i = 1,\ldots,N \rbrace$, and the minimal energy is given by  $ -(N-1)  - \beta N$.
\end{remark}

In order to state the next theorem, we introduce a \emph{reference rectangle}
\begin{align*}
R(\beta,N) =  \left\{1,\ldots,\left\lceil \frac{N}{h_*(\beta,N)}\right\rceil\right\} \times \{1,\ldots, h_*(\beta,N)\} \,.
\end{align*}

\begin{theorem}[Fluctuations]\label{thm:fluctuation}
 Let $\beta \in  (0,1)  $ and $C_N  \subset \Z^2 \cap \{x_2>0\}$
   be a minimizer of $\mathcal{F}_\beta$. Then the following upper bounds  hold true:
\begin{itemize}
\item[{\rm ($U_1$)}] If $\beta \in \mathbb{Q}$, then there exists $C_\beta>0$ such that (up to a horizontal translation  in $\mathbb{Z} \times \{0\}$) 
\begin{align*}
\#\left(C_N   \triangle  R(\beta,N)  \right) \leq C_\beta
  N^{3/4}\,.
\end{align*}
\item[{\rm ($U_2$)}] If $\beta \in \mathbb{R}\setminus \mathbb{Q}$ is algebraic, then for every $ \delta >0$ there exists $C_{\beta, \delta }>0$ such that  (up to a horizontal translation in $\mathbb{Z} \times \{0\}$) 
\begin{align*}
\#\left(C_N  \triangle   R(\beta,N)  \right) \leq C_{\beta, \delta } N^{1/3+ \delta }\,.
\end{align*}
\end{itemize}
Moreover,  the following lower  bounds hold true:
\begin{itemize}
\item[{\rm ($L_1$)}] If $\beta \in \mathbb{Q}$,  there exists
  $c_\beta>0$ and a sequence $\{ N_k \}_k \subset \mathbb{N}$ with
  $N_k \to \infty$ as $k \to \infty$ such that  for all $k \in
  \mathbb{N}$ there exists  a minimizer of $\mathcal{F}_\beta$,
  denoted by $C_{N_k}$, 
  satisfying    
\begin{align*}
\#\left(C_{N_k}  \triangle \left(R(\beta,N_k) +\tau\right)\right) \geq c_\beta N_k^{3/4}\,
\end{align*}
 for all horizontal translations $\tau \in \mathbb{Z} \times \{0\}$. 
\item[{\rm ($L_2$)}] If $\beta \in \mathbb{R}\setminus \mathbb{Q}$ is
  algebraic, then for all $ \delta >0$ there exists a constant
  $c_{\beta, \delta }>0$ and a sequence $\{  N_k \}_k \subset
  \mathbb{N}$ with $N_k \to \infty$ as $k \to \infty$ such that  for
  all $k \in \mathbb{N}$ there exists  a minimizer of
  $\mathcal{F}_\beta$, denoted by $C_{N_k}$, 
  satisfying    
\begin{align*}
\#\left(C_{N_k}  \triangle \left(R(\beta,N_k) +\tau\right)\right) \geq c_{\beta, \delta } N_k^{1/3- \delta }\,
\end{align*}
 for all horizontal translations $\tau \in \mathbb{Z} \times \{0\}$.
\end{itemize}
\end{theorem}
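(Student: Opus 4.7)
My plan would be to combine the structural rigidity provided by \Cref{thm:crystallization} and \Cref{rem:convexity etc.} with a Diophantine analysis of the discrete rectangle-optimization defining $m_\beta(N)$. By those results, any minimizer $C_N$ is convex by rows and columns and sits inside its $l\times h$ bounding box with the only gaps located in an incomplete top row, and its energy equals $-2N+(1-\beta)l+h$. Hence $(l,h)$ minimizes $\gamma l+h$ over integer pairs with $lh\geq N$, where $\gamma:=1-\beta\in(0,1)$. Let $\Lambda_N$ denote the set of such optimal pairs and let $(l^*,h^*)\in\Lambda_N$ be the one with the largest $h$, so that $R(\beta,N)$ is the $l^*\times h^*$ rectangle. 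Writing $w:=lh-N<h$, an optimal horizontal shift yields the bound
\[
\#(C_N\triangle R(\beta,N))\leq |l-l^*|\,h^* + |h-h^*|\,l^* + (l^*h^*-N).
\]

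A second-order expansion of $\gamma l+h$ near the continuous minimizer $(l_0,h_0):=(\sqrt{N/\gamma},\sqrt{\gamma N})$ shows that $|l-l_0|,|h-h_0|\leq C_\beta N^{1/4}$ for every $(l,h)\in\Lambda_N$, and any two optimal pairs satisfy $\gamma(l_1-l_2)=h_2-h_1\in\Z$. For $\beta=p/q$ in lowest terms this forces $q\mid(l_1-l_2)$, so $\Lambda_N$ is an arithmetic progression of length $O_\beta(N^{1/4})$; for $\beta$ irrational it forces $(l_1,h_1)=(l_2,h_2)$, hence $\Lambda_N=\{(l^*,h^*)\}$. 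In the rational case this immediately yields $(U_1)$: the bounding-box mismatch contributes $O(N^{1/4}\sqrt{N})=O(N^{3/4})$ and dominates the $O(\sqrt{N})$ waste. For $\beta$ irrational the first two summands vanish, and $(U_2)$ reduces to proving $l^*h^*-N\leq C_{\beta,\delta}N^{1/3+\delta}$.

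For that estimate, set $f(h):=\gamma\lceil N/h\rceil+h$ and observe that for $h\nmid N$
\[
f(h)-2\sqrt{\gamma N}\approx \frac{(h-h_0)^2}{h_0}+\gamma\bigl(1-\{N/h\}\bigr).
\]
The plan is to exhibit a test height $h$ with $|h-h_0|\leq M$ and $1-\{N/h\}$ small. Since $N/h-N/(h+1)=1/\gamma+O(1/h_0)$, the sequence $\{N/h\}$ over a window of length $M$ around $h_0$ behaves, up to a quadratic drift of order $M^2/h_0$, as the orbit of a rotation by angle $1/\gamma$. For algebraic $\beta$ (whence algebraic $1/\gamma$), Roth's theorem and the Erd\H{o}s--Tur\'an--Koksma inequality bound the discrepancy of this orbit by $O_{\beta,\delta}(M^{-1+\delta})$, so one can find $h$ with $1-\{N/h\}\leq C_{\beta,\delta}M^{-1+\delta}$. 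Balancing both error contributions at $M\sim N^{1/6+\delta}$ gives $f(h)-2\sqrt{\gamma N}\leq C_{\beta,\delta}N^{-1/6+\delta}$, and combining $f(h^*)\leq f(h)$ with the lower bound $f(h^*)\geq 2\sqrt{\gamma N}+\gamma w/h^*$ yields $w\leq C_{\beta,\delta}N^{1/3+\delta}$, establishing $(U_2)$.

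The matching lower bounds proceed by explicit constructions. For $(L_1)$, using $\beta=p/q$ one selects $N_k$ for which $\Lambda_{N_k}$ fills the entire $N_k^{1/4}$-window; the two minimizers associated to its extreme elements have bounding rectangles differing by at least $c_\beta N_k^{3/4}$ particles under every horizontal shift. For $(L_2)$ one invokes the near-sharpness of Roth's theorem -- infinitely many denominators $q$ satisfy $|q\gamma-p|\gtrsim q^{-1-\delta}$ for all $p$ -- together with a three-distance-type argument, to find $N_k$ such that no integer $h$ close to $h_0$ achieves $1-\{N_k/h\}\leq c_{\beta,\delta}N_k^{-1/3-\delta}$; the resulting forced waste $w(N_k)\geq c_{\beta,\delta}N_k^{1/3-\delta}$ survives every horizontal shift of $R(\beta,N_k)$. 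I expect the main obstacles to be (i) controlling the quadratic drift of $h\mapsto\{N/h\}$ at the scale $M\sim N^{1/6+\delta}$, so that the rotation approximation remains accurate enough to feed into the discrepancy estimate, and (ii) upgrading Roth's lower bound on rational approximation into matching waste lower bounds that hold simultaneously for infinitely many $N_k$ and every horizontal alignment of $R(\beta,N_k)$.
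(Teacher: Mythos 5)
Your reductions and the upper bounds are essentially on track. For ($U_1$) your argument is the paper's: the second-order expansion giving $|h-h_*|,|l-l_*|\le C_\beta N^{1/4}$ is exactly \Cref{lem:quadratic}, and the $N^{3/4}$ bound then follows as in the paper. For ($U_2$), the reduction to bounding the waste $w=l^*h^*-N$ is correct (uniqueness of the optimal pair for irrational $\beta$), but note that your route differs from the paper's: the paper runs the Roth/discrepancy argument only for exact rectangles $N=h_0l$ (\Cref{lemma: counting}) and transfers to general $N$ through the spacing of optimal rectangles (\Cref{lemma: counting2}) plus the observation that the minimal energy can only increase at particle numbers admitting an optimal rectangle, whereas you attack $\{N/h\}$ directly for every $N$. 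Your flagged obstacle (i) is real: with $M\sim N^{1/6+\delta}$ the quadratic drift $M^2/h_0\sim N^{-1/6+2\delta}$ is \emph{larger} than the target interval length $\sim M^{-1+\delta}$, so the naive ``rotation orbit plus discrepancy'' step fails as stated; it can be repaired by localizing to sub-windows on which the quadratic term is nearly constant (losing only factors $N^{c\delta}$), so I regard ($U_2$) as a viable alternative modulo this bookkeeping.

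The genuine gaps are in the lower bounds, which is where the substance of the theorem lies. For ($L_1$), ``select $N_k$ for which $\Lambda_{N_k}$ fills the entire $N_k^{1/4}$-window'' is precisely the statement to be proved and you give no mechanism for it; the paper constructs $N_k=k^4pq+(1-\delta)k^2p$ explicitly and verifies, by a delicate ceiling computation of $m_\beta(N_k)$, that two rectangles of genuinely different aspect ratio are simultaneously optimal. For ($L_2$) there are two problems. First, an exponent slip: since the waste equals $h^*\bigl(1-\{N/h^*\}\bigr)$ with $h^*\sim N^{1/2}$, excluding $1-\{N/h\}\le c N^{-1/3-\delta}$ only forces $w\gtrsim N^{1/6-\delta}$; you need the threshold $N^{-1/6-\delta}$. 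Second, and more seriously, no workable route is given to produce such $N_k$: one must guarantee the avoidance simultaneously for all heights $h$ in a window of length $\sim N^{1/6+\delta}$ around $\sqrt{(1-\beta)N}$, which is equivalent to finding a gap of length $\gtrsim N^{1/3-\delta}$ between consecutive particle numbers of optimal rectangles. The paper obtains this by counting and pigeonhole: the non-optimality criterion of \Cref{lemma: counting}(ii) (itself proved with the discrepancy machinery) shows there are at most $\sim N_0^{2/3+\delta}$ optimal rectangles with particle numbers in $[N_0,2N_0]$, hence some gap of length $\ge c N_0^{1/3-\delta}$ exists, and any $N$ placed in such a gap forces the waste for \emph{every} minimizer. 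Roth's theorem plus a three-distance-type argument does not obviously substitute for this counting step, since the set of ``bad'' heights depends on $N$ and must be avoided over the whole window at once. Without this ingredient ($L_2$) is not established, and ($L_1$) likewise remains an assertion rather than a proof.
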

 The result provides a fluctuation law of $N^{3/4}$ in the case of rational $\beta$ and a law of $N^{1/3}$ for irrational, algebraic $\beta$. The case of a transcendental interaction parameter is not addressed here. Some configurations satisfying the optimal fluctuation estimates are depicted in Figure~\ref{Fig:fluctuations}. 

\begin{figure}[htp]
\begin{tikzpicture}

\tikzset{>={Latex[width=1mm,length=1mm]}};

\begin{scope}[shift={(7,-4)}]

\draw[ultra thin, gray!10!white, fill=gray!10!white](0,0) rectangle(6,-1);

\draw[thin, gray!50!white, fill=gray!10!white, fill opacity=.5](.5,0)--(5.5,0)--(5.5,1.5)--(5.4,1.5)--(5.4,2)--(.5,2)--(.5,0);

\draw(.5,0)--(5.5,0)--(5.5,2)--(1,2)--(1,1.9)--(.5,1.9)--(.5,0);

\draw[<->](.5,2.2)--++(.5,0);
\draw(.8,2)++(0,.25) node[anchor=south]{$\sim N^{1/3}$};

\end{scope}

\begin{scope}[shift={(0,-4)}]

\draw[ultra thin, gray!10!white, fill=gray!10!white](0,0) rectangle(6,-1);

\draw(.5,0)--(5.5,0)--(5.5,1.5)--(5.4,1.5)--(5.4,2)--(.5,2)--(.5,0);

\draw[<->](5.8,1.5)--++(0,.5);

\draw(5.8,1.5)++(0,.25) node[anchor=west]{$\sim N^{1/3}$};

\end{scope}

\begin{scope}[shift={(0,0)}]

\draw[ultra thin, gray!10!white, fill=gray!10!white](0,0) rectangle(6,-1);

\draw(.5,0)--(5,0)--(5,1.2)--(4.9,1.2)--(4.9,2)--(.5,2)--(.5,0);

\draw[<->](5.3,1.2)--++(0,.8);

\draw(5.3,1.2)++(0,.4) node[anchor=west]{$\sim N^{1/2}$};

\end{scope}

\begin{scope}[shift={(7,0)}]

\draw[ultra thin, gray!10!white, fill=gray!10!white](0,0) rectangle(6,-1);

\draw[ultra thin, gray!50!white, fill=gray!10!white, fill opacity=.5](.5,0)--(5,0)--(5,1.2)--(4.9,1.2)--(4.9,2)--(.5,2)--(.5,0);

\draw(.5,0)--(5.6,0)--(5.6,1.6)--(.5,1.6)--(.5,0);

\draw[<->](5,-.2)--(5.6,-.2);
\draw(5.3,-.2) node[anchor=north]{$\sim N^{1/4}$};

\end{scope}

\end{tikzpicture}
\caption{Schematic depiction of ground states that satisfy the optimal fluctuation estimates. The above two for $\beta \in \mathbb{Q}$ and the bottom two for $\beta \in \mathbb{R}\setminus \mathbb{Q}$ algebraic.}
\label{Fig:fluctuations}
\end{figure}
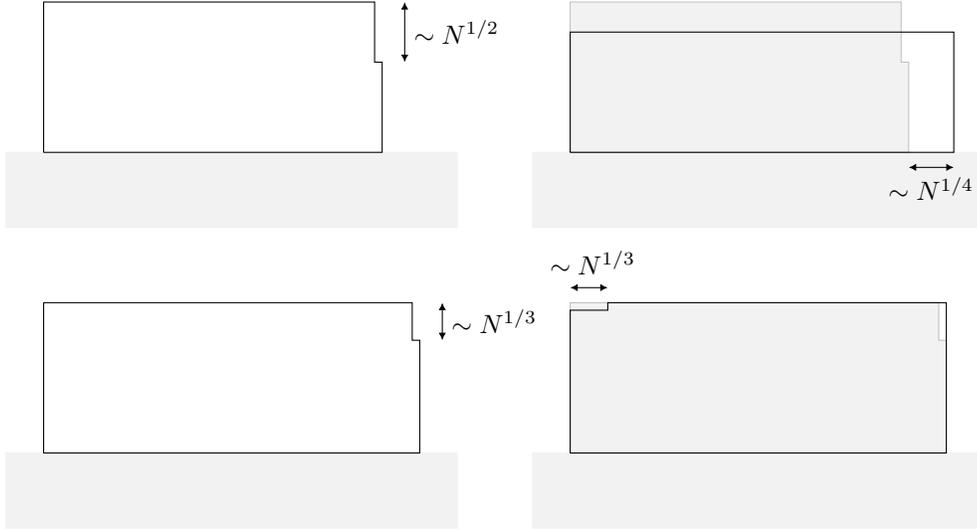

A direct consequence of the sharp fluctuation estimates of
\Cref{thm:fluctuation} is the characterization of the Winterbottom shape
emerging in the large-particle limit $N \to \infty$. Arguing along the
lines of \cite{Yuen} and using also \Cref{lem:quadratic} below, one may prove the
following.

\begin{corollary}[Winterbottom shape]
 Let $\beta \in  (0,1)  $.  The optimal height $ h_*(\beta,N)$ defined in \eqref{hstern} satisfies $N^{-1/2}h_*(\beta,N) \to \sqrt{1-\beta}$ as $N \to \infty$. Letting    $C_N =\{x_1^N,
 \dots, x_N^N\}
 \subset \Z^2 \cap \{x_2>0\}$ be a minimizer of $\mathcal{F}_\beta$
 with $\min\{x_1^j :  j=1,\dots,N\}=1$, and  defining  the sequence of rescaled
 empirical measures
 $$\mu_N := \frac{1}{N}\sum_{j=1}^N  \delta_{x^N_j/\sqrt{N}}\,,$$ 
as $N \to \infty$, the measures $\mu_N$ converge weakly* to the
restriction of the Lebesgue in $\R^2$ to the rectangle
$$R=\left[0,\frac{1}{\sqrt{1-\beta}}\right]\times [0,\sqrt{1-\beta}]\,.$$
\end{corollary}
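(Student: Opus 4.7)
The plan is to combine the asymptotics of the optimal height $h_*(\beta,N)$ with the sharp fluctuation estimate of \Cref{thm:fluctuation}, thereby reducing the weak* convergence of $\mu_N$ to the essentially trivial convergence of the rescaled reference rectangle $R(\beta,N)/\sqrt{N}$ to the uniform measure on $R$.

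First, I would apply \Cref{lem:quadratic} to the integer minimization \eqref{def:mbetan} to show that $h_*(\beta,N)=\sqrt{(1-\beta)N}\,(1+o(1))$, so that $h_*(\beta,N)/\sqrt{N}\to\sqrt{1-\beta}$ and, consequently, $\lceil N/h_*(\beta,N)\rceil/\sqrt{N}\to 1/\sqrt{1-\beta}$. This already settles the first assertion of the corollary, and shows that $R(\beta,N)/\sqrt{N}$ is a grid of mesh $1/\sqrt{N}$ filling a rectangle whose side lengths converge to those of $R$. Since $|R|=1$, the uniform empirical measure $\nu_N:=(\#R(\beta,N))^{-1}\sum_{y\in R(\beta,N)}\delta_{y/\sqrt{N}}$ converges weakly* to $\mathcal{L}^2\mres R$ by a standard Riemann-sum argument.

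Next, \Cref{thm:fluctuation} supplies a horizontal translation $\tau_N=(a_N,0)\in\Z\times\{0\}$ with $\#(C_N\triangle(R(\beta,N)+\tau_N))\leq \eta_N$, where $\eta_N=o(N)$ in both the rational and the algebraic irrational regime. The main obstacle will be to rule out that $|a_N|$ grows on the scale $\sqrt{N}$, which would shift the limiting support. I would exploit the normalization $\min\{x_1^j\}=1$ together with \Cref{rem:convexity etc.}, according to which $C_N$ is supported on $\{1,\dots,l\}\times\{1,\dots,h\}$ with $l\cdot h=N+O(\sqrt{N})$ and $(l,h)$ realizing the minimum of $(1-\beta)l+h$; a second application of \Cref{lem:quadratic} gives $|l-\lceil N/h_*(\beta,N)\rceil|,|h-h_*(\beta,N)|=o(\sqrt{N})$. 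Since $C_N$ has a nonempty leftmost column at $x_1=1$, a comparison of column heights shows that a positive $a_N$ forces at least $|a_N|\,h_*(\beta,N)(1-o(1))$ contributions to the symmetric difference coming from the rightmost columns of $R(\beta,N)+\tau_N$ that fall outside the support of $C_N$; combined with $\eta_N=o(N)$, this yields $|a_N|=o(\sqrt{N})$. A symmetric argument handles $a_N<0$.

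Finally, setting $\nu_N':=N^{-1}\sum_{y\in R(\beta,N)+\tau_N}\delta_{y/\sqrt{N}}$, the relations $\#R(\beta,N)=N(1+o(1))$ and $|a_N|/\sqrt{N}\to 0$ imply that $\nu_N'\wsto\mathcal{L}^2\mres R$, while
\[
\|\mu_N-\nu_N'\|_{\mathrm{TV}}\leq \frac{\#(C_N\triangle(R(\beta,N)+\tau_N))}{N}+\frac{|\#R(\beta,N)-N|}{N}\leq \frac{\eta_N+h_*(\beta,N)}{N}\longrightarrow 0.
\]
Therefore $\mu_N\wsto\mathcal{L}^2\mres R$, which concludes the proof.
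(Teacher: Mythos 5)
Your mechanics are essentially sound, and the reduction to the convergence of the rescaled reference rectangle via \Cref{lem:quadratic} is exactly the spirit of the argument the paper has in mind (it only sketches the proof, pointing to \cite{Yuen} and \Cref{lem:quadratic}). However, there is one genuine mismatch with the statement: the corollary is asserted for \emph{every} $\beta\in(0,1)$, whereas your symmetric-difference bound $\#(C_N\triangle(R(\beta,N)+\tau_N))\le\eta_N=o(N)$ is imported from \Cref{thm:fluctuation}, whose upper bounds ($U_1$), ($U_2$) are only available for $\beta$ rational or irrational algebraic; the transcendental case is explicitly left open in the paper. As written, your proof therefore does not cover transcendental $\beta$, so it proves a weaker statement than the one claimed.

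The repair is easy and is in fact already contained in your own translation-control step, which makes the detour through \Cref{thm:fluctuation} (and hence the whole issue of the unknown translation $\tau_N$) unnecessary. By \Cref{thm:crystallization} and \Cref{rem:convexity etc.}, a minimizer with the normalization $\min\{x_1^j\}=1$ satisfies $C_N\subset\{1,\dots,l\}\times\{1,\dots,h\}$ with columns of the form $\{1,\dots,\alpha_k\}$, and the energy identity in \Cref{rem:convexity etc.}(iii) combined with \eqref{def:mbetan} forces $2(1-\beta)l+2h=m_\beta(N)$; since $lh\ge N$ gives $l\ge\lceil N/h\rceil$, this means $h$ minimizes $\mathcal{H}_{\beta,N}$ and $l=\lceil N/h\rceil$. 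Consequently $\#\big((\{1,\dots,l\}\times\{1,\dots,h\})\setminus C_N\big)=lh-N< h\le C\sqrt{N}=o(N)$, and \Cref{lem:quadratic} yields $|h-\sqrt{(1-\beta)N}|,\;|l-\sqrt{N/(1-\beta)}|\le C_\beta N^{1/4}$ for every $\beta\in(0,1)$ (this also settles $N^{-1/2}h_*(\beta,N)\to\sqrt{1-\beta}$, since $h_*(\beta,N)$ in \eqref{hstern} is itself a minimizer of $\mathcal{H}_{\beta,N}$). Comparing $\mu_N$ with the uniform measure on the rescaled bounding box, which is pinned at the first column by the normalization, then gives the weak* convergence by your Riemann-sum/total-variation argument, with no translation to control and no algebraicity assumption. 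What \Cref{thm:fluctuation} buys, and what your route would deliver where it applies, is only the sharper \emph{rate} of convergence, not the qualitative limit.
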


\section{Stratification}\label{sec:mainproof}

After a short preliminary on graph theory, this section is devoted to  detail   the main technique of this paper, namely a modification of bond graphs, called \emph{stratification}.

\subsection{Bond graph} \label{subsec:graph theory}   Let  $G=(V,E)$
be a graph, where $V$ indicates the set of \emph{vertices} and $E
\subset \{\{x,y\} \colon x,y \in V \text{ and } x\neq y \}$ is  the
set of \emph{edges}.  For our purposes, we shall always assume
that graphs are embedded in $\R^2$.  
For $x \in V$, we denote the  (graph)  neighborhood with respect to $G$ and the  (graph)  neighborhood with respect to $\mathcal{L}^-$ by
\begin{align*}
\mathcal{N}(x,E) :=\{y \in V \colon \{x,y\} \in E\}\,, \quad \mathcal{N}_{\mathcal{L}^-}(x) :=\{y \in \mathcal{L}^- \colon |x-y|\leq r_0\}\,,
\end{align*}
  for $r_0>0$ as given in {\rm ($\rm{ii}_2$)}.  For all such graphs $G=(V,E)$ we define
\begin{align} 
\label{def:F}
F_\beta(G) = F_{\mathrm{bond},\beta}(G) + F_{\mathrm{ex},\beta}(G)\,,
\end{align}
where
\begin{align}
\label{def:Fbond}
F_{\mathrm{bond},\beta}(G) = \sum_{x \in V} \big(4-\#\mathcal{N}(x,E)- 2 \beta\#\mathcal{N}_{\mathcal{L}^-}(x) \big)\,
\end{align}
 is the \textit{bond energy} and  
\begin{align*}
 F_{\mathrm{ex},\beta}(G)  =  \underset{x,y\in V}{\sum_{\{x,y\} \in E}} (v_2(|x-y|)+1) +  2  \beta  \sum_{x\in V,y\in \mathcal{N}_{\mathcal{L}^-}(x) } (v_2(|x-y|)+1)+ \sum_{ x,y,z} v_3(\theta_{x,y,z})
\end{align*}
is the \textit{excess energy}. For $V'\subset V$,  let 
$G[V']$  be   the (vertex) induced subgraph of $V'$ in $G$, that is $G[V']= (V',E')$ with $E'= \{\{x,y\} \in E\colon x,y \in V'\}$. 
 We will identify each configuration $C_N \subset \mathbb{R}^2$ with its \emph{natural bond graph} $G_{\mathrm{nat}}=(V,E_{\mathrm{nat}})$, where $V=C_N$ and the \emph{natural edges} are given by
\begin{align}\label{eq: relation-new}
  E_\mathrm{nat}   = \{\{x,y\} \colon x,y \in C_N, |x-y|\leq r_0\}\,.
\end{align}
 This definition is motivated by the  relation to the energy $\mathcal{F}_\beta$ defined in  \eqref{eq: main energy}, namely 
\begin{align}\label{eq: relation}
2\mathcal{F}_\beta(C_N) =  -4N  + F_\beta(G_{\mathrm{nat}})\,.
\end{align} 
  In Section~\ref{subsec:stratbondgraph} below,  we will successively modify $E_{\mathrm{nat}}$ to a smaller set of edges $E\subset E_\mathrm{nat}$, according to a set of given rules.

\begin{definition}\label{def:epsregular} We say that $G=(V,E)$ is \emph{$\varepsilon$-regular} for $\varepsilon >0$ small if  the following two  conditions hold:
\begin{itemize}
\item[(i)] For $x,y \in V  \cup\mathcal{L}^-$ with $x\neq y$  it holds that
\begin{align*}
 |x-y| \ge 1-\varepsilon\,;
\end{align*}
\item[(ii)] For each bond angle $\theta=\theta_{x,y,z}$ with $x,y,z \in  V  \cup \mathcal{L}^-$ it holds that 
\begin{align*}
\theta \in [\pi/2-\varepsilon,\pi/2+\varepsilon] \cup [\pi-\varepsilon,\pi+\varepsilon] \cup [3\pi/2-\varepsilon,3\pi/2+\varepsilon]\,.
\end{align*}
\end{itemize}
\end{definition}
Note that, if $G_{\mathrm{nat}}=(V,E_{\mathrm{nat}})$ is $\varepsilon$-regular, then  $G=(V,E)$ is $\varepsilon$-regular for all $E\subset E_{\mathrm{nat}}$.

\begin{lemma}\label{lemma:elementaryprop} There exists $\varepsilon_0 >0$ such that the following holds true:  if $v_2$, $v_3$ satisfy {\rm ($\rm{i}_2$)}--{\rm ($\rm{iii}_2$)} and {\rm ($\rm{i}_3$)}--{\rm ($\rm{iv}_3$)} for some $0<\varepsilon<\varepsilon_0$ and  if $C_N$ is a minimizer of \eqref{eq: main energy}, then its natural bond graph $G_{\mathrm{nat}}=(V,E_{\mathrm{nat}})$ is $\varepsilon$-regular. Moreover, it holds that  $\#\mathcal{N}(x,E_{\mathrm{nat}}) + \#\mathcal{N}_{\mathcal{L}^-}(x) \leq 4$ and $ \#\mathcal{N}_{\mathcal{L}^-}(x) \leq 1$  for all $x \in V$.
\end{lemma}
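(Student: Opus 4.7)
The plan is to compare any minimizer against a simple trial and to rule out non-regular behavior via a localized ``remove one particle and place it far away'' modification. As a preliminary upper bound, the row configuration $C_N^0 := \{(i,1) : i = 1, \ldots, N\}$ has $N-1$ horizontal crystal bonds of length $1$, $N$ vertical bonds to the substrate of length $1$, and bond angles only in $\{\pi/2, \pi, 3\pi/2\}$, so that $\mathcal{F}_\beta(C_N^0) = -(N-1) - \beta N$ and every minimizer satisfies $\mathcal{F}_\beta(C_N) \leq -(1+\beta)N + 1$.

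For the minimum-distance part of $\varepsilon$-regularity, suppose by contradiction that there exists a pair $\{x,y\} \subset C_N \cup \mathcal{L}^-$ with $|x-y| < 1-\varepsilon$. Since $\mathcal{L}^-$ is a unit lattice, at least one endpoint, say $x$, lies in $C_N$. I would compare $C_N$ with $C_N' = (C_N \setminus \{x\}) \cup \{x'\}$ for $x' = (M,1)$ with $M$ so large that $x'$ bonds only to $(M,0) \in \mathcal{L}^-$ with bond length $1$ and has no other interactions. Writing $\gamma_x$ for the total contribution of $x$ to $\mathcal{F}_\beta(C_N)$ and noting that $x'$ contributes $-\beta$, minimality of $C_N$ yields $\gamma_x \leq -\beta$. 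On the other hand, by $(\mathrm{iii}_2)$ the close pair contributes at least $\min\{1,\beta\}\varepsilon^{-1}$ to $\gamma_x$, while the remaining $v_2$- and $v_3$-terms at $x$ are bounded below by a universal constant via an iterative packing bound on the number of crystal and substrate neighbors in $B(x,r_0)$, using $r_0 < \sqrt{2}$ together with the preliminary upper bound on the energy. Choosing $\varepsilon_0$ small enough then forces $\gamma_x > -\beta$, the desired contradiction.

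For the angle part, suppose some angle $\theta_{x,y,z}$ falls outside $[\pi/2-\varepsilon, \pi/2+\varepsilon] \cup [\pi-\varepsilon, \pi+\varepsilon] \cup [3\pi/2-\varepsilon, 3\pi/2+\varepsilon]$. By $(\mathrm{iv}_3)$, the single contribution $v_3(\theta_{x,y,z})$ strictly exceeds the worst-case total magnitude of the $v_2$-energy accumulated at any one vertex, the latter being controlled by the distance property just established together with $(\mathrm{i}_2)$--$(\mathrm{ii}_2)$. Removing the central vertex of the offending triple (or the associated $C_N$-endpoint, if the center is a substrate point) and relocating it far from the rest of the configuration strictly decreases the energy, again contradicting minimality.

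Given $\varepsilon$-regularity, both counting bounds are purely geometric. Any family of vectors of length in $[1-\varepsilon, r_0)$ issuing from a common point, with pairwise angles within $\varepsilon$ of multiples of $\pi/2$, has cardinality at most $4$, giving $\#\mathcal{N}(x,E_{\mathrm{nat}}) + \#\mathcal{N}_{\mathcal{L}^-}(x) \leq 4$. Two distinct $z_1, z_2 \in \mathcal{N}_{\mathcal{L}^-}(x)$ would satisfy $|z_1-z_2| \in \{1, \sqrt{2}\}$ (as both lie in $B(x,r_0)$ with $r_0 < \sqrt{2}$), and a short trigonometric case analysis using $x_2 > 0$ (so $x$ lies strictly above the substrate) and the angle restriction at $x$ excludes each value, proving $\#\mathcal{N}_{\mathcal{L}^-}(x) \leq 1$. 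The main technical obstacle is the uniform packing control in the distance step, where standard planar packing estimates must be refined to allow for clusters of close pairs near the substrate line; this is precisely the point deferred to the Appendix.
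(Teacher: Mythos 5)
Your overall strategy (local variational comparisons: relocate a particle far away and use ($\rm iii_2$), ($\rm iv_3$) against the lost bond energy) is the right one, and your angle step and the final counting claims essentially coincide with the paper's Step~2. The genuine gap is in the distance step. You compare the minimizer with $(C_N\setminus\{x\})\cup\{x'\}$ for an \emph{arbitrary} endpoint $x$ of a close pair and conclude $\gamma_x\le-\beta$; to get a contradiction you then need a universal lower bound on the negative part of $\gamma_x$, i.e.\ a uniform bound on the number of points of $C_N\cup\mathcal{L}^-$ in $B_{r_0}(x)$. But before $\eps$-regularity is established there is no such bound: $x$ may have only the one close neighbor while arbitrarily many further particles sit at distance $\approx 1$ from $x$ (clustered among themselves, so their huge mutual repulsion does not enter $\gamma_x$), and then $\gamma_x\le-\beta$ carries no contradiction. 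Your proposed fix, ``an iterative packing bound \dots using the preliminary upper bound on the energy,'' does not work as stated: the global bound $\mathcal{F}_\beta(C_N)\le-(1+\beta)N+1$ only yields a density bound that grows with $N$ (balancing the $\eps^{-1}M^2$ repulsion of a cluster of size $M$ against the $O(NM)$ attractive terms gives $M\lesssim \eps N$), which is useless here. Moreover you explicitly defer this ``main technical obstacle'' to the Appendix --- but the statement under review \emph{is} the Appendix lemma, so the key point is simply not proved.

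The missing idea is a maximality argument that controls the local density before regularity is known. The paper sets $M=\max_{x}\#(V_{\rm sub}\cap B_{\frac12(1-\eps)}(x))$, removes \emph{all} configuration points in a maximizing ball at once, and uses the maximality of $M$ to bound the population of the surrounding annulus by $KM$ (covering by $K$ balls of radius $\tfrac12(1-\eps)$), hence the lost attractive interactions by $\Lambda K M^2$; since the in-ball repulsion is at least $\tfrac{\lambda}{2\eps}M(M-1)$, minimality forces $M=1$ for $\eps$ small, which is exactly the bond-length bound. (A single-point variant can be rescued along your lines, but only if you choose $x$ to \emph{maximize} the number of $(1-\eps)$-close neighbors and use that maximality, via the same covering, to bound its number of neighbors at distance in $[1-\eps,r_0)$ by $K'(m^*+1)$; without some such maximality/covering step the estimate fails.) Two smaller points: a bad angle may be centered at a substrate point, which you correctly handle by moving a crystal endpoint (the paper is terse here); and in your proof of $\#\mathcal{N}_{\mathcal{L}^-}(x)\le1$ the claim $|z_1-z_2|\in\{1,\sqrt2\}$ is not justified by $r_0<\sqrt2$ alone (one only gets $|z_1-z_2|<2r_0$, so the values $2$ and $\sqrt5$ must also be excluded, e.g.\ using the already established minimal-distance property with respect to intermediate substrate points).
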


The proof of this statement is similar to the one of~\cite[Lemma~3.2]{FriedrichKreutz}. We include it in   Appendix~\ref{appendix} for convenience of the reader.     For the remainder of this paper, we assume that $\varepsilon_0>0$ is chosen small enough such that \Cref{lemma:elementaryprop} holds true and that $v_2$ and $v_3$ satisfy $({\rm i}_2)$--$({\rm iii}_2)$ and $({\rm i}_3)$--$({\rm iv}_3)$  for some $0<\varepsilon<\varepsilon_0$.

\subsection{Stratified bond graph}  \label{subsec:stratbondgraph}

Given $G=(V,E)$, we say that $\gamma=(x_1,\ldots,x_n)$ with $x_i \in V   $   for all $i=1,\ldots,n$ is  a \emph{straight path} if  $n \ge 2$ and the following holds:  
\begin{itemize}
\item[(i)] $\{x_i,x_{i+1}\} \in E$ for all $i=1,\ldots,n-1\,$;
\item[(ii)] $\theta_i \in [\pi-\varepsilon,\pi+\varepsilon]$ for all $i \in 2,\ldots,n-1$, where   $\theta_i =\theta_{x_{i+1},x_i,x_{i-1}}$;   
\item[(iii)] $\{x_i,x_{i+1}\} \neq \{x_{j},x_{j+1}\}$ for all $i,j =1,\ldots,n-1$, $ j \neq i$.
\end{itemize}
 (If $n=2$, (ii) and (iii) are empty.)  The set of straight paths is denoted by
\begin{align*}
\Gamma (G) :=\{\gamma \text{ straight path}\}\,.
\end{align*}
 We drop $G$ and  simply  write $\Gamma$ if no confusion arises.
If $\gamma \in \Gamma$ and $x_1=x_n$, we say that $\gamma$ is \emph{closed} and otherwise that $\gamma$ is \emph{open}.   We define   
\begin{align*}
\begin{split}
&V_i :=\{x \in V \colon \# \mathcal{N}(x,E) = i  \} \text{  for $i=0,\ldots,4$\,,} \\&V_2^\pi :=\{x \in V_2 \colon  \theta_{x_1,x,x_2},   \in [\pi-\varepsilon,\pi+\varepsilon] \text{ where } \mathcal{N}(x,E) = \lbrace x_1,x_2 \rbrace \}\,.
\end{split}
\end{align*}
 Note that in the second definition one could equally use the angle $\theta_{x_2,x,x_1}$ as $\theta_{x_2,x,x_1} = 2\pi - \theta_{x_1,x,x_2}$. We define the \emph{set of strata} by
\begin{align*}
\mathcal{S}(G):=  \mathcal{S}_\Gamma \cup \bigcup_{x \in V_0 \cup V_1 \cup V_2^\pi} s(x), \quad \text{ where } \mathcal{S}_\Gamma:=\{\gamma \in \Gamma \colon \gamma \text{ is a maximal element  w.r.t.}  \subseteq \}\,.
\end{align*}
Here, we set $s(x) =\{(x),(x)\}$ for $x \in V_0$ and  $s(x) = \{(x)\}$ for  $x \in V_1 \cup V_2^\pi$.  Adding the \emph{degenerate stratum} $(x)$   twice for $V_0$ and once for $V_1 \cup V_2^\pi$ has no geometrical interpretation but is merely convenient to relate  the number of  strata to $F_{{\rm bond},\beta}$, see \Cref{lemma:opengraph} below. In particular,  it will ensure that  each  particle  is contained in exactly two strata, see \Cref{lemma:opengraph}(iii).

  We say that $s\in \mathcal{S}_\Gamma$,  $s= (x_1,\ldots,x_n)$,  is an \emph{interacting stratum} (with the substrate) if there is $z_0 \in \mathcal{N}_{\mathcal{L}^-}(x_1)$ such that $\theta_{z_0,x_1,x_2} \in [\pi-\varepsilon,\pi+\varepsilon]$ and/or $z_{n+1} \in \mathcal{N}_{\mathcal{L}^-}(x_n)$ such that  $\theta_{x_{n-1},x_n,z_{n+1}} \in [\pi-\varepsilon,\pi+\varepsilon]$. The \emph{set of interacting strata}   is defined by 
 \begin{align*}
  \mathcal{S}_{\rm int} (G)   \defas \{ \gamma \in \mathcal{S}_\Gamma \colon & \gamma \text{ interacting stratum}\} \cup \{(x) \colon x\in   V_0    \cup V_2^\pi, \mathcal{N}_{\mathcal{L}^-}(x) \neq \emptyset\}\\&\cup \{(x) \colon x \in V_1, \mathcal{N}_{\mathcal{L}^-}(x) \neq \emptyset,  x\notin \gamma \text{ for each interacting stratum } 
 \gamma  \} \,.
 \end{align*}
  Also here the addition of degenerate strata is  convenient to relate  the number of  strata to $F_{{\rm bond},\beta}$.  
The \emph{set of noninteracting strata} is given by $ \mathcal{S}_{\rm no}  (G)  :=\mathcal{S} (G)   \setminus \mathcal{S}_{\rm int} (G) $. We say that $s \in \mathcal{S}_{\rm int} (G)$, $s= (x_1,\ldots,x_n)$, is \emph{not double touching} if either  $\mathcal{N}_{\mathcal{L}^-}(x_1) \neq \emptyset$ or  $\mathcal{N}_{\mathcal{L}^-}(x_n)\neq \emptyset$. We drop $G$ and write $\mathcal{S}$, $\mathcal{S}_{\rm int}$,  and  $\mathcal{S}_{\rm no}$ if no confusion arises. Different possibilities of interacting strata are illustrated in Figure~\ref{fig:interacting-strata}.
\begin{figure}[htp]
\begin{tikzpicture}[scale=0.75]

\draw[ultra thin,gray](0:14)--++(100:1)--++(110:1)--++(120:1)--++(130:1)--++(140:1)--++(150:1)--++(160:1)--++(170:1)--++(180:1)--++(190:1)--++(200:1)--++(210:1)--++(225:1)--++(235:1)--++(245:1)--++(255:1)--(3,0);

\draw[fill=red,red](0:14)++(100:1)circle(.05)++(110:1)circle(.05)++(120:1)circle(.05)++(130:1)circle(.05)++(140:1)circle(.05)++(150:1)circle(.05)++(160:1)circle(.05)++(170:1)circle(.05)++(180:1)circle(.05)++(190:1)circle(.05)++(200:1)circle(.05)++(210:1)circle(.05)++(225:1)circle(.05)++(235:1)circle(.05)++(245:1)circle(.05)++(255:1)circle(.05);

\draw[ultra thin,gray](5,1)--++(2,0);
\draw[ultra thin,gray](6,1)--++(0,-1);
\draw[fill=red, red](6,1) circle(.05);

\draw[ultra thin,gray](7,1)--++(0,-1);
\draw[fill=red,red](7,1) circle(.05);

\draw[ultra thin,gray](5,1)--++(0,-1);
\draw[fill=red,red](5,1) circle(.05);

\draw[ultra thin,gray](12,1)--++(0,-1);
\draw[fill=red,red](12,1) circle(.05);

\draw[fill=truegreen,truegreen](12,1) circle(.05);
\draw[fill=truegreen,truegreen](5,1) circle(.05);
\draw[fill=truegreen,truegreen](6,1) circle(.05);
\draw[fill=truegreen,truegreen](7,1) circle(.05);

\draw(4.6,0.9)++(0,.4) node[anchor=west]{$x_1$};
\draw(5.6,0.9)++(0,.4) node[anchor=west]{$x_2$};
\draw(6.6,0.9)++(0,.4) node[anchor=west]{$x_3$};
\draw(11.6,0.9)++(0,.4) node[anchor=west]{$x_4$};

\draw[ultra thin, gray](0,0)  grid(14,-1);

\draw[ultra thin, gray, dashed](-.5,0)--++(15,0);
\draw[ultra thin, gray, dashed](-.5,-1)--++(15,0);
\foreach \j in {0,...,14}{
\draw[ultra thin, gray,dashed](\j,-1)--++(0,-.5);
}

\foreach \j in {1,...,4}{
\draw[ultra thin,gray](0,\j)--++(0,-1);
\draw[fill=red,red](0,\j) circle(.05);

}

\foreach \j in {0,...,14}{
\foreach \i in {-1,0}{
\draw[fill=blue,blue](\j,\i) circle(.05);
}
}

\end{tikzpicture}
\caption{Different interacting strata. Four degenerate ones are indicated in green, where $x_1,x_3 \in V_1$, $x_2 \in V_2^\pi$, and $x_4 \in V_0$.}
\label{fig:interacting-strata}
\end{figure}
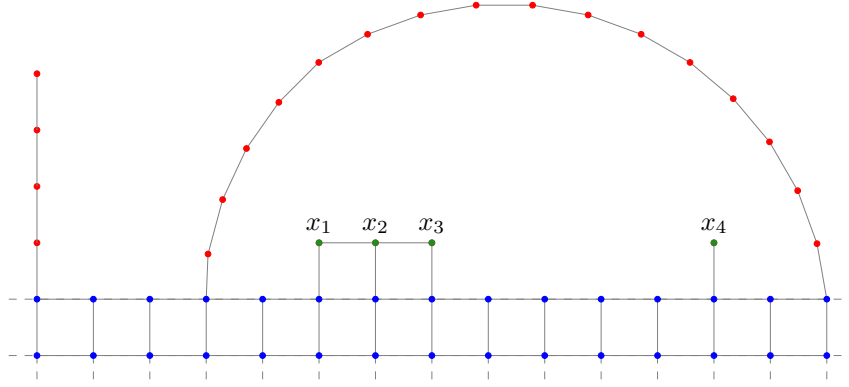

  \begin{definition}[Length, orthogonal strata, and span]\label{def: strata-ort} Let $s \in \mathcal{S}$.
  By $l(s):= \# s$ we denote its \emph{length}. We define the \emph{set of orthogonal strata} to $s$ by
\begin{align*}
\mathcal{S}^\perp(s)=\{s' \in \mathcal{S} \setminus \{s\} \colon s\cap s'\neq \emptyset\}\,.
\end{align*}
Given $s_0 \in \mathcal{S}$ we define
\begin{align}\label{def:span}
\mathrm{span}(s_0) = \bigcup_{s \in \mathcal{S}^\perp(s_0)}s\,.
\end{align}
\end{definition}
 A stratum $s \in \mathcal{S}$ and its orthogonal strata are illustrated in Figure~\ref{fig:orthogonal}.  We proceed with the definition of the angle excess for straight paths and a corresponding lemma.   
\begin{definition}[Angle excess] Given $\gamma=(x_1,\ldots,x_n) \in \Gamma$, we define the \emph{angle excess} by
\begin{align*}
\theta_{\mathrm{ex}}(\gamma) := \sum^{n-1}_{i=2} |\theta_i-\pi|\,, \quad \text{where $\theta_i =\theta_{x_{i+1},x_i,x_{i-1}}$}\,. 
\end{align*}
\end{definition}

\begin{figure}
\begin{tikzpicture}[scale=.8]

\draw(0,0)++(0:1.05)++(5:1)++(-5:1)++(89.5:1)++(92:.9)++(88:.9)++(90:.25) node[anchor=south]{$s$};

\begin{scope}[shift={(3,-2.95)},rotate=90]

\draw[ultra thin,dashed](0,0)++(0:1)++(5:1)++(-5:.8)--++(90:-3) arc(180:360:.2)--++(90:6) arc (0:180:.2)--++(270:3);

\end{scope}

\draw[ultra thin,gray](0,0)++(0:1)++(5:1)++(-5:1)++(8:1)++(-2:1)++(0:1)--++(90.1:-1)--++(92:-1)--++(88:-1.1);

\draw[ultra thin,gray](0,0)++(0:1)++(5:1)++(-5:1)++(8:1)++(-2:1)--++(90.1:-1)--++(92:-1)--++(88:-1.1);

\draw[ultra thin,gray](0,0)++(0:1)++(5:1)++(-5:1)++(8:1)--++(90.1:-1)--++(92:-1)--++(88:-1.1);

\draw[ultra thin,gray](0,0)++(0:1)++(5:1)--++(89.5:-1)--++(92:-1)--++(88:-1);

\draw[ultra thin,gray](0,0)++(0:1)--++(89:-1)--++(92:-1)--++(88:-1);

\draw[ultra thin,gray](0,0)--++(89:-1)--++(92:-1)--++(88:-1);


\draw[semithick,green!50!black](0,0)++(0:1)++(5:1)++(-5:1)++(90:-1)++(92:-1)++(88:-1)--++(180:1)--++(180:1)--++(180:1);

\draw[semithick,green!50!black](0,0)++(0:1)++(5:1)++(-5:1)++(90:-1)++(92:-1)++(88:-1)--++(0:1)--++(0:1)--++(0:1);

\draw[semithick,green!50!black](0,0)++(0:1)++(5:1)++(-5:1)++(90:-1)++(92:-1)--++(180:1)--++(180:1)--++(180:1);

\draw[semithick,green!50!black](0,0)++(0:1)++(5:1)++(-5:1)++(90:-1)++(92:-1)--++(0:1)--++(0:1)--++(0:1);

\draw[semithick,green!50!black](0,0)++(0:1)++(5:1)++(-5:1)++(90:-1)--++(180:1)--++(180:1)--++(180:1);

\draw[semithick,green!50!black](0,0)++(0:1)++(5:1)++(-5:1)++(90:-1)--++(0:1)--++(0:1)--++(0:1);

\draw[fill=black](0,0)++(0:1)++(5:1)++(-5:1)++(90:-1)++(180:1)circle(.05)++(180:1)circle(.05)++(180:1)circle(.05);

\draw[fill=black](0,0)++(0:1)++(5:1)++(-5:1)++(90:-1)++(0:1)circle(.05)++(0:1)circle(.05)++(0:1)circle(.05);

\draw[fill=black](0,0)++(0:1)++(5:1)++(-5:1)++(90:-1)++(92:-1)++(180:1)circle(.05)++(180:1)circle(.05)++(180:1)circle(.05);

\draw[fill=black](0,0)++(0:1)++(5:1)++(-5:1)++(90:-1)++(92:-1)++(0:1)circle(.05)++(0:1)circle(.05)++(0:1)circle(.05);

\draw[fill=black](0,0)++(0:1)++(5:1)++(-5:1)++(90:-1)++(92:-1)++(88:-1)++(180:1)circle(.05)++(180:1)circle(.05)++(180:1)circle(.05);

\draw[fill=black](0,0)++(0:1)++(5:1)++(-5:1)++(90:-1)++(92:-1)++(88:-1)++(0:1)circle(.05)++(0:1)circle(.05)++(0:1)circle(.05);

\draw[ultra thin,gray](0,0)++(0:1)++(5:1)++(-5:1)++(8:1)++(-2:1)--++(0:1)--++(80:.9)--++(112:.8)--++(83.5:.8);

\draw[ultra thin,gray](0,0)++(0:1)++(5:1)++(-5:1)++(8:1)++(-2:1)++(80:.9)--++(0:1);

\draw[ultra thin,gray](0,0)++(0:1)++(5:1)++(-5:1)++(8:1)++(-2:1)--++(80:.9)--++(112:.8)--++(83.5:.8);

\draw[ultra thin,gray](0,0)++(0:1)++(5:1)++(-5:1)++(8:1)--++(90:1.1)--++(90.5:.8)--++(89:.9);
\draw[ultra thin,gray](0,0)--++(89.5:1)--++(92:.8)--++(88:1);

\draw[ultra thin,gray](0,0)++(0:1)--++(89.5:1)--++(92:.9)--++(88:.9);

\draw[ultra thin,gray](0,0)++(0:1)++(5:1)--++(89.5:1)--++(92:.9)--++(88:.9);


\draw[semithick,green!50!black](0,0)--++(0:1)--++(5:1)--++(-5:1)--++(8:1)--++(-2:1)--++(0:1);

\draw[semithick,green!50!black](0,0)++(0:1)++(5:1)++(-5:1)++(90:1)--++(10:1)--++(30:1)--++(0:1);

\draw[semithick,green!50!black](0,0)++(0:1)++(5:1)++(-5:1)++(90:1)++(92:1)--++(0:1);

\draw[semithick,green!50!black](0,0)++(0:1)++(5:1)++(-5:1)++(90:1)++(92:1)--++(0:-1)--++(5:-1)--++(10:-1);

\draw[semithick,green!50!black](0,0)++(0:1)++(5:1)++(-5:1)++(90:1)++(92:1)++(88:1)--++(-5:1)--++(-20:1)--++(2:1);

\draw[semithick,green!50!black](0,0)++(0:1)++(5:1)++(-5:1)++(90:1)++(92:1)++(88:1)--++(5:-1)--++(10:-1)--++(-2:-1);

\draw[fill=black](0,0)++(0:1)++(5:1)++(-5:1)++(90:1)++(92:1)++(88:1)++(5:-1)circle(.05)++(10:-1)circle(.05)++(-2:-1)circle(.05);

\draw[fill=black](0,0)++(0:1)++(5:1)++(-5:1)++(8:1)++(-2:1)++(0:1)++(80:.9)circle(.05);

\draw[fill=black](0,0)++(0:1)++(5:1)++(-5:1)++(90:1)++(92:1)++(88:1)++(-5:1)circle(.05)++(-20:1)circle(.05)++(2:1)circle(.05);

\draw[semithick,green!50!black](0,0)++(0:1)++(5:1)++(-5:1)++(90:1)--++(10:-1)--++(0:-1)--++(-5:-1);

\draw[semithick,green!50!black](0,0)++(0:1)++(5:1)++(-5:1)--++(90:1)--++(92:1)--++(88:1);

\draw[semithick,red](0,0)++(0:1)++(5:1)++(-5:1)--++(90:1)--++(92:1)--++(88:1);

\draw[semithick,red](0,0)++(0:1)++(5:1)++(-5:1)--++(90:-1)--++(92:-1)--++(88:-1);

\draw[fill=black](0,0)++(0:1)++(5:1)++(-5:1)++(90:1)++(92:1)++(0:1)circle(.05);

\draw[fill=black](0,0)++(0:1)++(5:1)++(-5:1)++(90:1)++(92:1)circle(.05)++(0:-1)circle(.05)++(5:-1)circle(.05)++(10:-1)circle(.05);

\draw[fill=black](0,0)++(0:1)++(5:1)++(-5:1)++(8:1)++(-2:1)++(80:.9)circle(.05);

\draw[fill=black](0,0)++(0:1)++(5:1)++(-5:1)++(90:1)++(10:-1)circle(.05)++(0:-1)circle(.05)++(-5:-1)circle(.05);

\draw[fill=black](0,0)++(0:1)++(5:1)++(-5:1)++(90:1)++(10:1)circle(.05)++(30:1)circle(.05)++(0:1)circle(.05);

\draw[fill=black](0,0)++(0:1)circle(.05)++(5:1)circle(.05)++(-5:1)circle(.05)++(90:1)circle(.05)++(92:1)circle(.05)++(88:1)circle(.05);

\draw[fill=black](0,0)circle(.05)++(0:1)circle(.05)++(5:1)circle(.05)++(-5:1)circle(.05)++(90:-1)circle(.05)++(92:-1)circle(.05)++(88:-1)circle(.05);

\draw[fill=black](0,0)circle(.05)++(0:1)circle(.05)++(5:1)circle(.05)++(-5:1)circle(.05)++(8:1)circle(.05)++(-2:1)circle(.05)++(0:1)circle(.05);

\draw(6.1,.15) node[anchor=west]{$s'$};

\end{tikzpicture}
\caption{The stratum $s$ in red and its orthogonal strata $\mathcal{S}^\perp(s)$ in green. One $s'\in \mathcal{S}^\perp(s)$ is encircled.}
\label{fig:orthogonal}
\end{figure}
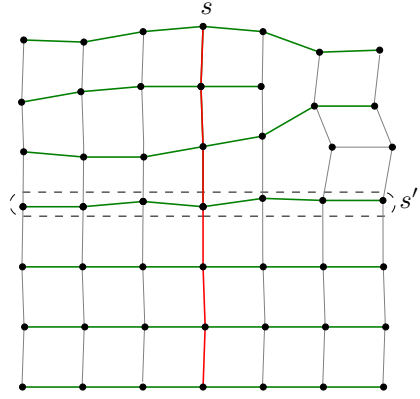

\begin{lemma}(Small angle excess)\label{lemma:excess} Let $G=(V,E)$ be an $\eps$-regular graph. The following implications hold true:
\begin{itemize}
\item[\rm (i)] If  $\displaystyle\max_{\gamma \in \Gamma} \theta_{\mathrm{ex}}(\gamma) < \frac{3\pi}{2}-\eps   $, then all $s \in \Gamma$ are open;
\item[\rm (ii)]  If $\displaystyle \max_{\gamma \in \Gamma} \theta_{\mathrm{ex}}(\gamma) < \frac{\pi}{2}-\varepsilon$, then $\#\mathcal{S}^\perp(s) = l(s)$ for all $s \in \mathcal{S}$;
\item[\rm (iii)] If $\displaystyle \max_{\gamma \in \Gamma} \theta_{\mathrm{ex}}(\gamma) < \frac{\pi}{6}-\varepsilon$, then $s_1 \cap s_2= \emptyset$ for all $s_1,s_2 \in \mathcal{S}^\perp(s)$ and for all $s \in \mathcal{S}$;
\item[\rm (iv)] If $\displaystyle \max_{\gamma \in \Gamma} \theta_{\mathrm{ex}}(\gamma) <\frac{\pi}{4}-\frac{5}{2}\varepsilon  $  and if $s\in \mathcal{S}_{\rm int}$, then $s' \in \mathcal{S}_{\rm no}$ for all $s' \in \mathcal{S}^\perp(s)$; 
\item[\rm (v)] If $\displaystyle \max_{\gamma \in \Gamma} \theta_{\mathrm{ex}}(\gamma) < \pi-4\varepsilon$, then all $s \in\mathcal{S}_{\rm int}$ are not double touching;
\item[\rm (vi)] If $\displaystyle \max_{\gamma \in \Gamma} \theta_{\mathrm{ex}}(\gamma) <\frac{\pi}{10}-\varepsilon$, if  $s_1,s_2 \in \mathcal{S}$ with $s_1 \cap s_2 \neq \emptyset$,  and if   $s' \in \mathcal{S}^\perp(s_1)$ and $s'' \in \mathcal{S}^\perp(s_2)$, then  $\hat{s}_1 \neq \hat{s}_2$ for all  $\hat{s}_1,\hat{s}_2 \in \mathcal{S}$ such that $\hat{s}_1 \in \mathcal{S}^\perp(s') $ and   $\hat{s}_2 \in \mathcal{S}^\perp(s'') $;
\item[\rm (vii)] If $\displaystyle \max_{\gamma \in \Gamma} \theta_{\mathrm{ex}}(\gamma) <\frac{\pi}{10}-  \frac{7}{5}  \varepsilon$, if  $s_1,s_2 \in \mathcal{S}$ with $s_1 \cap s_2 \neq \emptyset$,  and if there exists  $s' \in \mathcal{S}^\perp(s_1)$ such that $s'\in \mathcal{S}_\mathrm{int}$, then $s''\in \mathcal{S}_{\mathrm{no}}$ for all  $s'' \in \mathcal{S}^\perp(s_2)$.   
\end{itemize}
\end{lemma}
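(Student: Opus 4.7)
The plan is to argue all seven parts uniformly via direction propagation in straight paths combined with $\varepsilon$-regularity. For $\gamma = (x_1, \ldots, x_n) \in \Gamma$, writing $v_k := (x_{k+1} - x_k)/|x_{k+1} - x_k|$, a telescoping identity shows that the signed rotation from $v_i$ to $v_j$ has magnitude at most $\sum_{k=i+1}^{j} |\pi - \theta_k| \leq \theta_{\mathrm{ex}}(\gamma)$. Hence all edge directions along a straight path agree up to $\theta_{\mathrm{ex}}(\gamma)$; moreover, provided $\theta_{\mathrm{ex}}(\gamma) < \pi/2$, the net displacement $x_j - x_i$ has direction within the same error of $v_i$, since the edge vectors lie in a convex cone. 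Combined with $\varepsilon$-regularity (every angle, including those at substrate points, is within $\varepsilon$ of $\pi/2$, $\pi$, or $3\pi/2$), each assertion reduces to counting how many $\pi/2$-rotations are forced by a forbidden configuration and matching against the angle excesses allotted along a short chain of strata.

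For (i), a closed straight path is a polygonal loop whose total signed turning equals $2\pi k$ for some $k \in \mathbb{Z} \setminus \{0\}$. The intermediate turnings sum in absolute value to at most $\theta_{\mathrm{ex}}(\gamma)$, while the turning at $x_1 = x_n$ is at most $\pi/2 + \varepsilon$ by $\varepsilon$-regularity; hence $2\pi \leq \theta_{\mathrm{ex}}(\gamma) + \pi/2 + \varepsilon$, i.e., $\theta_{\mathrm{ex}}(\gamma) \geq 3\pi/2 - \varepsilon$. For (iv) and (v), the key preliminary is that any interacting stratum $s$ has its first edge within $2\varepsilon$ of vertical: the substrate neighbor $z_0 \in \mathcal{L}^-$ lies within $r_0 < \sqrt{2}$ of $x_1$ with $z_{0,2} \leq 0 < x_{1,2}$, and applying $\varepsilon$-regularity at $z_0$ against the lattice neighbor $z_0 + e_1 \in \mathcal{L}^-$ forces $x_1 - z_0$ to be within $\varepsilon$ of vertical, which together with $\theta_{z_0, x_1, x_2} \approx \pi$ propagates to $v_s(x_1)$. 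Then (iv) follows because a perpendicular $s'$ has direction within $\theta_{\mathrm{ex}}(s) + \theta_{\mathrm{ex}}(s') + O(\varepsilon)$ of horizontal, while an interacting $s'$ would also be within $2\varepsilon$ of vertical, requiring $\theta_{\mathrm{ex}}(s) + \theta_{\mathrm{ex}}(s') \geq \pi/2 - O(\varepsilon)$, which fails under the bound $\pi/4 - 5/2\varepsilon$. For (v), a double-touching $s$ has its end edges pointing approximately up and down respectively, each within $2\varepsilon$, so telescoping gives $\theta_{\mathrm{ex}}(s) \geq \pi - 4\varepsilon$, contradicting the hypothesis.

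For (ii) and (iii), I first use $\varepsilon$-regularity and maximality of strata to show that exactly two strata pass through each vertex, with degenerate strata precisely inserted to account for missing perpendicular directions. For (ii), if two distinct vertices $x_i, x_j \in s$ were both contained in some $\hat s \in \mathcal{S}^\perp(s)$, then an edge-uniqueness argument rules out $x_i, x_j$ being consecutive in $s$ (two maximal straight paths sharing an edge must coincide, since $\varepsilon$-regularity together with the minimum inter-vertex distance $1 - \varepsilon$ forces the backward extension to be unique); for non-consecutive vertices, the displacement $x_j - x_i$ must lie simultaneously within $\theta_{\mathrm{ex}}(s)$ of $v_s(x_i)$ and within $\theta_{\mathrm{ex}}(\hat s)$ of $v_{\hat s}(x_i) \perp v_s(x_i)$, which is incompatible. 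For (iii), two distinct orthogonals $s_1, s_2 \in \mathcal{S}^\perp(s)$ meeting at a common $y$ would both be nearly perpendicular to $s$ (within $\theta_{\mathrm{ex}}(s) + \varepsilon$), so the three-stratum angle budget $\theta_{\mathrm{ex}}(s_1) + \theta_{\mathrm{ex}}(s_2) + \theta_{\mathrm{ex}}(s) + 2\varepsilon < 3(\pi/6 - \varepsilon) + 2\varepsilon = \pi/2 - \varepsilon$ cannot absorb the $\pi/2 - \varepsilon$ rotation required for $s_1, s_2$ to be perpendicular at $y$.

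For (vi) and (vii), the strategy chains the direction argument across four strata. For (vii), $s'$ interacting forces $s'$ vertical; then $s_1 \perp s'$ is horizontal, $s_2$ perpendicular to $s_1$ at $x^* \in s_1 \cap s_2$ (assuming $s_1 \neq s_2$) is vertical, and $s'' \perp s_2$ is horizontal. If $s''$ were interacting it would also be vertical at its substrate end; the four accumulated rotations plus $O(\varepsilon)$-error would need to total $\pi/2$, whereas the hypothesis bounds the sum by $4(\pi/10 - 7/5\varepsilon) + O(\varepsilon) < \pi/2$. For (vi), the same four-chain shows $\hat s_1$ parallel to $s_1$ at its intersection with $s'$ and $\hat s_2$ parallel to $s_2$, so distinctness is immediate whenever $s_1, s_2$ are perpendicular at their intersection. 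The hardest case, and the main obstacle, is when $s_1 = s_2$: here $\hat s_1 = \hat s_2$ would require a single straight stratum passing through vertices $c \in s'$ and $d \in s''$ with $s' \cap s'' = \emptyset$ by (iii); decomposing $c - d$ into offsets along $s_1 = s_2$, $s'$, and $s''$ reveals a perpendicular component that cannot cancel under the straightness constraint on $\hat s_1$ unless the four angle excesses $\theta_{\mathrm{ex}}(s_1), \theta_{\mathrm{ex}}(s'), \theta_{\mathrm{ex}}(s''), \theta_{\mathrm{ex}}(\hat s_1)$ together absorb a $\pi/2$ rotation, again contradicting the $\pi/10 - \varepsilon$ bound.
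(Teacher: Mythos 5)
Your direction-propagation strategy is a sound reformulation of what the paper actually does: the paper closes up a polygon through the relevant strata (and, for (iv), (v), (vii), through a straight path inside the substrate $\mathcal{L}^-$) and compares the interior-angle sum of the resulting $m$-gon with $(m-2)\pi$, which is the same as charging the total turning of a closed loop against the angle excesses of its legs plus the near-$\pi/2$ corners. For (iv), (v), (vii) and for (vi) in the case $s_1\neq s_2$, your bookkeeping reproduces the correct budgets ($5\varepsilon$, $4\varepsilon$, $7\varepsilon$, $5\varepsilon$), and your preliminary observation that a bond from a particle to the substrate is within $\varepsilon$ of vertical (by $\varepsilon$-regularity at the substrate point against its neighbours $z_0\pm e_1\in\mathcal{L}^-$) is a legitimate substitute for the paper's device of routing the closed path along the substrate. (In (i), note that the \emph{signed} turning of a non-simple closed polygon can be $0$; you need the total absolute turning bound $\ge 2\pi$, which rescues the estimate.)

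There are, however, genuine gaps. First, in (ii) your incompatibility claim does not close quantitatively: if $\hat s\neq s$ met $s$ at two non-consecutive vertices, comparing the displacement with the two initial directions at the single corner $x_i$ only yields $\theta_{\mathrm{ex}}(\text{$s$-leg})+\theta_{\mathrm{ex}}(\text{$\hat s$-leg})\ge \pi/2-\varepsilon$, which is \emph{not} contradicted by the hypothesis that each excess is $<\pi/2-\varepsilon$ (the displacement may sit at roughly $45^\circ$ to both directions). What is needed is the closed two-legged loop through $x_i$ and $x_j$: its two corners each turn by at most $\pi/2+\varepsilon$, while a closed polygon has total turning at least $2\pi$, whence $\theta_{\mathrm{ex},1}+\theta_{\mathrm{ex},2}\ge \pi-2\varepsilon$, which the hypothesis does contradict. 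A related caveat affects (iii): $\varepsilon$-regularity also allows the two orthogonal strata to meet at $y$ at an angle near $\pi$, so the ``$\pi/2$ rotation required at $y$'' is not forced; the parallel/antiparallel configuration must be excluded by the same closed-loop (or a same-side/opposite-side displacement) argument. Second, in (vi) the case $s_1=s_2$ that you call the hardest is one in which the asserted conclusion is actually false: in the perfect square lattice take $s_1=s_2$ a row, $s',s''$ two distinct columns crossing it, and $\hat s_1=\hat s_2$ a second row crossing both columns; all hypotheses hold. The lemma is tacitly meant for the nondegenerate configuration in which it is applied (in the proof of the estimate on $F_{\mathrm{bond},\beta}$ one has $s_2\in\mathcal{S}^\perp(s_1)$, hence $s_1\neq s_2$), exactly as you correctly assumed in (vii); so your sketched ``decomposition of $c-d$'' cannot yield the contradiction you claim and should be dropped rather than repaired.
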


\begin{proof}  Statements {\rm (i)}--{\rm (iii)} follow as in~\cite[Proof of Lemma 3.6]{FriedrichKreutz}.   

We first prove {\rm (iv)}. Assume that $s=(x_1,\ldots,x_n) $ is
interacting, say at $x_1$ and  $z_1 \in 
\mathcal{N}_{\mathcal{L}^-}(x_1)$,   and assume that $s'=(y_1,\ldots,y_m)$ satisfies $s'\in \mathcal{S}^\perp(s)$, i.e., $y_j=x_l$ for some $j\in \{1,\ldots, m\}$ and some $l \in \{1,\ldots, n \}$. Assume by contradiction that also   $s'$ is interacting, say at  $y_m$ and  $z_k \in \mathcal{N}_{\mathcal{L}^-}(y_m)$.   Consider a straight path $\gamma_1 = (z_1,\ldots,z_k)$ connecting $z_1$ and $z_k$ in $\mathcal{L}^-$, as well as $\gamma_2 = ( x_1,\ldots, x_l)$ and $\gamma_3 = ( y_j,\ldots,y_m)$.  The closed path  (in the sense of graph theory)  $\gamma_1 \circ \gamma_2 \circ \gamma_3$ forms a $(k+l+m-j )$-gon. The sum of its interior angles is $(k+l+m-j - 2 )\pi$. In particular, the interior angles at $z_2,\ldots,z_{k-1}$ are $\pi$, the interior angles at $z_1$ and $z_k$ are in $[\pi/2-\varepsilon,\pi/2+\varepsilon]$, the interior angle at $x_l$ is at least $\pi/2-\varepsilon$, and the  interior angles at $x_1$ and $y_m$ are in $[\pi-\varepsilon,\pi+\varepsilon]$. Thus, as $\theta_{\mathrm{ex}}(\gamma_1) =0$, this implies that $\theta_{\mathrm{ex}}(\gamma_2) + \theta_{\mathrm{ex}}(\gamma_3)  \geq \frac{\pi}{2} - 5\varepsilon  $. This yields a contradiction since  the angle excess of both straight paths is smaller than $ \frac{\pi}{4} -\frac{5}{2}\varepsilon$ by assumption.  

We now prove {\rm (v)}. Assume by contradiction that $\gamma= (x_1,\ldots,x_n)$ is double touching, say in $z_1, z_k \in \mathcal{L}^-$. Consider a straight path $(z_1,\ldots,z_k)$ connecting $z_1$ and $z_k$ in $\mathcal{L}^-$. The path $(x_1,\ldots,x_n,z_k,z_{k-1}\ldots,z_1)$  (in the sense of graph theory)  forms an $(n+k)$-gon. The sum of its interior angles is $(n+k-2)\pi$. The interior angles at $z_2,\ldots,z_{k-1}$ are $\pi$, the interior angles at $z_1$ and $z_k$ are in $[\pi/2-\varepsilon,\pi/2+\varepsilon]$, and the  interior angles at $x_1$ and $x_n$ are in $[\pi-\varepsilon,\pi+\varepsilon]$. This implies that $\theta_{\mathrm{ex}}(\gamma) \geq \pi-4\varepsilon$, a contradiction.

Next, we prove  {\rm (vi)} and refer to Figure~\ref{Fig:Lem3.5(vi)} for an illustration.  Assume by contradiction  that $\hat{s}_1 = \hat{s}_2 =:  \hat{s}$. We denote by $\gamma_1 \subset s_1$ the path connecting  the point $s_1 \cap s'$ and the point $s_1 \cap s_2$. Similarly, we denote $\gamma_2 \subset s_2$  the path connecting  the point $s_1 \cap s_2$ and the point $s'' \cap s_2$, $\gamma' \subset s'$ the path connecting  the point $s' \cap \hat{s}$ and the point $s' \cap s_1$, $\gamma'' \subset s''$ the path connecting  the point $s'' \cap s_2$ and the point $s'' \cap \hat{s}$, and $\hat{\gamma} \subset \hat{s}$ the path connecting  the point $s'' \cap \hat{s}$ and the point $s' \cap \hat{s}$. Note that the path $\gamma'\circ \gamma_1 \circ \gamma_2 \circ \gamma'' \circ \hat{\gamma}$ is a closed   and thus, taking into account the fact that all bond angles belong to $[k\pi/2-\varepsilon,k\pi/2+\varepsilon]$, $k=1,2,3$, we obtain that $\theta_{\mathrm{ex}}(\gamma_1) +\theta_{\mathrm{ex}}(\gamma_2) + \theta_{\mathrm{ex}}(\gamma') + \theta_{\mathrm{ex}}(\gamma'')+ \theta_{\mathrm{ex}}(\hat\gamma) \geq \pi/2-5\varepsilon$. This yields a contradiction since  the angle excess of all five straight paths is smaller than $\frac{\pi}{10}-\varepsilon$ by assumption.  

 Lastly, the proof of  {\rm (vii)} follows like the one of  {\rm (vi)}
 by taking as $\hat{s} \subset \mathcal{L}^-$  to be  the path
 connecting $s'$ and $s''$, see Figure~\ref{Fig:Lem3.5(vi)}, 
 observing additionally that there are two  particles  bonded to the
 substrate where the interior angles of the closed path are in
 $[\pi-\varepsilon,\pi+\varepsilon]$, cf.\ the proof of (v). 
 \end{proof}

\begin{figure}[htp]
\begin{tikzpicture}[scale=.75]

\foreach \j in {0,...,10}{
\draw[ultra thin,gray, dashed](.5*\j,-1.5)--++(0,3);
}

\foreach \j in {0,...,6}{
\draw[ultra thin,gray, dashed](0,-1.5+.5*\j)--++(5,0);
}

\draw[white,thick](4.5,-1.5)--++(.5,0)--++(0,.5);

\draw[ultra thin](0,0)--++(5,0);

\draw[ultra thin](2.5,-1.5)--++(0,3);

\draw[ultra thin](5,-1)--++(0,3)++(1,0)--++(-2,0);

\draw[ultra thin](4.5,-1.5)--++(-5,0)++(0,-.5)--++(0,1.5);

\draw[ultra thin, smooth,dashed](5,-1.5)++(-5.5,0)++(0,-.5)++(0,1.5)--(-.5,.5)--(-.5,1)--(-.6,1.2)--(-.7,1.5)--(-.6,1.8)--(.5,2.5)--(3,2.2)--(3.5,2)--(4,2);

\draw(4,.25) node {$s_1$};

\draw(5.25,.75) node {$s'$};

\draw(1.25,-1.75) node {$s''$};

\draw(2.25,-.75) node {$s_2$};

\draw(5.25,2.25) node {$\hat{s}_1$};

\draw(-0.75,-1) node {$\hat{s}_2$};

\draw(-0.3,2.4) node {$\hat{s}$};

\begin{scope}[shift={(8,0)}]

\draw[ultra thin, gray] (-.5,-2) --(7,-2);
\draw[ultra thin, gray] (-.5,-1.5) --(7,-1.5);
\draw[ultra thin, gray] (-.5,-2.5) --(7,-2.5);

\foreach \j in {-1,...,14}{
\draw[ultra thin,gray](.5*\j,-1.5)--++(0,-1);
}

\foreach \j in {0,...,10}{
\draw[ultra thin,gray, dashed](.5*\j,-1.5)--++(0,3);
}

\foreach \j in {0,...,6}{
\draw[ultra thin,gray, dashed](0,-1.5+.5*\j)--++(5,0);
}

\draw[ultra thin, smooth,dashed](5,1.5)--(5.6,1.6)--(6.2,1.8)--(6.8,1.6)--(7.2,1.4)--(7.3,1.2)--(7.3,1)--(7,.8)--(6.8,-.2)--(6.7,-.7)--(6.5,-1.5);


\draw[ultra thin](0,0)--++(5,0);

\draw[ultra thin](2.5,-1.5)--++(0,3);

\draw[ultra thin](5,-1.5)--++(0,3);

\draw[ultra thin](5,1.5)--++(-5,0);

\draw(4,.25) node {$s_1$};

\draw(5.25,.75) node {$s'$};

\draw(1.25,1.75) node {$s''$};

\draw(2.25,.75) node {$s_2$};

\draw(5.70,-1.25) node { $\hat{s}$ };

\end{scope}

\end{tikzpicture}
\caption{On the left: A configuration that satisfies the contradictory assumption used in the proof of~\Cref{lemma:excess}(vi).  On the right: A configuration that satisfies the contradictory assumption used in the proof of~\Cref{lemma:excess}(vii).}
\label{Fig:Lem3.5(vi)}
\end{figure}
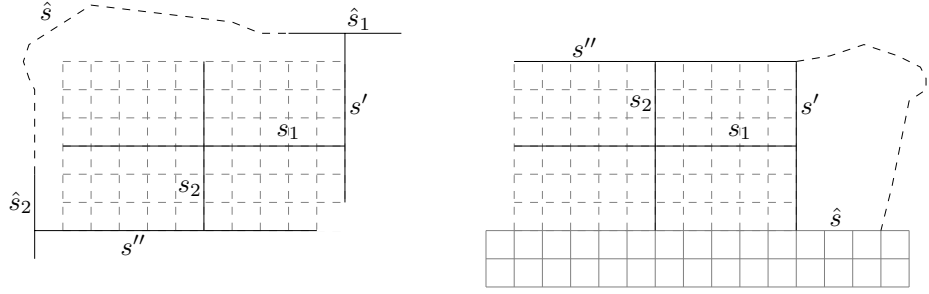

\begin{remark}\label{rem: eps}
In the following, we assume that $\varepsilon_0>0$ is small enough such that $ \pi/10 - 7\varepsilon/5  \leq  \pi/6-  4\varepsilon$ for all $0  < \eps < \eps_0$.  
\end{remark}

 We state a consequence of \Cref{lemma:excess}. (Actually, we only use properties (i), (iv), and (v).)

\begin{lemma}[Graphs  with  open paths, no double touching]\label{lemma:opengraph}  Let $G=(V,E)$ be an $\eps$-regular graph such that $ \max_{\gamma \in \Gamma} \theta_{\mathrm{ex}}(\gamma) < \frac{\pi}{10}-  \frac{7}{5} \eps$.  Then,  
\begin{itemize}
\item[\rm (i)]  $\sum_{s \in \mathcal{S}} l(s) =2N$;
\item[\rm (ii)] $ F_{\rm bond, \beta} (G)  = 2(1-\beta) \#\mathcal{S}_{\rm int} + 2\# \mathcal{S}_{\rm no}$;
\item[\rm (iii)] Each $x\in V$ is contained in exactly two strata $s(x)$ and $s^\perp(x) \in \mathcal{S}^\perp(s(x))$. 
\end{itemize}
\end{lemma}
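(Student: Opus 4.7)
The plan is to prove (iii) by a local case analysis on $\#\mathcal{N}(x,E)$, and then deduce (i) and (ii) as combinatorial consequences. The $\varepsilon$-regularity hypothesis forces the bonds at every vertex to sit in four approximately orthogonal directions, and the angle-excess bound together with \Cref{lemma:excess}(i) ensures that every maximal straight path is open, so interior vertices and endpoints are well defined. Splitting $V_2$ into $V_2^\pi$ and its complement, I would verify case by case that $x$ lies in exactly two strata: $V_0$ contributes the two copies of the degenerate stratum $(x)$; $V_1$ contributes the unique maximal straight path having $x$ as endpoint together with the degenerate $(x)$; $V_2\setminus V_2^\pi$ contributes the two maximal straight paths that terminate at $x$; $V_2^\pi$ contributes one maximal straight path containing $x$ as an interior vertex together with $(x)$; $V_3$ contributes one straight path through $x$ as interior and one terminating at $x$; and $V_4$ contributes two orthogonal straight paths through $x$ as interior. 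As a by-product of this case analysis, the number of \emph{end-slot incidences} at $x$, assigning $0$ to an interior occurrence, $1$ to an endpoint occurrence and $2$ to each degenerate copy, equals $4-\#\mathcal{N}(x,E)$.

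Statement (i) is then an immediate double-counting corollary: invoking (iii),
\[
\sum_{s\in\mathcal{S}}l(s)=\sum_{x\in V}\#\{s\in\mathcal{S}\colon x\in s\}=2N.
\]
For (ii), I would start from
\[
F_{\mathrm{bond},\beta}(G)=\sum_{x\in V}\bigl(4-\#\mathcal{N}(x,E)\bigr)-2\beta M,\qquad M\defas\sum_{x\in V}\#\mathcal{N}_{\mathcal{L}^-}(x),
\]
and use the end-slot identity from (iii), together with the fact that every stratum contributes exactly two end-slot incidences in total, to rewrite the first sum as $2\#\mathcal{S}=2\#\mathcal{S}_{\rm int}+2\#\mathcal{S}_{\rm no}$. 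The target identity in (ii) then reduces to showing $M=\#\mathcal{S}_{\rm int}$.

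To prove $M=\#\mathcal{S}_{\rm int}$ I would exhibit an explicit bijection between substrate bonds and interacting strata. Each substrate bond $(x,z_0)$ occupies one of the four canonical directions at $x$, and the case analysis above pinpoints a unique stratum through $x$ to which this bond is naturally attached: the non-degenerate stratum terminating at $x$ when the substrate bond is aligned with an adjacent $V$-bond (in the aligned subcase of $V_1$ and in the cases $V_2\setminus V_2^\pi$ and $V_3$), and the appropriate degenerate copy $(x)$ otherwise (in the cases $V_0$, the perpendicular subcase of $V_1$, and $V_2^\pi$). The definition of $\mathcal{S}_{\rm int}$ is calibrated precisely so that this assignment lands in $\mathcal{S}_{\rm int}$: the exclusion clause for $V_1$ promotes $(x)$ to an interacting stratum exactly when no non-degenerate interacting stratum already contains $x$. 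The hard part will be the injectivity of this correspondence, namely ruling out that a single non-degenerate straight path be interacting at both of its endpoints and thereby absorb two substrate bonds. This is exactly the content of \Cref{lemma:excess}(v), which ensures that every interacting stratum is not double touching, so that each non-degenerate interacting stratum accounts for a unique substrate bond. With this, the bijection is well defined on both sides and yields (ii).
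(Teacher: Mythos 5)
Your treatment of (iii) and (i) is essentially the paper's: the paper proves (iii) by the same case analysis on $V_0,V_1,V_2^\pi,V_2\setminus V_2^\pi,V_3,V_4$, and both (i) and the identity $\sum_{x\in V}(4-\#\mathcal{N}(x,E))=2\#\mathcal{S}$ (your ``end-slot'' count) are taken over from \cite{FriedrichKreutz}; your double-counting derivation of these is fine. Where you genuinely diverge is the second half of (ii), the identity $\#\mathcal{S}_{\rm int}=\sum_{x\in V}\#\mathcal{N}_{\mathcal{L}^-}(x)$. The paper proves this by induction on the number of substrate-bonded vertices: it removes one such vertex $x$, uses \Cref{lemma:excess}(iv) to see that $x$ lies in a unique interacting stratum $s$, and uses the not-double-touching property to check that $s\setminus\{x\}$ is no longer interacting, so the count drops by exactly one. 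You instead set up a direct bijection between substrate bonds and interacting strata, with \Cref{lemma:excess}(v) providing injectivity. This is a legitimate alternative and arguably more transparent, since it makes visible exactly which stratum ``absorbs'' each substrate bond, whereas the paper's induction hides the matching inside the removal step.

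There is, however, one under-justified step in your bijection, and it is not the injectivity you flagged as ``the hard part'' but the well-definedness. In the subcase $x\in V_1$ with the substrate bond perpendicular to the single $V$-bond, you send the bond to the degenerate stratum $(x)$ and assert that ``the definition of $\mathcal{S}_{\rm int}$ is calibrated'' so that $(x)\in\mathcal{S}_{\rm int}$. By the exclusion clause this holds only if no interacting stratum of $\mathcal{S}_\Gamma$ contains $x$; by (iii) the only candidate is the non-degenerate stratum $s$ ending at $x$, and a priori $s$ could be interacting at its \emph{other} endpoint, in which case $(x)\notin\mathcal{S}_{\rm int}$ and your map leaves the substrate bond at $x$ unaccounted for (so the claimed equality would fail, not just the proof). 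This scenario must be excluded, and it can be: if $s$ were interacting at its far end while $\mathcal{N}_{\mathcal{L}^-}(x)\neq\emptyset$, then $s$ would be an interacting stratum touching the substrate at both endpoints, i.e.\ double touching, contradicting \Cref{lemma:excess}(v) -- the very lemma you invoke for injectivity, but here applied in its full strength (both endpoints carrying substrate neighbours, aligned or not). A similar, easier piece of bookkeeping you skip is that at a substrate-bonded vertex at most one $V$-bond can be aligned with the substrate bond (and $\#\mathcal{N}_{\mathcal{L}^-}(x)\le 1$, as in \Cref{lemma:elementaryprop}), which is what makes the target stratum in the aligned case unique. With these points filled in, your bijection closes and gives a correct, alternative proof of (ii).
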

\begin{proof} Statement {\rm (i)} follows exactly as in~\cite[Proof of Lemma 3.3]{FriedrichKreutz}, so we only need to prove {\rm (ii)} and {\rm (iii)}. 

We first prove {\rm (ii)}. To this end, recalling \eqref{def:Fbond} it suffices to prove the claims  
\begin{align*}
{\rm (a)} \quad  \#\mathcal{S} = \frac{1}{2}\sum_{x \in V} (4-\#\mathcal{N}(x,E))  \quad \text{and} \quad {\rm (b)} \quad\#\mathcal{S}_{\rm int} = \sum_{x\in V}\#\mathcal{N}_{\mathcal{L}^-}(x)\,.
\end{align*}
The proof of {\rm (a)}  can be found in~\cite[Proof of Lemma 3.3]{FriedrichKreutz}. We  prove {\rm (b)} by induction over $ m   \defas  \#\{x \in V \colon \mathcal{N}_{\mathcal{L}^-}(x) \neq \emptyset\}$. The statement is clearly true for $m=0$. Let now $ m   \geq 1$ and fix $x \in V$ such that  $\mathcal{N}_{\mathcal{L}^-}(x) \neq \emptyset$. We have that $\#\mathcal{N}_{\mathcal{L}^-}(x)=1$ by~\Cref{lemma:elementaryprop}. Considering $\hat{G}= (V\setminus \{x\},E \setminus \{\{x,y\} \colon y \in \mathcal{N}(x,E)\})=(\hat{V},\hat{E})$   we get    $\#\{ z  \in \hat{V} \colon \mathcal{N}_{\mathcal{L}^-}(z) \neq \emptyset\} = m-1$. We  can apply the induction hypothesis  to obtain   
\begin{align*}
 \#\mathcal{S}_{\rm int}(\hat{G}) = \sum_{z \in \hat{V}} \#\mathcal{N}_{\mathcal{L}^-}(z) = \sum_{z \in V} \#\mathcal{N}_{\mathcal{L}^-}(z) -1\,.
\end{align*}
It thus remains to show that $ \#\mathcal{S}_{\rm int}(\hat{G}) =  \#\mathcal{S}_{\rm int}(G)-1$. By definition we clearly have $\mathcal{S}_{\rm int}(\hat{G}) \subset \mathcal{S}_{\rm int}({G})$. Consider the unique   $s \in \mathcal{S}_{\rm int}(G)$ with $x \in s$ (uniqueness follows from~\Cref{lemma:excess}(iv)).    It suffices to show that $s \setminus \lbrace x \rbrace \notin  \mathcal{S}_{\rm int}(\hat{G})$.    If $s=(x)$, this is clear.   On the other hand, if $s=(x_1,\ldots,x_n)$, for $n \ge 2$  with $x_1=x$, then, as $s$ is not double touching, we have that $  (x_2,\ldots,x_n) \notin \mathcal{S}_{\rm int}(\hat{G})$. Therefore $ \#\mathcal{S}_{\rm int}(\hat{G}) =  \#\mathcal{S}_{\rm int}(G)-1$.

We now turn to the proof of {\rm (iii)}. Note that, if $x \in V_3\cup V_4 \cup (V_2 \setminus V_2^\pi)$, then there exist two straight paths passing through $x$ and thus there exist $s_1,s_2 \in \mathcal{S}_\Gamma$ that contain $x$. (Note that all $\gamma$ are open so the same stratum cannot pass through $x$ twice). If $x\in V_2^\pi \cup V_1$, there exists a straight path containing $x$ and thus there exists $s \in \mathcal{S}_\Gamma$ containing $x$. By construction, we added the degenerate stratum $(x)$ containing $x$. Therefore, also in this case there are two strata $s_1,s_2 \in \mathcal{S}$ containing $x$. Lastly, for  $x \in V_0$ we added twice the degenerate stratum $(x)$  which contains $x$. This concludes the proof.  
\end{proof}

 We now come to the \textit{stratification} of bond graphs. The following lemma   allows to reduce the problem of crystallization to a purely geometric problem of minimizing the number of strata in graphs containing only open strata  with small angle excess.  Recall \eqref{def:F}--\eqref{def:Fbond}.

\begin{lemma}(Construction of a graph with small angle excess)\label{lemma:construction} Let $ G=(V,E)$ be $\varepsilon$-regular. Then, there exists $G_\mathrm{o}= (V,E_\mathrm{o})$ with $ E_\mathrm{o} \subset E$ such that
\begin{itemize}
\item[\rm (i)] $\max_{\gamma \in \Gamma(G_\mathrm{o})} \theta_{\mathrm{ex}}(\gamma) < \frac{\pi}{10}-7\varepsilon/5$\,;
\item[\rm (ii)]  $G_{\mathrm{o}}$ satisfies
\begin{align*}
 F_\beta (G) \geq   F_{\mathrm{bond},\beta} (  G_{\mathrm{o}})   \,
\end{align*}
with equality only if $E = E_\mathrm{o}$,  $|x-y|=1$ for all $x \in V$, $y\in \mathcal{N}(x,E)\cup \mathcal{N}_{\mathcal{L}^-}(x)$,  and $\theta \in \{\pi/2,\pi,3\pi/2\}$ for all $\theta=\theta_{x,y,z}$ with $x,y,z \in V \cup \mathcal{L}^-$ such that $|x-y|=|z-y|=1$.
\end{itemize}
\end{lemma}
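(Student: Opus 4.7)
The strategy is to produce $G_o$ from $G$ by iteratively deleting single edges lying on straight paths with large angle excess, charging each deletion against the angle energy guaranteed by the slope estimate $({\rm iii}_3)$. Starting from $G_0=G$, whenever the current graph $G_k=(V,E_k)$ admits a maximal straight path $\gamma=(x_1,\dots,x_n)$ with $\theta_{\mathrm{ex}}(\gamma)\ge \pi/10-7\varepsilon/5$, I walk along $\gamma$ from $x_1$, track the cumulative excess $R(i):=\sum_{j=2}^i|\theta_j-\pi|$, and delete the edge $\{x_{i^*},x_{i^*+1}\}$ at the first index $i^*$ with $R(i^*)\ge \pi/10-12\varepsilon/5$. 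Since each iteration removes one edge and $|E|<\infty$, the procedure terminates in $G_o=(V,E_o)$ with $E_o\subset E$.

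\textbf{Verifying (i).} The stopping criterion directly yields $\max_{\gamma\in\Gamma(G_o)}\theta_{\mathrm{ex}}(\gamma)< \pi/10-7\varepsilon/5$. More concretely, the subpath $(x_1,\dots,x_{i^*})$ produced by a cut has excess $R(i^*-1)<\pi/10-12\varepsilon/5$ by definition of $i^*$, and similarly for the subsequent pieces. Any maximal straight path in $G_o$ must be one of these subpaths: extending it past a cut endpoint would introduce, in $G$, a second bond almost parallel to the cut-off bond, contradicting \Cref{def:epsregular}(ii) (two neighbours of a vertex in essentially the same direction would form a near-$0$ angle).

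\textbf{Energy balance and equality case.} Each edge removal increases $F_{\mathrm{bond},\beta}$ by exactly $2$, so $(ii)$ reduces to $F_{\mathrm{ex},\beta}(G)\ge 2|E\setminus E_o|$. I assign to each cut the block of consecutive interior angles realising the threshold $R(i^*)\ge \pi/10-12\varepsilon/5$. These blocks are pairwise disjoint: two distinct maximal straight paths of $G$ cannot share an interior vertex with the same pair of incident bonds, since the straight-path direction at a vertex is determined uniquely up to $\varepsilon$ by $\varepsilon$-regularity; and within a single maximal path the greedy rule produces consecutive, nonoverlapping blocks. Applying $({\rm iii}_3)$ block by block,
\[
\sum_{j\in\mathrm{block}}v_3(\theta_j)\;\ge\;\frac{4}{\pi/10-\varepsilon}\,R(i^*)\;\ge\;\frac{4\bigl(\pi/10-12\varepsilon/5\bigr)}{\pi/10-\varepsilon}\;>\;2,
\]
the strict inequality holding whenever $\varepsilon<\pi/38$, which I build into the choice of $\varepsilon_0$ by tightening the constant of \Cref{rem: eps} if necessary. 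Summing over all cuts, and using that $v_2(|x-y|)+1\ge 0$ and $v_3\ge 0$ by $({\rm i}_2)$ and $({\rm ii}_3)$, one concludes $F_{\mathrm{ex},\beta}(G)>2|E\setminus E_o|$ whenever at least one cut occurs. Hence equality in $(ii)$ forces $E_o=E$ and $F_{\mathrm{ex},\beta}(G)=0$; by $({\rm i}_2)$ and $({\rm ii}_3)$ this amounts to $|x-y|=1$ on every edge and $\theta\in\{\pi/2,\pi,3\pi/2\}$ on every bond angle, matching the stated equality conditions.

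\textbf{Main obstacle.} The delicate point is the simultaneous handling of two requirements: the disjointness of the angle-energy blocks attributed to different cuts, and the strict gain $>2$ on each block. This is precisely what forces the greedy, single-edge, one-maximal-path-at-a-time procedure, and what requires the quantitative restriction on $\varepsilon_0$ absorbed via \Cref{rem: eps}.
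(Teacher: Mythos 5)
Your construction --- iteratively deleting a single edge of a straight path once the cumulative angle excess reaches the threshold, paying the resulting increase of $2$ in $F_{\mathrm{bond},\beta}$ per deletion with the three-body excess via $({\rm iii}_3)$, and reading the equality case off $E_\mathrm{o}=E$ together with $F_{\mathrm{ex},\beta}(G)=0$ --- is correct and is essentially the argument the paper invokes here by citing \cite[Lemma 3.7]{FriedrichKreutz} with $\pi/6$ replaced by $\pi/10$. Note only that no extra tightening of $\varepsilon_0$ is needed: \Cref{rem: eps} already forces $\varepsilon<\pi/39<\pi/38$, so your strict gain per cut holds automatically.
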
 
\begin{proof}
The  proof can be found in~\cite[Proof of Lemma 3.7]{FriedrichKreutz}, replacing $\frac{\pi}{6}$ by $\frac{\pi}{10}$. (Compare the assumption in ($\rm iii_3$)  with the one in~\cite{FriedrichKreutz}.)
\end{proof}

 We now come to an estimate on $ F_{\mathrm{bond},\beta}$ for graphs satisfying \Cref{lemma:construction}(i). This will be the key ingredient for the proof of \Cref{thm:crystallization} in Section~\ref{sec: main}.  Recall Definition \ref{def: strata-ort}.

\begin{lemma}[Estimate on $F_{\mathrm{bond},\beta}$, $\beta \in (0,1)$]\label{lem:orthogonalstrata} Let $0 <\eps < \eps_0  $ and $\beta \in (0,1)$. Let  $G=(V,E)$ be an $\eps$-regular graph such that
\begin{align}\label{eq: smalli}
 \max_{\gamma \in \Gamma} \theta_{\mathrm{ex}}(\gamma) < \frac{\pi}{10}-  \frac{7}{5} \eps  \,.
\end{align}
Let $s_1 \in \mathcal{S}$ and $s_2 \in \mathcal{S}^\perp(s_1)$.  Then, the following holds true:
\begin{itemize}
\item[{\rm (i)}] $ F_{\mathrm{bond}, \beta }  (G) \geq 2(1-\beta) \max_{s \in s_1,s_2} l(s) + 2\min_{s \in s_1,s_2} l(s)$\,.
\item[{\rm (ii)}] If 
\begin{align}\label{incl:nospan}
\mathrm{span}(s_1)\subsetneq V  \quad \text{and}\quad \mathrm{span}(s_2)\subsetneq  V   \,,
\end{align}
then 
\begin{align*}
 F_{\mathrm{bond},\beta}  (G) \geq 2(1-\beta) \max_{s \in s_1,s_2} l(s) + 2\min_{s \in s_1,s_2} l(s) + (4- 2  \beta) \,.
\end{align*} 
\end{itemize}
\end{lemma}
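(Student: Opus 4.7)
My plan is to reduce to a counting argument via Lemma~\ref{lemma:opengraph}(ii), which gives $F_{\mathrm{bond},\beta}(G) = 2(1-\beta)\#\mathcal{S}_{\mathrm{int}} + 2\#\mathcal{S}_{\mathrm{no}} = 2\#\mathcal{S} - 2\beta\#\mathcal{S}_{\mathrm{int}}$. First, I would exhibit $l(s_1) + l(s_2)$ distinct strata, which include $s_1$ and $s_2$, by forming $\mathcal{S}^\perp(s_1) \cup \mathcal{S}^\perp(s_2)$: the two summands have cardinalities $l(s_i)$ by Lemma~\ref{lemma:excess}(ii), and they are disjoint, since any $s' \in \mathcal{S}^\perp(s_1) \cap \mathcal{S}^\perp(s_2)$ would share a vertex with $s_2 \in \mathcal{S}^\perp(s_1)$, contradicting Lemma~\ref{lemma:excess}(iii) applied to $\mathcal{S}^\perp(s_1)$.

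Next, I would bound the number $I_c$ of interacting strata within this collection and insert into $F_{\mathrm{bond},\beta}(G) \geq 2(l(s_1) + l(s_2)) - 2\beta I_c$. If $s_1 \in \mathcal{S}_{\mathrm{int}}$, Lemma~\ref{lemma:excess}(iv) forces $\mathcal{S}^\perp(s_1) \subseteq \mathcal{S}_{\mathrm{no}}$, so $I_c \leq l(s_2)$; the symmetric case yields $I_c \leq l(s_1)$. If both $s_1, s_2 \in \mathcal{S}_{\mathrm{no}}$, Lemma~\ref{lemma:excess}(vii) applied to the pair $s_1, s_2$ (which share a vertex) rules out interacting strata appearing simultaneously in $\mathcal{S}^\perp(s_1) \setminus \{s_2\}$ and $\mathcal{S}^\perp(s_2) \setminus \{s_1\}$, whence $I_c \leq \max(l(s_1), l(s_2)) - 1$. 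A short check verifies that each of these bounds implies $F_{\mathrm{bond},\beta}(G) \geq 2(1-\beta)\max(l(s_1), l(s_2)) + 2\min(l(s_1), l(s_2))$, proving~(i); the last case even yields a surplus of $2\beta$.

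For~(ii), I would find two further strata contributing at least $4 - 2\beta$ in weight. Picking $x \in V \setminus \mathrm{span}(s_1)$ and invoking Lemma~\ref{lemma:opengraph}(iii), $x$ lies in exactly two strata $t, t'$; neither equals $s_1$ or $s_2$ (both of which are contained in $\mathrm{span}(s_1)$), nor belongs to $\mathcal{S}^\perp(s_1)$ (else $x \in \mathrm{span}(s_1)$); moreover, Lemma~\ref{lemma:excess}(iii) applied to $\mathcal{S}^\perp(s_2)$ shows that at most one of them lies in $\mathcal{S}^\perp(s_2)$, so at least one is a genuinely new stratum. A symmetric argument using $\mathrm{span}(s_2) \subsetneq V$ furnishes a second. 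Relying on the geometric observation that $\eps$-regular interacting strata must be directed perpendicularly away from the substrate (the substrate bond at the endpoint must point opposite to the stratum direction, yielding angle $\approx \pi$), while strata parallel to the substrate are never interacting, I can arrange the two new strata to lie in perpendicular directions, with the substrate-parallel one contributing weight $2$ and the substrate-perpendicular one at least $2(1-\beta)$, adding $4 - 2\beta$ in total. The main obstacle is the case of both $s_1, s_2 \in \mathcal{S}_{\mathrm{no}}$ in~(i), where Lemma~\ref{lemma:excess}(vii)---tailored for exactly this configuration---is indispensable.
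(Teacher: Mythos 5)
Your part (i) is correct and is essentially the paper's argument: you list the $l(s_1)+l(s_2)$ distinct strata of $\mathcal{S}^\perp(s_1)\cup\mathcal{S}^\perp(s_2)$ (disjointness via \Cref{lemma:excess}(iii), cardinalities via (ii)) and use \Cref{lemma:excess}(iv)/(vii) to confine the interacting ones to a single family; your case split on whether $s_1$ or $s_2$ is itself interacting is a harmless reorganization of the same count. The real difficulty of the lemma, however, is in part (ii), and there your proposal has two genuine gaps. First, the two ``new'' strata are never shown to be distinct. From $x\in V\setminus\mathrm{span}(s_1)$ you extract at least one stratum outside $\mathcal{S}^\perp(s_1)\cup\mathcal{S}^\perp(s_2)$, and symmetrically from some $z\in V\setminus\mathrm{span}(s_2)$ a second one, but nothing in your argument excludes that these coincide: a single stratum could pass through both $x$ and $z$, while the other stratum through $x$ is the already counted $s''\in\mathcal{S}^\perp(s_2)$ and the other stratum through $z$ is the already counted $s'\in\mathcal{S}^\perp(s_1)$, leaving you with only one extra stratum and at best an extra $2(1-\beta)$. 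Excluding this five-stratum circuit is exactly what \Cref{lemma:excess}(vi) is for, and it is only needed in the case $x\in\mathrm{span}(s_2)$, $z\in\mathrm{span}(s_1)$ (when $x\notin\mathrm{span}(s_1)\cup\mathrm{span}(s_2)$ the two strata through $x$ already suffice); you neither invoke (vi) nor separate these cases.

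Second, the weight you assign to the two new strata is not justified. You cannot ``arrange'' their directions: they are determined by the configuration, and nothing prevents both of them from being interacting, in which case they contribute only $2\cdot 2(1-\beta)=4-4\beta<4-2\beta$. (Moreover, the dichotomy ``parallel vs.\ perpendicular to the substrate'' is not available for a general $\varepsilon$-regular graph with small angle excess; a component far from the substrate may be rotated, and a degenerate stratum has no direction at all. The only usable facts are \Cref{lemma:excess}(iv),(vi),(vii).) The paper closes precisely this case by a compensation argument rather than by orientation: if both new strata $\hat{s}_1,\hat{s}_2$ are interacting, then by \Cref{lemma:excess}(iv) the strata $s'\in\mathcal{S}^\perp(s_1)$ and $s''\in\mathcal{S}^\perp(s_2)$ which they respectively cross are non-interacting; since by (vii) the interacting strata inside $\mathcal{S}^\perp(s_1)\cup\mathcal{S}^\perp(s_2)$ lie in a single family, and that family contains $s'$ or $s''$, its interacting count is at most $l(s_i)-1$, and the lost $2\beta$ is recovered, giving $F_{\mathrm{bond},\beta}(G)\ge 2(1-\beta)\big(\max_{s\in s_1,s_2}l(s)+1\big)+2\big(\min_{s\in s_1,s_2}l(s)+1\big)$. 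Your proof of (ii) needs both of these ingredients; contrary to your closing remark, the case of two noninteracting $s_1,s_2$ in (i) is not the main obstacle.
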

\begin{proof} 
By  \eqref{eq: smalli} and \Cref{rem: eps} we observe that all properties in \Cref{lemma:excess} hold. Therefore, also the properties in \Cref{lemma:opengraph} are satisfied.

\noindent\emph{Proof of {\rm (i)}:} First of all, by \Cref{lemma:excess}(i) all $s\in \Gamma$ (and thus all $s \in \mathcal{S})$ are open.  By~\Cref{lemma:excess}(iii) we have $\mathcal{S}^\perp(s_1) \cap \mathcal{S}^\perp(s_2)=\emptyset$.  Moreover, note that, if there is $s_0 \in  \mathcal{S}^\perp(s_1) $ (or $s_0 \in  \mathcal{S}^\perp(s_2) $, respectively)  such that $s_0 \in \mathcal{S}_{\rm int}$, then,  due to \Cref{lemma:excess}(vii), all $\hat{s} \in \mathcal{S}^\perp(s_2)$ (or $\hat{s} \in \mathcal{S}^\perp(s_1)$, respectively)   satisfy $\hat{s} \in \mathcal{S}_{\rm no}$.   Together with  \Cref{lemma:opengraph}(ii) and   \Cref{lemma:excess}(ii), this implies
\begin{align*}
 F_{\mathrm{bond},\beta} (G)  =  2(1-\beta) \#\mathcal{S}_{\rm int} + 2\#\mathcal{S}_{\rm no} \geq 2(1-\beta)\max_{s \in s_1,s_2} l(s) + 2\min_{s \in s_1,s_2} l(s) \,.
\end{align*}
This concludes the proof of (i). \\ 
\noindent\emph{Proof of {\rm (ii)}:} If \eqref{incl:nospan} holds, then we can find $x \in V \setminus \mathrm{span}(s_1)$. There are two cases to consider:
\begin{itemize}
\item[(a)] $x \notin \mathrm{span}(s_2)$;
\item[(b)] $x \in \mathrm{span}(s_2)$.
\end{itemize}
\noindent \emph{Case {\rm (a)}:} In this case, there are two strata $\hat{s}_1,\hat{s}_2 \in \mathcal{S} \setminus\left( \mathcal{S}^\perp(s_1) \cup \mathcal{S}^\perp(s_2)\right)$ such that $x \in  \hat{s}_1\cup\hat{s}_2$ and thus $\hat{s}_1 \in \mathcal{S}^\perp(\hat{s}_2)$. By \Cref{lemma:excess}(iv) at most one of the two strata $\hat{s}_1, \hat{s}_2$ belongs to $\mathcal{S}_{\rm int}$. Now, as in the proof of (i), we obtain 
\begin{align*}
 F_{\mathrm{bond},\beta}  (G)  =  2(1-\beta) \#\mathcal{S}_{\rm int} + 2\#\mathcal{S}_{\rm no} \geq  2(1-\beta) \max_{s \in s_1,s_2} (l(s)+1) + 2\min_{s \in s_1,s_2} (l(s)+1) \,.
\end{align*}
This concludes (ii) in case (a). \\ 
\noindent \emph{Case {\rm (b)}:}  As $\mathrm{span}(s_2) \subsetneq V$, there exists $z \in V\setminus \mathrm{span}(s_2)$. We can assume that $z \in \mathrm{span}(s_1)$  as otherwise the estimate follows by repeating the argument in (a).  Now, due to \Cref{lemma:excess}(iii), for $s_1^\perp,s_2^\perp\in \mathcal{S}^\perp(s_2)$ we have that $s_1^\perp \cap s_2^\perp=\emptyset$,  and thus   there exists  exactly  one stratum $s''\in S^\perp(s_2)$ such that $s''\cap \{x\} \neq \emptyset$. In the same fashion, there exists exactly one  $s' \in S^\perp(s_1)$  such that $s'\cap \{z\} \neq \emptyset$.  We also note that $s\cap \{x\} = \emptyset$ for all $s\in \mathcal{S}^\perp(s_1)$ and $s\cap \{z\} = \emptyset$ for all $s\in \mathcal{S}^\perp(s_2)$ since $x \in V \setminus \mathrm{span}(s_1)$ and $z \in V\setminus \mathrm{span}(s_2)$. By~\Cref{lemma:opengraph}(iii),  for  $x\in V$ there exist two strata $\hat{s}_2, s''$ such that $x\in \hat{s}_2, s''$ and therefore (by definition) $\hat{s}_2 \in  S^\perp(s'')  $. Similarly, there exist two strata $\hat{s}_1, s'$ such that $ z \in \hat{s}_1, s''$ and therefore $\hat{s}_1 \in S^\perp(s')$.  We get 
\begin{align}\label{5X}
\hat{s}_1, \hat{s}_2 \notin \mathcal{S}^\perp(s_1), \quad \quad  \hat{s}_1, \hat{s}_2 \notin \mathcal{S}^\perp(s_2)\,.
\end{align}
Indeed,  as $x \in s''$ with $s'' \in \mathcal{S}^\perp(s_2)$, by~\Cref{lemma:excess}(iii) we have that  $\hat{s}_2 \notin \mathcal{S}^\perp(s_2)$ and, by the same reasoning, $\hat{s}_1 \notin \mathcal{S}^\perp(s_1)$. As $x \notin \mathrm{span}(s_1)$, we clearly have that $\hat{s}_2 \notin \mathcal{S}^\perp(s_1)$ and analogously $\hat{s}_1 \notin \mathcal{S}^\perp(s_2)$. 

Moreover,  by~\Cref{lemma:excess}(vi) we have that $\hat{s}_1 \neq \hat{s}_2$, and, if both strata $\hat{s}_1, \hat{s}_2$ belong to $\mathcal{S}_{\rm int}$,  we necessarily have    $s', s''\in \mathcal{S}_{\rm no}$ by \Cref{lemma:excess}(iv).  Arguing as in the proof of (i) and taking \eqref{5X} into account, we  therefore obtain 
\begin{align*}
 F_{\mathrm{bond},\beta} (G)  =  2(1-\beta) \#\mathcal{S}_{\rm int} + 2\#\mathcal{S}_{\rm no} \geq 2(1-\beta)\max_{s \in s_1,s_2} (l(s)+1) + 2\min_{s \in s_1,s_2} (l(s)+1) \,.
\end{align*}
This concludes the proof. 
\end{proof}

\begin{lemma}[Estimate on $F_{\mathrm{bond},\beta}$, $\beta \ge 1$]\label{lem:orthogonalstrata2}  Let $0 <\eps < \eps_0  $ and $\beta \ge 1$. Let  $G=(V,E)$ be an $\eps$-regular graph satisfying \eqref{eq: smalli}. Then,  $ F_{\mathrm{bond}, \beta }  (G) \geq 2(1-\beta) N + 2$ with equality if and only if $\#\mathcal{S}_{\rm int} = N$.  
\end{lemma}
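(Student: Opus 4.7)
The proof mirrors that of \Cref{lem:orthogonalstrata}(i), with the crucial sign reversal $2(1-\beta)\le 0$ that holds precisely when $\beta \ge 1$. Since \eqref{eq: smalli} together with $\eps$-regularity places us within the scope of \Cref{lemma:excess} and \Cref{lemma:opengraph}, the structural conclusions of those results are available throughout.

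I would first apply \Cref{lemma:opengraph}(ii) to write
\begin{align*}
F_{\mathrm{bond},\beta}(G) = 2(1-\beta)\,\#\mathcal{S}_{\rm int} + 2\,\#\mathcal{S}_{\rm no}\,.
\end{align*}
The proof of \Cref{lemma:opengraph}(ii) established the identity $\#\mathcal{S}_{\rm int} = \sum_{x \in V} \#\mathcal{N}_{\mathcal{L}^-}(x)$, and by \Cref{lemma:elementaryprop} each summand is at most $1$; hence $\#\mathcal{S}_{\rm int} \le N$. Since $2(1-\beta) \le 0$, this entails
\begin{align*}
2(1-\beta)\,\#\mathcal{S}_{\rm int} \ge 2(1-\beta)N\,,
\end{align*}
with equality (at least when $\beta>1$) if and only if $\#\mathcal{S}_{\rm int}=N$. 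To see that $\#\mathcal{S}_{\rm no}\ge 1$, I would pick any $x \in V$ and invoke \Cref{lemma:opengraph}(iii), which provides two strata $s(x), s^{\perp}(x) \in \mathcal{S}$ containing $x$ with $s^{\perp}(x) \in \mathcal{S}^{\perp}(s(x))$. By \Cref{lemma:excess}(iv), these two strata cannot both belong to $\mathcal{S}_{\rm int}$, so at least one of them lies in $\mathcal{S}_{\rm no}$. Summing the two lower bounds yields the desired estimate $F_{\mathrm{bond},\beta}(G) \ge 2(1-\beta)N + 2$.

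For the equality characterization, both inequalities above must saturate, so one needs $\#\mathcal{S}_{\rm int} = N$ together with $\#\mathcal{S}_{\rm no} = 1$. The main obstacle I foresee is identifying these two conditions with the single condition $\#\mathcal{S}_{\rm int}=N$ stated in the lemma: namely, one must show that $\#\mathcal{S}_{\rm int} = N$ alone forces $\#\mathcal{S}_{\rm no}=1$. The plan is to exploit the fact that $\#\mathcal{S}_{\rm int} = N$ means every particle carries a substrate bond, and hence lies at height close to $1$; combined with the orthogonality constraints of \Cref{lemma:excess}(iv)--(vii) and the tight angle tolerances, this should force the configuration to collapse to a single connected horizontal row. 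Then the unique horizontal stratum constitutes $\mathcal{S}_{\rm no}$ and the $N$ vertical degenerate strata (one per particle) exhaust $\mathcal{S}_{\rm int}$. This structural reduction, which requires careful bookkeeping of degenerate versus non-degenerate strata in the spirit of the proof of \Cref{lemma:opengraph}(ii)(b), is the most delicate ingredient.
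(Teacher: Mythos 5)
Your proof of the inequality is correct and is essentially the paper's argument: the paper likewise starts from the identity $F_{\mathrm{bond},\beta}(G)=2(1-\beta)\,\#\mathcal{S}_{\rm int}+2\,\#\mathcal{S}_{\rm no}$ of \Cref{lemma:opengraph}(ii) and deduces $\#\mathcal{S}_{\rm int}\le N$ and $\#\mathcal{S}_{\rm no}\ge 1$ from \Cref{lemma:opengraph}(iii) together with \Cref{lemma:excess}(iv) (each particle lies in exactly two mutually orthogonal strata, at most one of which can be interacting). Your alternative route to $\#\mathcal{S}_{\rm int}\le N$, via $\#\mathcal{S}_{\rm int}=\sum_{x\in V}\#\mathcal{N}_{\mathcal{L}^-}(x)$ and $\#\mathcal{N}_{\mathcal{L}^-}(x)\le 1$, is equivalent; note only that \Cref{lemma:elementaryprop} is stated for minimizers, whereas $G$ here is an arbitrary $\eps$-regular graph, but the bound $\#\mathcal{N}_{\mathcal{L}^-}(x)\le 1$ follows from $\eps$-regularity alone (cf.\ Step~2 of the appendix proof), and the paper itself invokes it in exactly this way inside the proof of \Cref{lemma:opengraph}(ii), so this is harmless.

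The genuine issue is the equality characterization, and you have correctly located it: equality forces $\#\mathcal{S}_{\rm no}=1$ and, for $\beta>1$, also $\#\mathcal{S}_{\rm int}=N$, but the converse needs $\#\mathcal{S}_{\rm int}=N\Rightarrow\#\mathcal{S}_{\rm no}=1$, which you leave as a sketch. Be aware, however, that the completion you propose (every particle bonded to the substrate forces a single connected horizontal row) cannot work at this level of generality, because nothing in the hypotheses makes $G$ connected: take two horizontal rows of lengths $k$ and $N-k$ at height $1$, placed far apart, with each particle bonded to the lattice point directly below. This graph is $\eps$-regular with zero angle excess, all $N$ degenerate strata are interacting, so $\#\mathcal{S}_{\rm int}=N$, yet $\#\mathcal{S}_{\rm no}=2$ and $F_{\mathrm{bond},\beta}(G)=2(1-\beta)N+4$. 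Thus the ``if'' direction of the equality clause fails for general graphs satisfying the hypotheses (and at $\beta=1$ even the ``only if'' direction yields only $\#\mathcal{S}_{\rm no}=1$, since the coefficient $2(1-\beta)$ vanishes). The paper's own proof is a one-liner that does not address this either; what is actually used later is the inequality, which for $\beta\ge 1$ coincides with $m_\beta(N)=2+2(1-\beta)N$, while the description of minimizers in the wetting regime is completed through \Cref{lemma:construction} and the connectivity/translation argument in the proof of \Cref{thm:crystallization}, not through this lemma's ``if'' clause. So: your inequality argument matches the paper; the equality step you flagged as delicate is indeed a gap in your write-up, but the repair you sketch would fail, and the honest fix is either to add connectedness (or minimality) as a hypothesis for the ``if'' direction or to state equality as ``$\#\mathcal{S}_{\rm int}=N$ and $\#\mathcal{S}_{\rm no}=1$.''
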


\begin{proof}
 \Cref{lemma:opengraph}(iii) and \Cref{lemma:excess}(iv) clearly imply $\#\mathcal{S}_{\rm int} \le N$ and $\#\mathcal{S}_{\rm no} \ge 1$.  This along with the identity  $F_{\mathrm{bond},\beta} (G)  =  2(1-\beta) \#\mathcal{S}_{\rm int} + 2\#\mathcal{S}_{\rm no}$  from  \Cref{lemma:opengraph}(ii) yields the statement. 
\end{proof}

\section{Proof of the main results}\label{sec: main}

 This section is devoted to the proofs of Theorem  \Cref{thm:crystallization} and \Cref{thm:fluctuation}.

\subsection{Crystallization}  \label{sec:thm:crystallization}
 For the proof of Theorem  \Cref{thm:crystallization}, our strategy is to  show that the minimum of $F_\beta$ is given by $m_\beta(N)$, and that it is attained by subsets of $\Z^2$ that touch the substrate. In view of \eqref{eq: relation}, this will show the result.  Recall the definition of $G_{\mathrm{nat}}$ in \eqref{eq: relation-new}.  We first state the following upper bound.

\begin{lemma}(Upper bound) \label{lemma:energybound}Let $C_N$ be a minimizer of \eqref{eq: main energy}. Then, $G_{\mathrm{nat}}$ satisfies
\begin{align*}
 F_\beta  (G_{\mathrm{nat}}) \leq m_\beta(N)\,.
\end{align*}
\end{lemma}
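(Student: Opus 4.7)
The plan is to exhibit, for each admissible height $h$, an explicit subcritical configuration in $\mathbb{Z}^2\cap\{x_2>0\}$ whose energy equals $2h+2(1-\beta)\lceil N/h\rceil$, and then invoke minimality of $C_N$.

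Fix $h\in\{1,\ldots,N\}$ and set $l:=\lceil N/h\rceil$, so that $N=(l-1)h+r$ for a unique $r\in\{1,\ldots,h\}$. Define the reference (step-shaped) rectangle
\[
C_N^{(h)}:=\bigl(\{1,\ldots,l-1\}\times\{1,\ldots,h\}\bigr)\cup\bigl(\{l\}\times\{1,\ldots,r\}\bigr)\subset \mathbb{Z}^2\cap\{x_2>0\},
\]
which sits directly on the substrate. Because $1<r_0<\sqrt{2}$ by $({\rm ii}_2)$, the natural bond graph $G_{\mathrm{nat}}^{(h)}$ contains only unit horizontal/vertical bonds, and every substrate interaction consists of the unit bond between a particle at $(i,1)$ and the site $(i,0)\in\mathcal{L}^-$.

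The key computation is then a direct count. Counting horizontal bonds row by row yields $r(l-1)+(h-r)(l-2)=hl+r-2h$, and vertical bonds column by column yield $(l-1)(h-1)+(r-1)=lh-l-h+r$, for a total of $|E|=2lh-3h+2r-l$. Hence
\[
\sum_{x\in V}\bigl(4-\#\mathcal{N}(x,E)\bigr)=4N-2|E|=4\bigl((l-1)h+r\bigr)-2(2lh-3h+2r-l)=2l+2h,
\]
independently of $r$. Since only the $l$ particles in row $1$ contact $\mathcal{L}^-$, each with exactly one substrate neighbor, $\sum_{x}2\beta\#\mathcal{N}_{\mathcal{L}^-}(x)=2\beta l$, and therefore by \eqref{def:Fbond},
\[
F_{\mathrm{bond},\beta}\bigl(G_{\mathrm{nat}}^{(h)}\bigr)=2h+2(1-\beta)l.
\]
All bonds have length $1$ and all bond angles lie in $\{\pi/2,\pi,3\pi/2\}$, so by $({\rm i}_2)$ and $({\rm ii}_3)$ the excess energy vanishes, giving $F_\beta(G_{\mathrm{nat}}^{(h)})=2h+2(1-\beta)\lceil N/h\rceil$.

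To conclude, use \eqref{eq: relation} together with the minimality of $C_N$ to obtain $F_\beta(G_{\mathrm{nat}})\le F_\beta(G_{\mathrm{nat}}^{(h)})$ for every admissible $h$, and take the minimum over $h\in\{1,\ldots,N\}$ as in \eqref{def:mbetan}. The only subtlety worth checking is the case $\beta\ge 1$, where the minimum is attained at $h=1$ (cf.\ \Cref{rem:wetting}); otherwise the argument is uniform in $\beta>0$. No real obstacle is expected in this lemma: it is the straightforward ``upper bound half'' of the crystallization theorem, and its role is merely to provide the target against which the stratification-based lower bound from \Cref{lem:orthogonalstrata} and \Cref{lem:orthogonalstrata2} will be matched.
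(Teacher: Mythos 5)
Your proposal is correct and follows essentially the same route as the paper: both exhibit the step-shaped rectangle competitor in $\mathbb{Z}^2\cap\{x_2>0\}$ resting on the substrate, verify that its bond energy equals $2h+2(1-\beta)\lceil N/h\rceil$ with vanishing excess (the paper leaves this as ``directly checking'', which you carry out explicitly), and conclude by minimality of $C_N$ via \eqref{eq: relation}. The only cosmetic difference is that you build the competitor for every $h$ and minimize at the end, while the paper fixes an optimal $h$ from \eqref{def:mbetan} at the outset.
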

\begin{proof} Fix $N \in \mathbb{N}$ and let $h \in \{1,\ldots,N\}$ be such that $2h+ 2(1-\beta) \big\lceil \frac{N}{h}\big\rceil = m_\beta(N)$. We have two cases:
\begin{itemize}
\item[(i)] $\frac{N}{h} \notin \mathbb{N}$. In this case, $N=h\lfloor \frac{N}{h}\rfloor +k$ for some $k \in \{1,\ldots,h-1\}$, and we define 
\begin{align*}
 \bar{C}_N  =\left(\left\{1,\ldots,\left\lceil \frac{N}{h}\right\rceil-1\right\} \times \{1,\ldots,h\}\right) \cup \left( \left\{\left\lceil \frac{N}{h}\right\rceil \right\}\times\{1,\ldots,k\}\right)\,.
\end{align*}
\item[(ii)] $\frac{N}{h} \in \mathbb{N}$. Here, one chooses $ \bar{C}_N  = \{1,\ldots\frac{N}{h}\} \times \{1,\ldots,h\}$.
\end{itemize}
%
%
 In view of \eqref{def:Fbond},  by directly checking that $F_\beta(\bar{G}_{\rm nat} ) =F_{\mathrm{bond},\beta}(\bar{G}_{\rm nat}) = 2h+ 2(1-\beta) \big\lceil \frac{N}{h}\big\rceil = m_\beta(N)$, where $ \bar{G}_{\rm nat} $ denotes the natural bond graph related to $\bar{C}_N$, one concludes the proof.
\end{proof}

 Given the upper bound of \Cref{lemma:energybound}, the proof of
\Cref{thm:crystallization} follows by checking the corresponding lower
bound. 


\begin{proof}[Proof of \Cref{thm:crystallization}] Let $C_N$ be a minimizer of \eqref{eq: main energy}.  Then, its natural bond graph $G_{\mathrm{nat}}$ is $\eps$-regular by \Cref{lemma:elementaryprop}.  We denote by $G_\mathrm{o}=(V,E_\mathrm{o})$ the graph obtained in \Cref{lemma:construction}. The graph $G_\mathrm{o}$ is also $\varepsilon$-regular and satisfies  
\begin{align*}
\max_{\gamma \in \Gamma(G_o)}\theta_{\mathrm{ex}}(\gamma) < \pi/10- 7\varepsilon/5\,,  
\end{align*} 
 i.e., all properties derived in Section \ref{subsec:stratbondgraph} hold for the graph $G_\mathrm{o}$. 
 The main part of the proof consists in verifying  
\begin{align}\label{eq: main claim}
F_{\mathrm{bond},\beta}(  G_{\mathrm{o}}  ) \geq m_\beta(N)\,.
\end{align}
 Once  \eqref{eq: main claim} is proven,  we conclude as
 follows. First, \eqref{eq: main claim0} holds due to
 \Cref{lemma:construction}(ii)  and \eqref{eq: relation}. To
 characterize the equality case, we get from
 \Cref{lemma:construction} that  $G=G_\mathrm{o}$, that all bond
 lengths are $1$ (including the ones between $V$ and $\mathcal{L}^-$),
 and all bond angles belong to $\{\pi/2,\pi,3/2\pi\}$. This shows that
 each connected component (in the sense of graph theory) of $G$ lies
 in a rotated and shifted version of $\Z^2$. If there was more than
 one connected component, one could obtain a modified configuration
 with an additional bond. This contradicts minimality. Furthermore, if
 $\mathcal{S}_{\rm int} =\emptyset$, one could obtain a modified
 configuration with strictly less energy than the original one by
 placing a rotated and shifted version of $G$ in such a way  that $V
 \subset \Z^2$ and $ \mathcal{N}_{\mathcal{L}^-}(x) \neq \emptyset$
 for some $x \in V$.  As a consequence, we get that  $ V\subset \Z^2
  \cap  \lbrace x_2>0 \rbrace $  with length   $l= \#\{k\colon
 (\{k\} \times \mathbb{Z}) \cap V \neq \emptyset\}$ and height $h
 =\#\{k\colon (\mathbb{Z}\times \{k\}) \cap  V \neq \emptyset\}$, where
 the properties stated in  \Cref{rem:convexity etc.}(i),(ii) 
  follow    by repeating the  argument  in~\cite[Proposition~6.3]{Mainini-Piovano}.  This also shows \Cref{rem:convexity etc.}(iii)  and eventually the remaining property   
  $\#(V \cap \{x_2=1\}) \geq   N/ h_*(\beta,N) $.

We now show  \eqref{eq: main claim}.  In the following, for simplicity  we write $\mathcal{S}$, $\mathcal{S}_{\rm int}$,  and   $\mathcal{S}_{\rm no}$ in place of $\mathcal{S}(G_{\mathrm{o}})$, $\mathcal{S}_{\rm int}(G_{\mathrm{o}})$, and  $\mathcal{S}_{\rm no}(G_{\mathrm{o}})$. The case $\beta \ge 1$ directly follows from \Cref{lem:orthogonalstrata2}, so we focus on the case $\beta \in (0,1)$.   Recalling \eqref{def:span} we distinguish two cases:
\begin{itemize}
\item[(i)] There exists $s\in \mathcal{S}$ such that
\begin{align}\label{eq:spanning}
 {\rm span}(s)    =V\,;
\end{align} 
\item[(ii)] For all $s \in \mathcal{S}$  we have
\begin{align}\label{incl:strict}
 {\rm span}(s)   \subsetneq V\,. 
\end{align} 
\end{itemize}

\noindent \emph{Case (i):} Select $s\in \mathcal{S}$ such that  \eqref{eq:spanning} holds true. By \eqref{eq:spanning} and \Cref{lemma:excess}(ii),(iii), choosing some $\hat{s} \in {\rm argmax}_{s' \in \mathcal{S}^\perp(s)}\, l(s')  $ we get $l(s)\cdot l(\hat{s}) \ge \sum_{s' \in \mathcal{S}^\perp(s)}\, l(s')    = N$.  Assume without loss of generality that $l(\hat{s})  \geq l(s)$. Then, $l(\hat{s}) \geq \lceil N/l(s)\rceil$. Using \Cref{lem:orthogonalstrata}(i), we obtain
\begin{align*}
F_{\mathrm{bond},\beta}(  G_{\mathrm{o}}) \geq 2l(s) + 2(1-\beta)  l(\hat{s}) \geq  2 l(s)  + 2(1-\beta) \lceil N/l(s)\rceil \geq m_\beta(N)\,.
\end{align*}
This concludes Case (i).

\noindent \emph{Case (ii):}  Our goal is to find $N_\beta$ such that for all $N \ge N_\beta$ the estimate
\begin{align}\label{ineq:cardstrata}
2\#\mathcal{S}_{\rm no} +  2(1-\beta)\#\mathcal{S}_{\rm int} = F_{\mathrm{bond},\beta}( G_{\mathrm{o}} ) > m_\beta(N)
\end{align}
 holds which shows that this case can never occur for a minimizer. Define $p \colon \mathcal{S} \to \lbrace \frac{1}{2} ,  \frac{1}{2(1-\beta)} \rbrace  $ by
\begin{align}\label{def:p}
p(s) = \begin{cases} \frac{1}{2} &\text{if } s \in \mathcal{S}_{\rm no}\,,\\
\frac{1}{2(1-\beta)} &\text{if } s \in \mathcal{S}_{\rm int}\,.\\
\end{cases}
\end{align}
We prove that there exists $x_0 \in V$ such that
\begin{align}\label{ineq:biggestcross}
p(s(x_0)) l(s(x_0)) + p(s^\perp(x_0)) l(s^\perp(x_0)) \geq F_{\mathrm{bond},\beta}(G_{\mathrm{o}})^{-1} 4N\,,
\end{align}
where $s(x_0),s^\perp(x_0)\in \mathcal{S}$ are the two strata such that $ x_0 \in   s(x_0)\cap s^\perp(x_0)$, see \Cref{lemma:opengraph}(iii).  To this end, define $\mu(s)  = \frac{1}{p(s)}$. By \Cref{lemma:opengraph}(ii) we have 
\begin{align}\label{eq:masmu}
\mu(\mathcal{S}):= \sum_{s \in \mathcal{S}} \mu(s) = 2\#\mathcal{S}_{\rm no} + 2(1-\beta) \#\mathcal{S}_{\rm int}=F_{\mathrm{bond},\beta}(G_{\mathrm{o}})\,.
\end{align}
Now,  by Jensen's inequality together with \Cref{lemma:opengraph}(i) and \eqref{eq:masmu} we obtain
\begin{align}\label{ineq:jensen1}
\sum_{s\in\mathcal{S}} \left(p(s)l(s)\right)^2\mu(s)  &\geq \mu(\mathcal{S})^{-1} \left(\sum_{s\in\mathcal{S}} p(s)l(s)\mu(s)\right)^2 = \mu(\mathcal{S})^{-1} \left(\sum_{s\in \mathcal{S}}l(s)\right)^2 \notag \\&= \mu(\mathcal{S})^{-1} 4N^2 = F_{\mathrm{bond},\beta}(G_{\mathrm{o}})^{-1} 4 N^2\,.
\end{align}
On the other hand, letting $s(x),s^\perp(x)\in \mathcal{S}$ be the two strata such that $ x \in   s(x)\cap s^\perp(x)$, exchanging the order of summation, and using \Cref{lemma:opengraph}(iii), we obtain 
\begin{align}\label{eq:crosssum}\begin{split}
\sum_{s\in\mathcal{S}} \left(p(s)l(s)\right)^2\mu(s)  &= \sum_{s\in \mathcal{S}} p(s)l(s)^2 = \sum_{s \in \mathcal{S}} \sum_{x \in s} p(s)l(s) = \sum_{s \in \mathcal{S}} \sum_{x \in V} p(s)l(s) \mathrm{1}_s(x) \\&=   \sum_{x \in V} \sum_{s \in \mathcal{S}} p(s)l(s) \mathrm{1}_s(x) = \sum_{x\in V} \big(p(s(x))l(s(x)) + p(s^\perp(x))l(s^\perp(x)) \big)\,.
\end{split}
\end{align}
 Inequality  \eqref{ineq:biggestcross} follows from \eqref{ineq:jensen1} and \eqref{eq:crosssum} by selecting $x_0 \in V$ such that $p(s(x_0))l(s(x_0)) + p(s^\perp(x_0))l(s^\perp(x_0)) \geq p(s(x))l(s(x)) + p(s^\perp(x))l(s^\perp(x))$ for all $x \in V$.
 Next,  using \eqref{incl:strict} and~\Cref{lem:orthogonalstrata}(ii)  for $s(x_0)$ and $s^\perp(x_0)$,  we obtain   
\begin{align*}
F_{\mathrm{bond},\beta}(G_{\mathrm{o}}) \geq 2(1-\beta) \max_{s \in s(x_0),s^\perp(x_0)}  l(s) + 2 \min_{s \in s(x_0),s^\perp(x_0)}  l(s)  +  (4- 2 \beta)\,.
\end{align*}
Now, by \eqref{def:p} and by multiplying  \eqref{ineq:biggestcross} with $4(1-\beta) $ we obtain  
\begin{align*}
F_{\mathrm{bond},\beta}(G_{\mathrm{o}}) \geq F_{\mathrm{bond},\beta}(G_{\mathrm{o}})^{-1} 16(1-\beta)  N + (4- 2\beta)\,.
\end{align*}
By solving the quadratic equation $t^2-(4-2\beta)t -16(1-\beta)  N \geq 0$ for $t \geq 0$, we obtain 
\begin{align}\label{ineq:lowerboundfbond}
F_{\mathrm{bond},\beta}(G_{\mathrm{o}}) \geq \sqrt{(2-\beta)^2 +16(1-\beta) N} + 2-\beta\,.
\end{align}
Let $ \overline{h} = \lceil ((1-\beta) N)^{1/2}\rceil$, $h_c =
((1-\beta) N)^{1/2}$, and $f_N(h) = 2h + 2(1-\beta) \frac{N}{h}$. We
then have $|h_c-\overline{h}|\leq 1$ and using that $\min_h f_N(h) =
f_N(h_c)$, by Taylor   expansion,  together with the fact that $\overline{h} \geq h_c$ and thus $f''_N(h_c)\geq f''_N(h)$ for all $h \in [h_c,\overline{h}]$, we get
\begin{align*}
2\overline{h}+ 2(1-\beta) \frac{N}{\overline{h}}= f_N(\overline{h}) \leq  f_N(h_c) + \frac{1}{2}f_N''(h_c)(\overline{h}-h_c)^2 =4\sqrt{(1-\beta)  N} +  2  (1-\beta)^{-\frac{1}{2}}N^{-\frac{1}{2}}\,.
\end{align*}
Therefore, using \eqref{ineq:lowerboundfbond}, we obtain 
\begin{align*}
m_\beta(N) &\leq 2\overline{h} + 2(1-\beta) \left\lceil \frac{N}{\overline{h}} \right\rceil \leq  2\overline{h} + 2(1-\beta)\left( \frac{N}{\overline{h}} +1\right)\\&\leq  2(1-\beta) +4\sqrt{(1-\beta) N} + 2 (1-\beta)^{-\frac{1}{2}}N^{-\frac{1}{2}} \\& <  - \beta + 2 (1-\beta)^{-\frac{1}{2}}N^{-\frac{1}{2}}  +F_{\mathrm{bond},\beta}(G_{\mathrm{o}}) \,.
\end{align*}
 For $N \geq N_\beta  \defas   \beta^{-2} 4  (1-\beta)^{-1}$ this shows \eqref{ineq:cardstrata} and concludes the proof.
\end{proof}


\subsection{Fluctuation estimates}\label{sec:fluctuation}

The goal of this section is to prove \Cref{thm:fluctuation}. 
Notice that we assume  $\beta \in (0,1)$ throughout this section. 

\begin{definition}\label{def:energies1d2d}
Given $N \in \mathbb{N}$ we define $\mathcal{H}_{\beta,N} \colon \{1,\ldots,N\} \to \mathbb{R}$   and $\mathcal{L}_{\beta,N} \colon \{1,\ldots,N\} \to \mathbb{R}$ by
\begin{align}\label{def:HL}
\mathcal{H}_{\beta,N}(h) = 2h + 2(1-\beta) \left\lceil \frac{N}{h} \right\rceil\,, \quad  \mathcal{L}_{\beta,N}(l) = 2\left\lceil \frac{N}{l} \right\rceil + 2(1-\beta) l\,,
\end{align}
and $\mathcal{E}_\beta \colon \mathbb{N} \times \mathbb{N} \to \mathbb{R}$ by 
\begin{align*}
\mathcal{E}_\beta(h,l) = 2h + 2(1-\beta) l\,.
\end{align*}
Moreover, we set
\begin{align}\label{def:hstar}
h_* =(1-\beta)^{1/2}  N^{1/2} \,, \quad l_* = (1-\beta)^{-1/2}  N^{1/2} \,.
\end{align}
\end{definition}
We observe that 
\begin{align*}
m_\beta(N) &= \min_h \mathcal{H}_{\beta,N}(h) =\min_l \mathcal{L}_{\beta,N}(l) = \min \left\{\mathcal{E}_\beta(h,l) \colon h,l \in \mathbb{N}, h\cdot l \geq N \right\}\,,
\end{align*}
and 
\begin{align*}
2h_* + 2(1-\beta) \frac{N}{h_*} = \min_{h>0} \Big(2h +  2(1-\beta) \frac{N}{h}\Big)\,, \ \ \quad    2\frac{N}{l_*} +  2(1-\beta)  l_*= \min_{l>0} \Big(2\frac{N}{l} + 2(1-\beta)  \Big)\,.
\end{align*}
Thus, as the next lemma shows, $h_*$ and $l_*$ provide good reference values for minimizers of $\mathcal{H}_{\beta,N}$  and $\mathcal{L}_{\beta,N}$, respectively.

\begin{lemma} \label{lem:quadratic}  
There exists a constant $C_\beta>0$ such that for all minimizers  $h \in \mathbb{N}$ and  $l \in \mathbb{N}$  of $\mathcal{H}_{\beta,N}$ and  $\mathcal{L}_{\beta,N}$, respectively,  we have
\begin{align}\label{ineq:fluctuations}
|h-h_*| \leq C_\beta N^{1/4}\,, \quad \quad  |l-l_*|\leq C_\beta N^{1/4} \,,
\end{align}
where $h_*$ and $l_*$ are given by \eqref{def:hstar}. Furthermore, if $\beta \in \mathbb{R} \setminus \mathbb{Q}$, then the minimizers of $\mathcal{H}_{\beta,N}$ and $\mathcal{L}_{\beta,N}$ are unique.
\end{lemma}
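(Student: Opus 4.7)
The plan is to compare the discrete functions $\mathcal{H}_{\beta,N}$ and $\mathcal{L}_{\beta,N}$ with their continuous counterparts $f(h) = 2h + 2(1-\beta)N/h$ and $g(l) = 2N/l + 2(1-\beta)l$, whose unique minima over the positive reals are attained precisely at $h_*$ and $l_*$, with common value $4\sqrt{(1-\beta)N}$, and to exploit the quadratic behavior of $f$ and $g$ near those minima. Throughout, I treat $\mathcal{H}_{\beta,N}$; the argument for $\mathcal{L}_{\beta,N}$ is verbatim the same after swapping the roles of $h$ and $l$.

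For the upper bound on the minimal value, I evaluate $\mathcal{H}_{\beta,N}$ at the integer $\bar h = \lceil h_*\rceil$, which satisfies $|\bar h - h_*|\leq 1$. Using $\lceil N/\bar h\rceil \leq N/\bar h + 1$ and a first-order Taylor estimate of $f$ about $h_*$ (noting $f'(h_*)=0$ and $f''(h_*) = 2/h_*$), one gets
\[
m_\beta(N) \leq \mathcal{H}_{\beta,N}(\bar h) \leq f(\bar h) + 2(1-\beta) \leq f(h_*) + C_\beta,
\]
for a constant $C_\beta>0$ independent of $N$ (the Taylor remainder $(h-h_*)^2/h_*$ stays bounded for $|h-h_*|\leq 1$ and $h_*$ bounded away from $0$; small-$N$ cases are absorbed by enlarging $C_\beta$). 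For the lower bound, I use the pointwise inequality $\mathcal{H}_{\beta,N}(h)\geq f(h)$ together with strict convexity: since $f''(t) = 4(1-\beta)N/t^3$, integration gives
\[
f(h) - f(h_*) \geq \frac{(h-h_*)^2}{4h_*} \qquad \text{for all } h \in [h_*/2, 2h_*].
\]
If $h$ is a minimizer of $\mathcal{H}_{\beta,N}$, then $2h\leq \mathcal{H}_{\beta,N}(h)\leq f(h_*)+C_\beta = 4h_*+C_\beta$ forces $h\leq 2h_*+C_\beta/2$, and a symmetric estimate based on $2(1-\beta)\lceil N/h\rceil\leq f(h_*)+C_\beta$ keeps $h$ bounded below by $h_*/2$; for the finitely many small $N$ where this range condition could fail, the claim is trivial by enlarging the constant. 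Combining the two bounds yields $(h-h_*)^2 \leq 4h_* C_\beta$, and therefore $|h-h_*|\leq C'_\beta N^{1/4}$. The same argument, with $g$, $l_*$, $\mathcal{L}_{\beta,N}$ in place of $f$, $h_*$, $\mathcal{H}_{\beta,N}$, yields the bound for $l$.

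For the uniqueness statement when $\beta$ is irrational, assume two distinct integer minimizers $h_1\neq h_2$ of $\mathcal{H}_{\beta,N}$. Equality of values rearranges to
\[
h_1-h_2 = (1-\beta)\bigl(\lceil N/h_2\rceil - \lceil N/h_1\rceil\bigr).
\]
If the right-hand parenthesis vanished, we would get $h_1=h_2$, a contradiction; otherwise the identity expresses $1-\beta$ as the ratio of two nonzero integers, contradicting $\beta\notin\mathbb{Q}$. The same dichotomy applied to $\mathcal{L}_{\beta,N}$ gives uniqueness of $l$. The only subtle point, rather than a genuine obstacle, is verifying that a minimizer lies inside the window $[h_*/2,2h_*]$ where the quadratic lower bound applies; this is handled uniformly by adjusting the constant $C_\beta$ to cover the small-$N$ regime.
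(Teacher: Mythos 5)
Your proof is correct and follows the same overall strategy as the paper: test the discrete energy at the competitor $\lceil h_*\rceil$, compare with the continuous function $f(h)=2h+2(1-\beta)N/h$, and exploit its quadratic behavior near $h_*$; the uniqueness argument for irrational $\beta$ is identical to the paper's (you make explicit the case where the ceiling difference vanishes, which the paper leaves implicit). The one place where you diverge is the final conversion step: the paper multiplies the inequality $2h+2(1-\beta)N/h\le 4-2\beta+4\sqrt{(1-\beta)N}$ by $h$ and solves the resulting quadratic in $h$ explicitly, which delivers $|h-h_*|\le C_\beta N^{1/4}$ directly with no localization; you instead use a strong-convexity lower bound $f(h)-f(h_*)\ge (h-h_*)^2/(4h_*)$ valid on a window, which forces you to first localize the minimizer. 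As written this localization is slightly off: your bounds only give $h\le 2h_*+C_\beta/2$ and $h\ge 2h_*^2/(4h_*+C_\beta)$, which for every $N$ lie marginally \emph{outside} the stated window $[h_*/2,2h_*]$, so the convexity estimate does not literally apply; the fix is routine (state the convexity bound on, say, $[h_*/4,4h_*]$ with the constant $1/(32h_*)$, which covers the localized range for all $N\ge N_\beta$, and absorb the finitely many remaining $N$ into $C_\beta$), but you should make it. Also note the harmless slip $f''(h_*)=4/h_*$, not $2/h_*$; it only affects the unimportant constant in the upper bound $m_\beta(N)\le f(h_*)+C_\beta$.
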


\begin{proof} 
Note that the proofs   for minimizers of  $\mathcal{H}_{\beta,N}$ and  $\mathcal{L}_{\beta,N}$ are analogous. We therefore only prove the statements for minimizers of  $\mathcal{H}_{\beta,N}$.  We first show that for $\beta \in \mathbb{R} \setminus \mathbb{Q}$ the minimizer of $\mathcal{H}_{\beta,N}$ is unique. To this end, assume by contradiction that there exist two minimizers $h_1,h_2 \in \{1,\ldots,N\}$. Then,
\begin{align*}
2h_1 +  2(1-\beta)\left\lceil \frac{N}{h_1} \right\rceil = m_\beta(N) = 2h_2 + 2(1-\beta) \left\lceil \frac{N}{h_2} \right\rceil\,,
\end{align*}
which is equivalent to $2(h_1-h_2) =  2(1-\beta) (\lceil \frac{N}{h_2}
\rceil -  \lceil \frac{N}{h_1}  \rceil)$.  However, this last
equation cannot hold  as $ 2(1-\beta)  \in \mathbb{R}\setminus \mathbb{Q}$.

Next, we prove \eqref{ineq:fluctuations}. Let $h \in \N$ be a minimizer of $\mathcal{H}_{\beta,N}$ and define $h^* = \lceil h_*\rceil$. Then
\begin{align*}
2h + 2(1-\beta) \Big\lceil \frac{N}{h} \Big\rceil  &=\mathcal{H}_{\beta,N}(h) \leq \mathcal{H}_{\beta,N}(h^*) = 2 \lceil h_*\rceil + 2(1-\beta) \left\lceil \frac{N}{\lceil h_*\rceil} \right\rceil \\&\leq 2(h_*+1) + 2(1-\beta) \left(\frac{N}{ h_*}+1\right) \\&= 4-2\beta +\min\left\{2h + 2(1-\beta)\frac{N}{h} \colon h>0 \right\} =4-2\beta + 4\left((1-\beta)  N\right)^{\frac{1}{2}}\,.
\end{align*}
Multiplying  by  $h$, we obtain
\begin{align*}
2h^2 -\left(4-2\beta + 4\left((1-\beta)  N\right)^{\frac{1}{2}}\right)h + 2(1-\beta) N \leq 0\,.
\end{align*}
  The left-hand side  is quadratic   in $h$ with
 zeros  at   $h_\pm = \frac{2-\beta}{2} +h_* \pm
 \frac{1}{4}\left((4-2\beta)^2 +16(2-\beta)h_*\right)^\frac{1}{2}$ and
 thus $h_- \leq h \leq h_+$.   Noting that  $h_* \leq N^{\frac{1}{2}}$  by \eqref{def:hstar},  this implies
\begin{align*}
|h-h_*| \leq \left(\frac{2-\beta}{2} +\frac{1}{4}32^\frac{1}{2}(2-\beta)^\frac{1}{2}\right)N^\frac{1}{4}\,.
\end{align*} 
By choosing  $C_\beta := \frac{2-\beta}{2} +\frac{1}{4}32^\frac{1}{2}(2-\beta)^\frac{1}{2} $    we conclude the proof.
\end{proof}

 We now first prove \Cref{thm:fluctuation} for $\beta \in \mathbb{Q}$. 

\begin{proof}[Proof of \Cref{thm:fluctuation},  upper bound  {\rm ($U_1$)}]
  Let  $\beta \in \mathbb{Q}$,  $N\geq N_\beta$,  with $N_\beta$ from
  \Cref{thm:crystallization},  and let $C_N$ be a minimizer of
  $\mathcal{F}_\beta$. By \Cref{thm:crystallization} we have that $C_N
  \subset \mathbb{Z}^2$.  By~\Cref{lem:quadratic} it suffices to
  prove that there exists a horizontal translation in $\mathbb{Z} \times \{0\}$ and   $C_\beta >0$  such that
\begin{align}\label{ineq:mainineqU1U2}
\#\left(C_{N}  \triangle \left(R_{\max}(\beta,N) +\tau\right)\right) \leq  C_\beta N^{3/4} \,,
\end{align}
where $R_{\max}(\beta,N) =\{1,\ldots,l_{\max}\}\times \{1,\ldots,h_{\max}\}$ with   
\begin{align*}
l_{\max}:= \max\{l \colon \mathcal{L}_{\beta,N}(l) =m_\beta(N)\}\,, \quad h_{\max}:= \max\{h \colon \mathcal{H}_{\beta,N}(h) =m_\beta(N)\}\,.
\end{align*}
Without loss of generality we can assume that the term on the
left-hand side of \eqref{ineq:mainineqU1U2} is minimized    for $\tau=0$. By~\Cref{lem:quadratic}, \Cref{rem:convexity etc.}, and~\Cref{def:energies1d2d}, we have that $C_N \subset R_{\max}(\beta,N)$.  As $\# C_N=N$, it suffices to show that
\begin{align*}
\#R_{\max}(\beta,N) \leq N+C_\beta N^{3/4}\,.
\end{align*}
Now,  by~\Cref{lem:quadratic} we have that $| h_{\rm max}  -h_*| \leq C_\beta N^{1/4}$ and  $ | l_{\rm max}  -l_*|\leq C_\beta N^{1/4} $ with $l_*,h_*$ given in \eqref{def:hstar}. As $l_*,h_* \leq C_\beta N^{1/2}$ and $l_*\cdot h_*=N$, this shows {\rm ($U_1$)}. 
\end{proof}

 \begin{proof}[Proof of \Cref{thm:fluctuation},  lower bound   {\rm ($L_1$)}]   Let again $\beta \in \mathbb{Q}$. Let $\bar{p},\bar{q} \in \mathbb{N}$, $\bar{p} <\bar{q}$, be such that $ 1-\beta  =\frac{\bar{p}}{\bar{q}}$. Let $0<\delta <   \min\{\tfrac{1}{2\bar{p}}, \tfrac{1}{2\bar{q}}\}$ with $\delta^{-1} \in \N$. Set $p = \delta^{-2} \bar{p}$ and $q = \delta^{-2} \bar{q}$.   Let $N_k = k^4 pq +(1-\delta)k^2p \in \N$  for each $k \in \N$.  We claim that
\begin{align}\label{eq:mbetaNk}
m_\beta(N_k) = 2k^2p + 2(1-\beta)(k^2q+1)\,.
\end{align}
We postpone the proof of \eqref{eq:mbetaNk} and show first how we can
conclude once  \eqref{eq:mbetaNk}  is proven. Using \eqref{eq:mbetaNk} and $ 1-\beta  =\frac{p}{q}$, we have that
\begin{align*}
C_{N_k}^1:= \left(\{1,\ldots,k^2q\}\times \{1,\ldots,k^2p\}\right) \cup \left(\{k^2q+1\} \times \{1,\ldots,(1-\delta)k^2p\}\right)
\end{align*}
and 
\begin{align*}
C_{N_k}^2:= &\left(\{1,\ldots,k^2q(1+\delta^2\frac{1}{k})\}\times \{1,\ldots,k^2p(1-\delta^2\frac{1}{k})\}\right) \\&\cup \left(\{k^2q(1+\delta^2\frac{1}{k})+1\}\times \{1,\ldots,(1-\delta)k^2p +\delta^4k^2pq\}\right)
\end{align*}
are minimizers for $N_k$. (In particular, it is elementary to check
that indeed both configurations consist of $N_k$  particles.    Here, we also
used that $k\delta \bar{q} <k - \delta$.)     Furthermore, since $  k^3\ge c_\beta N_k^{3/4}$ for some $c_\beta>0$ depending on
$\beta$,  for  all   $\tau \in \mathbb{Z}\times \{0\}$ we have  
\begin{align*}
\#(C_{N_k}^1\triangle (C_{N_k}^2+\tau)) \geq c_\beta   N_k^{3/4}\,.
\end{align*}
This shows that {\rm ($L_1$)}  must be satisfied either for $C_{N_k}^1$ or $C_{N_k}^2$.

It remains to prove \eqref{eq:mbetaNk}. Testing \eqref{def:mbetan} with $h=pk^2$, we obtain $m_\beta(N_k) \leq 2k^2p + 2(1-\beta) (k^2q+1)$. We now need to prove  $m_\beta(N_k) \geq 2k^2p + 2(1-\beta)(k^2q+1)$. To this end, observe first that for $N_k^0 \defas pqk^4$ we  get
\begin{align}\label{mbeta}
m_\beta(N_k^0) =2pk^2 + 2(1-\beta)  qk^2\,.
\end{align}
Indeed,  we have  $h_*=pk^2,l_*=qk^2 \in \mathbb{N}$ (defined in \eqref{def:hstar}  for $N_k^0$ in place of $N$)  and therefore 
\begin{align*}
2pk^2 + 2(1-\beta)  qk^2 = \mathcal{H}_{\beta, N_k^0  }(h_*) &\geq m_\beta(N_k^0) \geq \min_{h>0} \Big( 2h+ 2(1-\beta) \frac{ N_k^0 }{h}\Big) \\&= 2h_* + 2(1-\beta) \frac{ N_k^0 }{h_*} = 2pk^2 + 2(1-\beta)  qk^2\,.
\end{align*}
This shows \eqref{mbeta}.  By the convexity of  $h\mapsto \frac{N_k^0}{h}$ we further have
\begin{align}\label{mbeta2}
\frac{N_k^0}{h} \ge \frac{N_k^0}{h_*} - \frac{N_k^0}{h_*^2} (h-h_*) = \frac{N_k^0}{h_*} - \frac{q}{p}  (h-h_*) = \frac{N_k^0}{h_*} -  \frac{1}{1-\beta}     (h-h_*) \, .
\end{align}
For  any minimizer $h \in \{1,\ldots, N_k\}$ it holds that  $|h- h_*
| \leq C_\beta N_k^{1/4}  $ by~\Cref{lem:quadratic}.  Along
 with \eqref{mbeta}--\eqref{mbeta2}, for $k$ large enough  
this   shows \begin{align*}
2h + 2(1-\beta) \left\lceil \frac{N_k}{h}\right\rceil &= 2 h_*   +2(h- h_*  ) + 2(1-\beta)  \left\lceil \frac{N_k^0}{h}+(1-\delta)\frac{h_*}{h}\right\rceil  \\&
\geq m_\beta(N_k^0) +2(h-h_*) + 2(1-\beta)  \left\lceil  \frac{1}{1-\beta}    (h_*-h)+(1-\delta)\frac{h_*}{h}\right\rceil   \\&\geq  m_\beta(N_k^0) +2(h-h_*) + 2(1-\beta) \left\lceil  \frac{1}{1-\beta}    (h_*-h)+(1-2\delta)\right\rceil\,.
\end{align*}
Here, we used that $\frac{h_*}{h} \ge 1 - \delta$ for $k$ large enough
such that $C_\beta N_k^{1/4}\leq \frac{1}{2} h_* =
\frac{1}{2}(1-\beta)^{1/2}N_k^{1/2}$ and $ 2 C_\beta (1-\beta)^{-1/2}
N_k^{-1/4}\leq \delta $, recalling  \eqref{def:hstar} and
~\Cref{lem:quadratic}. In view of \eqref{mbeta}, in order to
conclude the proof of \eqref{eq:mbetaNk}, it  suffices to show
$2(h-h_*) + 2(1-\beta)  \left\lceil \frac{1}{1-\beta}
  (h_*-h)-2\delta\right\rceil \geq 0$. Recalling  that  $
1-\beta  =\frac{\bar{p}}{\bar{q}}$ and  setting   $\bar{q}(h-h_*) =\bar{p}l+r$ with $l \in \mathbb{Z}$, $r \in \{0,\ldots,\bar{p}-1\}$, we have
\begin{align*}
 \frac{1}{1-\beta}    (h-h_*) + \left\lceil \frac{1}{1-\beta}    (h_*-h)-2\delta\right\rceil = \frac{1}{\bar{p}}(\bar{p}l+r) +  \left\lceil -\frac{1}{\bar{p}}(\bar{p}l+r) -2\delta\right\rceil = \frac{r}{\bar{p}} +\left\lceil -\frac{r}{\bar{p}}-2\delta\right\rceil\geq 0\,,
\end{align*}
because $-\frac{r}{\bar{p}}-2\delta > -\frac{\bar{p}-1}{\bar{p}} -2 \frac{1}{2\bar{p}}=  - 1 $ as  $\delta< \frac{1}{2\bar{p}}$. This shows \eqref{eq:mbetaNk} and concludes the proof of~{\rm ($L_1$)}.
\end{proof}

 We now consider the case that $\beta \in \R \setminus \mathbb{Q}$ is algebraic. Denoting by $N$ the number of  particles  of a configuration and by $h$ its height, and  by $l$ its length, it is not restrictive to reduce  the problem to considering configurations where $(h-1)$-rows consist of $l$  particles  and the upmost row consists of at most $l$  particles,   cf.\ \Cref{rem:convexity etc.}.  Such configurations will be called \emph{$(h,l)$-configurations} in the following.   If also the upmost row consists of $l$  particles,   we call such a configuration an  \emph{$(h,l)$-rectangle}. In the latter case, we have $N = hl$. Given an  $(h,l)$-configuration, the corresponding energy is given by  $\mathcal{H}_{\beta,N}(h)  = 2 h  + 2(1-\beta)   \lceil \frac{N}{h}  \rceil$, as defined in \eqref{def:HL}. We recall  by \Cref{lem:quadratic}  that,  due to the fact that $\beta$ is irrational, the minimizer $h$ of $\mathcal{H}_{\beta,N}$ is unique.   Moreover, the energy of different $(h,l)$-rectangles is necessarily different.

We fix $N_0 \in \N$ sufficiently large. Our strategy consists in characterizing the optimal $(h,l)$-rectangles with   particle  number $N$ with $N \in [N_0,2N_0]$.  By~\Cref{lem:quadratic} we know that  the unique optimal height corresponding to  $N$,  denoted by $h_*(\beta,N)$ (see \eqref{hstern}),  is given by 
\begin{align}\label{hrange}
h_*(\beta,N) =  \sqrt{1-\beta } N^{1/2} + O(N^{1/4}).
\end{align}
 This shows that for  particle  numbers  $N \in [N_0,2N_0]$ it suffices to consider heights between $c_\beta\sqrt{N_0}$ and $C_\beta\sqrt{N_0}$  for some  $0<c_\beta<C_\beta$, provided that $N_0$ is chosen sufficiently large.  (For technical reasons we will also consider a slightly larger interval.)   Given  such a height $h_0 $,  i.e.,  $c_\beta\sqrt{N_0} \le h_0 \le C_\beta\sqrt{N_0}$,   we choose the unique $l(h_0)$ such that 
$$l(h_0) = {\rm argmin}_l \,   \Big| \frac{1}{1-\beta}     - \frac{l}{h_0}\Big|.  $$
It is elementary to check that  
\begin{align}\label{best approx}
\Big| \frac{1}{1-\beta}      - \frac{l(h_0)}{h_0}\Big| \le \frac{1}{h_0}. 
\end{align}
In the sequel, we will consider $(h_0,l)$-rectangles, and we want to assess whether they are optimal or not. Optimality  corresponds to 
$$h_0  = {\rm argmin}_h \, \mathcal{H}_{\beta,h_0l}    (h). $$
The main properties are the following.

\begin{lemma}\label{lemma: counting}
For   $0<\delta<\frac{1}{6}$ there exists $ N_{\beta,\delta} \in \N$ such that for all $N_0 \ge N_{\beta,\delta}  $ and each $ \frac{1}{2}  c_\beta\sqrt{N_0} \le h_0 \le  2  C_\beta\sqrt{N_0}$  the following holds:
\begin{itemize}
\item[(i)] For all $l \in \N$ with  $|l-l(h_0)| \le N_0^{1/6 - \delta}$, the $(h_0,l)$-rectangle is optimal. 
\item[(ii)] For all $l \in \N$ with  $|l-l(h_0)| \ge N_0^{1/6 + \delta}$, the $(h_0,l)$-rectangle is not optimal.   
\end{itemize}
\end{lemma}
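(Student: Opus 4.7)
The plan is to translate optimality of the $(h_0,l)$-rectangle into a Diophantine condition on the algebraic irrational $\alpha := 1/(1-\beta)$, and then analyze this via Roth's theorem and the theory of continued fraction convergents. Setting $l_0:=l(h_0)$ and $\epsilon_0 := l_0-\alpha h_0$ (with $|\epsilon_0|\le 1$ by \eqref{best approx}), writing $l=l_0+j$, $N=h_0 l$, and using $(1-\beta)l = h_0 + (1-\beta)(\epsilon_0+j)$, a direct computation of $\Delta(k):=\mathcal H_{\beta,N}(h_0+k)-\mathcal H_{\beta,N}(h_0)$ based on $\lceil x\rceil = x+1-\{x\}$ yields, for integer $k$ with $h_0+k>0$,
\[
\Delta(k) = \frac{2k(k-\gamma)}{h_0+k}+2(1-\beta)\{k\alpha+\rho(k)\},
\]
with $\gamma := (1-\beta)(\epsilon_0+j)$ and $\rho(k):=\alpha k(\gamma-k)/(h_0+k)$. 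Assuming WLOG $j>0$ (so $\gamma>0$), for $k\notin(0,\gamma)$ one checks directly that $\Delta(k)>0$. For $k\in\{1,\dots,\lfloor\gamma\rfloor\}$, $\rho(k)\in(0,\alpha\gamma^2/(4h_0))$ is bounded by $1$ in the regimes below, and a short case split on whether $\{k\alpha\}+\rho(k)$ exceeds $1$ shows that $\Delta(k)>0$ is equivalent to
\begin{equation}\label{eq:star-planD}
1-\{k\alpha\} > \rho(k).
\end{equation}
Hence the $(h_0,l)$-rectangle is optimal iff \eqref{eq:star-planD} holds for every $k\in\{1,\dots,\lfloor\gamma\rfloor\}$ (the case $j<0$ being symmetric).

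For (i), with $|j|\le N_0^{1/6-\delta}$ one has $\gamma\le C(|j|+1)$, hence $\rho(k)\le\alpha\gamma^2/(4h_0)\le CN_0^{-1/6-2\delta}$ uniformly in $k$. Roth's theorem, applied to the algebraic irrational $\alpha$, supplies, for every $\delta'>0$, a constant $c_{\alpha,\delta'}>0$ with $1-\{k\alpha\}\geq c_{\alpha,\delta'}k^{-1-\delta'}$ for all $k\geq 1$. Thus \eqref{eq:star-planD} follows from $c_{\alpha,\delta'}k^{-1-\delta'}>\alpha\gamma^2/(4h_0)$, and the worst case $k=\lfloor\gamma\rfloor$ reduces this to $\gamma^{3+\delta'}\ll h_0$. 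Since $\gamma^{3+\delta'}\leq CN_0^{(1/6-\delta)(3+\delta')}$ and $h_0\geq c_\beta\sqrt{N_0}/2$, choosing $\delta'\leq\delta$ and $N_0\geq N_{\beta,\delta}$ large yields the inequality and hence (i).

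For (ii), I exhibit $k^*\in\{1,\dots,\lfloor\gamma\rfloor\}$ violating \eqref{eq:star-planD}. Let $(p_n/q_n)_n$ denote the convergents of $\alpha$; by Roth's theorem $q_{n+1}\leq C_{\alpha,\epsilon}q_n^{1+\epsilon}$ for every $\epsilon>0$, and convergents alternate in sign with respect to $\alpha$. Set $Q:=\lfloor\gamma/2\rfloor$ and let $q_n$ be the largest convergent denominator $\leq Q$; take $k^*:=q_n$ if $p_n/q_n>\alpha$, otherwise $k^*:=q_{n-1}$ (so that $p_{n-1}/q_{n-1}>\alpha$). Two applications of the Roth growth estimate then give $k^*\geq c\gamma^{1/(1+\epsilon)^2}\geq c\gamma^{1-2\epsilon}$, while the ``upper convergent'' bound yields $1-\{k^*\alpha\}=p^*-k^*\alpha\leq C\gamma^{\epsilon-1}$. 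Since $k^*\leq\gamma/2$ gives $\gamma-k^*\geq\gamma/2$, one finds $\rho(k^*)\geq c'\alpha\gamma^{2-2\epsilon}/h_0$. Failure of \eqref{eq:star-planD} then requires $C\gamma^{\epsilon-1}\leq c'\alpha\gamma^{2-2\epsilon}/h_0$, i.e., $h_0\leq c''\gamma^{3-3\epsilon}$; under $\gamma\geq cN_0^{1/6+\delta}$ and $h_0\leq 2C_\beta\sqrt{N_0}$ this reduces to $\sqrt{N_0}\leq c''' N_0^{(1/6+\delta)(3-3\epsilon)}$, which holds for $\epsilon<6\delta$ (e.g.\ $\epsilon=\delta$) and $N_0$ sufficiently large, proving (ii).

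The hard part will be the convergent construction underlying (ii): $k^*$ must simultaneously have $1-\{k^*\alpha\}$ very small \emph{and} lie in the ``middle'' of $(0,\gamma)$, so that $\rho(k^*)$ attains the maximal order $\gamma^2/h_0$. Both requirements together force the characteristic exponent $1/3$ on the $N$-fluctuation scale and rely crucially on Roth's bound $q_{n+1}\leq Cq_n^{1+\epsilon}$, which controls the relative gaps between consecutive convergents of algebraic $\alpha$ and secures a convergent of each parity in the target range. The matching exponents $\tfrac{1}{6}\pm\delta$ on $|l-l(h_0)|$ in (i) and (ii) reflect the tight interplay between the quadratic response $\gamma^2/h_0$ of the smooth part of $\mathcal H_{\beta,N}$ and the Diophantine scale $h_0^{1/3}$ forced by Roth's theorem.
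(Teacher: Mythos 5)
Your reduction and your part (i) are in substance the paper's argument; your part (ii) replaces the paper's key lemma by a different, continued-fraction construction, and both are correct. Concretely: your identity $\Delta(k)=\tfrac{2k(k-\gamma)}{h_0+k}+2(1-\beta)\{k\alpha+\rho(k)\}$ (note $k\alpha+\rho(k)=kl/(h_0+k)$) is exactly the paper's rewriting of $\mathcal{H}_{\beta,N}$ via the quantity $t(h_0,h,l)$ and the implications \eqref{show1}--\eqref{show2}, only phrased with fractional parts instead of ceilings, and your criterion $1-\{k\alpha\}>\rho(k)$ is equivalent to $\lceil t\rceil\ge\lceil\tfrac{1}{1-\beta}(h_0-h)\rceil$; for (i) both proofs then balance the Roth lower bound $\|k\alpha\|\ge c\,k^{-1-\delta'}$ against the linear-plus-quadratic remainder, with the same exponent bookkeeping $(1/6-\delta)(3+\delta')<1/2$. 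The genuine difference is in (ii): the paper produces the bad competitor height via a discrepancy estimate for the sequence $(j\tfrac{1}{1-\beta}\bmod 1)$ (its \Cref{lemma:discr}, resting on Kuipers--Niederreiter, which in turn uses that algebraic numbers have finite approximation type by Roth), yielding an $x$ in a prescribed window $[N_0^{\alpha(1-3\delta)},N_0^{\alpha}]$ with small fractional part; you instead take a continued-fraction convergent denominator of the right parity in $[\,c\gamma^{1-2\varepsilon},\gamma/2\,]$, using Roth only through the gap bound $q_{n+1}\le C q_n^{1+\varepsilon}$ to guarantee such a denominator exists and that $1-\{k^*\alpha\}\le C\gamma^{\varepsilon-1}$. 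Your route is more elementary and self-contained (no discrepancy theory), and it selects the approximation from the correct side simply by the alternation of convergents; the paper's discrepancy lemma, on the other hand, is two-sided (upper and lower bounds on the fractional part, plus localization of $x$ in a dyadic window) and is reused verbatim in the proof of \Cref{lemma: counting2}, so it does double duty there, which a bare convergent argument would have to be supplemented for. Two small points to tidy up: the sentence ``failure of \eqref{eq:star-planD} then requires $h_0\le c''\gamma^{3-3\varepsilon}$'' should read that this inequality \emph{suffices} for failure (your subsequent verification is in the correct logical direction, and the sharp condition is $\varepsilon<6\delta/(1+6\delta)$, so $\varepsilon=\delta$ indeed works); and the ``symmetric'' case $l>l(h_0)$ in (ii) should be made explicit by choosing the convergent approximating $\alpha$ from below, so that $\{x\alpha\}$ rather than $1-\{x\alpha\}$ is small for the negative shift $k^*=-x$, mirroring the paper's use of \Cref{lemma:discr}(ii).
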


This shows that among configurations with  particle  number between $N_0$
and $2N_0$ there are at most $ \sim N_0^{2/3 + \delta}$ optimal 
rectangles   and at least  $ \sim N_0^{2/3 - \delta}$ optimal 
rectangles.  To derive a fluctuation estimate from this, we need another property, namely that for each optimal rectangle there is another optimal rectangle whose  particle  number differs in a quantified way.

\begin{lemma}\label{lemma: counting2}
For   $0<\delta<\frac{1}{6}$ there exists $ N_{\beta,\delta}  \in \N$ such that for all $ N_0  \ge N_{\beta,\delta}  $ the following holds. Let $h,l \in \mathbb{N}$ be such that $hl=N_0$,  $c_\beta\sqrt{N_0} \le h \le C_\beta\sqrt{N_0}$,      and assume that the $(h,l)$-rectangle is optimal. Then,   
\begin{itemize}
\item[(i)]  there exist $h_+,l_+\in \mathbb{N}$ such that the $(h_+,l_+)$-rectangle is optimal and for $N_+ = h_+ l_+$ we have
\begin{align*}
0< N_+-N_0\le  N_0^{1/3+\delta}\,;
\end{align*}
\item[(ii)]   there exist $h_-,l_-\in \mathbb{N}$ such that the $(h_-,l_-)$-rectangle is optimal and for $N_- = h_- l_-$ we have 
\begin{align*}
0< N_0-N_-\le  N_0^{1/3+\delta}\,.
\end{align*}
\end{itemize}
\end{lemma}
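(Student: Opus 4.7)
The strategy is to produce a neighboring optimal rectangle by perturbing both coordinates of the given $(h,l)$. Set $\alpha = 1/(1-\beta)$, which is irrational and, by hypothesis, algebraic. Let $L := l(h)$ and $s := l - L \in \mathbb{Z}$. The identity $hl = N_0$ combined with the bound $|h - \sqrt{(1-\beta)N_0}| \le C_\beta N_0^{1/4}$ from~\Cref{lem:quadratic} will yield $|s| \le C_\beta N_0^{1/4}$, a bound that crucially constrains the range of admissible perturbations.

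I would test candidates $h_+ = h+j$, $l_+ = l(h+j) + k$ for integers $j,k$. By~\Cref{lemma: counting}(i), such an $(h_+,l_+)$-rectangle is optimal as soon as $|k| \le N_+^{1/6-\delta}$, where $N_+ = h_+ l_+$. Writing $l(h+j) = L + m_j$ with $m_j \in \mathbb{Z}$ and expanding directly, one obtains the key identity
$$
N_+ - N_0 = (h+j)(m_j + k) + jL - hs,
$$
and, using $h \equiv -j \pmod{h+j}$, also $N_+ - N_0 \equiv -jl \pmod{h+j}$. Consequently, for fixed $j$ and $k$ ranging in $\mathbb{Z}$, the smallest positive value of $N_+ - N_0$ is exactly the residue $r(j) := (-jl) \bmod (h+j) \in (0, h+j]$.

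The task therefore splits into two conditions on $j$: (a) the minimizing $k$ must satisfy $|k| \le N_0^{1/6-\delta}$, which upon analyzing $(hs - jL)/(h+j)$ is equivalent to $|s - 2j\alpha| \le C N_0^{1/6-\delta}$ up to lower-order terms, restricting $j$ to an arithmetic interval of length $\sim N_0^{1/6-\delta}$ centered at $s/(2\alpha)$; and (b) $r(j) \in (0, N_0^{1/3+\delta}]$. Using $l/h = \alpha + O(N_0^{-1/4})$, one checks that $\{jl/(h+j)\} = \{j\alpha\} + O(|j| N_0^{-1/4} + j^2/h)$ for $|j| \ll h$, so (b) reduces to a sharp diophantine approximation of $\alpha$ by rationals with denominator of order $|j|$. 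I would apply Dirichlet's theorem together with Roth's theorem---the latter giving an effective lower bound on $\|j\alpha\|$ as a function of $|j|$ thanks to the algebraicity of $\beta$---to find $j$ with the desired property within the window dictated by (a), after a careful balancing of exponents. Setting $k$ to the minimizing integer then produces $(h_+, l_+)$, proving~(i); the analogous construction, selecting $j$ with $r(j) \in [h+j - N_0^{1/3+\delta}, h+j)$, produces $(h_-, l_-)$ and proves~(ii).

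The main obstacle will be the simultaneous satisfaction of conditions (a) and (b): condition (a) confines $j$ to a short window of width $N_0^{1/6-\delta}$, while (b) demands a good diophantine approximation to $\alpha$ in a generally different range. The algebraicity of $\beta$ is precisely what provides enough rigidity via Roth's theorem to guarantee compatibility within this narrow window; in a purely transcendental setting no such rigidity is available, and the conclusion could genuinely fail. Quantitative care is also required to ensure the error terms $O(|j| N_0^{-1/4})$ and $O(j^2/h)$ in the reduction of (b) to the pure approximation of $\alpha$ remain dominated by the target $N_0^{1/3+\delta}$.
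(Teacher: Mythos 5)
Your scheme (fix $j$, put $h_+=h+j$, $l_+=l(h+j)+k$, and observe that as $k$ ranges over $\Z$ the increment $N_+-N_0$ runs through a fixed residue class mod $h+j$, so that its least positive value is a residue $r(j)$) is a genuinely different route from the paper, and its algebraic skeleton is sound: the identity for $N_+-N_0$, the factor $2\alpha$ in the window constraint $|s-2j\alpha|\lesssim N_0^{1/6-\delta}$, and the congruence are correct (up to a harmless sign: $N_+-N_0\equiv jl$, not $-jl$, mod $h+j$). However, the step you yourself flag as the main obstacle is a genuine gap, and as written it would fail. First, you only use $|s|=|l-l(h)|\le C_\beta N_0^{1/4}$ from \Cref{lem:quadratic}, so the window for $j$ sits at distance up to $N_0^{1/4}$ from the origin and $l/h=\alpha+O(N_0^{-1/4})$. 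Writing $g(j)=jl/(h+j)$, one has $g'(j)-\alpha=O(N_0^{-1/4})$ there, so across a window of length $N_0^{1/6-\delta}$ the quantity $g(j)-j\alpha$ drifts by $O(N_0^{-1/12-\delta})$, which for small $\delta$ is far larger than the admissible target length $N_0^{1/3+\delta}/(h+j)\sim N_0^{-1/6+\delta}$ for the fractional part; hence $\{g(j)\}$ cannot be replaced by $\{j\alpha\}$ plus a fixed offset at the required precision, and Roth/Dirichlet applied to $\alpha$ do not control it. Second, even granting that reduction, Dirichlet and Roth only concern $\|j\alpha\|$ for some $j$ in an interval anchored at $0$; to find $j$ confined to a short window centered at $s/(2\alpha)\neq0$ with $\{j\alpha\}$ in a prescribed \emph{shifted} short interval you need a quantitative equidistribution (discrepancy) statement in the spirit of \Cref{lemma:discr}, and with your exponents the expected number of hits is only $N_0^{(1/6-\delta)+(-1/6+\delta)}=N_0^{0}$, so positivity of the count is not guaranteed unless you decouple the window exponent from the target exponent (e.g.\ use \Cref{lemma: counting}(i) with a parameter strictly smaller than $\delta$). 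You also need the residue to be nonzero to get $N_+>N_0$, for which a Roth-type lower bound is again needed.

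The missing ingredient that repairs your plan is \Cref{lemma: counting}(ii): since the given $(h,l)$-rectangle is optimal, $|l-l(h)|\le N_0^{1/6+\delta}$, much better than $N_0^{1/4}$. Then the window center is $O(N_0^{1/6+\delta})$, $|l/h-\alpha|=O(N_0^{-1/3+\delta})$, the drift of $g(j)-j\alpha$ across the window drops below the target length, and a discrepancy argument for shifted windows and arbitrary subintervals of $[0,1)$ (as in \cite{Kuipers}, cf.\ \Cref{lemma:discr}) can produce the desired $j$; with that, your construction should close after the exponent bookkeeping and a final rescaling of $\delta$. For comparison, the paper avoids equidistribution over shifted windows altogether: it takes a one-sided rational approximation $p/q$ of $1/(1-\beta)$ with controlled sign and two-sided error from \Cref{lemma:discr}, perturbs along $(h,l)\mapsto(h+\sigma_l kq,\,l-\sigma_l kp)$ with $k$ chosen to nearly cancel $2(1-\beta)|l-l(h)|$, gets $0<N_+-N_0\le CN_0^{1/3+8\delta}$ directly from the diophantine bounds, and verifies optimality of the new rectangle by checking $|2h_+-2(1-\beta)l_+|\le\tfrac12 N_0^{1/6-2\delta}$ and invoking \Cref{lemma: counting}(i); the same two inputs as yours, but packaged so that only the origin-anchored discrepancy statement is needed.
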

 
We postpone the proofs of \Cref{lemma: counting} and~\Cref{lemma: counting2} and first explain the fluctuation estimate. 

\begin{proof}[Proof of the \Cref{thm:fluctuation},   lower bound
  {\rm ($L_2$)} and upper bound   {\rm ($U_2$)}] 
We start by a preliminary observation. Consider an optimal
$(h_0,l)$-configuration consisting of $N$  particles  with  $N
< h_0l$. Then, also the $(h_0,l)$-configuration consisting of $N+1$
 particles  (adding one  particle  in the upmost row) is
optimal. Indeed, by the monotonicity of the energy in the  particle  number
and the fact that the two configurations have the same energy we get
$\min_h \mathcal{H}_{\beta,N+1}(h) \ge \min_h \mathcal{H}_{\beta,N}(h)
=  \mathcal{H}_{\beta,N}(h_0) = \mathcal{H}_{\beta,N+1}(h_0)$. This
shows that the $(h_0,l)$-configuration with $N+1$  particles 
is optimal. Repeating this  argument  we also find that the $(h_0,l)$-rectangle   consisting of $h_0l$  particles  is optimal. 

In particular, this argument shows that  $\min \mathcal{H}_{\beta,N} <
\min \mathcal{H}_{\beta,N+1}$ is only possible if $N$ is a  particle 
number for which an optimal rectangle exists. We now proceed with the
proof of  the lower bound  {\rm ($L_2$)} and of the upper bound
  {\rm ($U_2$)}. To this end, we let $N_0 \ge  N_{\beta,\delta} $, where $ N_{\beta,\delta} $ denotes the (maximum of the) constants in \Cref{lemma: counting}--\Cref{lemma: counting2}.


\emph{Proof of  the lower bound  {\rm ($L_2$)}:} Since by
\Cref{lemma: counting}(ii) and the comment below \eqref{hrange} there
are at most $ C_\beta  N_0^{2/3 + \delta}$ many optimal rectangles
with  particle  numbers between $N_0$ and $2N_0$, we can choose $N \in
[N_0,2N_0]$ such that $N' - N \ge c_\beta  N_0^{1/3-\delta}$, where
$N' > N$ denotes the smallest number bigger than $N$ such that there
exists an optimal rectangle with $N'$  particles,   and $ c_\beta >0$ is a
constant depending only on $\beta$.  The  above  preliminary
observation yields $\min \mathcal{H}_{\beta,N'} = \min
\mathcal{H}_{\beta,N}$. Consider an optimal $(h,l)$-rectangle
consisting of $N'$  particles,   and consider the corresponding
$(h,l)$-configuration consisting of $N$  particles  by removing $N'-N$  particles 
from the upmost row. Due to $\min \mathcal{H}_{\beta,N'} = \min
\mathcal{H}_{\beta,N}$, this configuration is also optimal. As $N' - N
\ge c_\beta  N_0^{1/3-\delta}$, this configuration allows for a
fluctuation of order $\sim N_0^{1/3-\delta}$. Since this holds for all
$N_0 \ge N_{\beta,\delta}  $,  the lower bound  {\rm ($L_2$)} follows for a constant $c_{\beta,\delta}>0$ depending on $\beta$ and $\delta$. 

\emph{Proof of  the upper bound  {\rm ($U_2$)}:} Fix $N \in
[N_0,2N_0]$ and consider an optimal $(h,l)$-configuration consisting
of $N$  particles.    By the  above  preliminary observation and 
by   \Cref{lemma: counting2}   we find that the $(h,l)$-rectangle
with  particle  number $N' = hl$  is optimal and it holds that    $N' - N
\le N_0^{1/3+\delta}  $. (Indeed,  if  $N' - N >
N_0^{1/3+\delta}$, ~\Cref{lemma: counting2}(ii)  (applied for
$N'$)    would imply that there exists $N < N'' <N'$ such that
there exists an optimal rectangle with $N''$  particles.   As $\min
\mathcal{H}_{\beta,N'} = \min \mathcal{H}_{\beta,N}$, we would get $
\min \mathcal{H}_{\beta,N'}  = \min \mathcal{H}_{\beta,N''}$ by
monotonicity, which yields a contradiction as the energy for different
optimal rectangles is necessarily different.)  This shows that at most
$\sim N_0^{1/3+\delta}$  particles  are missing in the upmost row of the optimal $(h,l)$-configuration and thus  the fluctuation of the configuration is
controlled by $N_0^{1/3+\delta}$. As this holds for all $N_0 \ge N_{\beta,\delta}  $,  the upper bound  {\rm ($U_2$)} follows for a constant $C_{\beta,\delta}>0$ depending on $\beta$ and~$\delta$.
\end{proof}

We now come to the proofs of \Cref{lemma: counting} and \Cref{lemma: counting2}.
\begin{proof}[Proof of \Cref{lemma: counting}]
Let $0<\delta<\frac{1}{6}$. Along the proof we will progressively 
increase the  lower bound $ N_{\beta,\delta} $, depending on $\delta$ and $\beta$. 
 Fix a  $(h_0,l)$-rectangle, and set $N = h_0l$.   Given $h
\in \N$, we first rewrite the energy $\mathcal{H}_{\beta,N}$ defined
in \eqref{def:HL}   by   easily computing    
\begin{align*}
 \mathcal{H}_{\beta,N}(h) & = 2 h  + 2(1-\beta)   \Big\lceil \frac{h_0 l }{h} \Big\rceil =  2h_0 + 2(1-\beta)  l + 2(h-h_0) +  2(1-\beta)   \Big\lceil \frac{h_0 l }{h} - l \Big\rceil  \\
 & =   2h_0 +  2(1-\beta)  l + 2(h-h_0) + 2(1-\beta)   \Big\lceil \frac{l }{h_0}(h_0-h) + \frac{l}{h h_0} (h_0-h)^2 \Big\rceil\\ 
  & =   \mathcal{H}_{\beta,N}(h_0)   + 2(h-h_0) +  2(1-\beta)   \big\lceil   t(h_0,h,l)  \big\rceil\,. 
\end{align*}
 where for shorthand we have  set 
\begin{align}\label{t-def}
t \defas t(h_0,h,l) \defas   \frac{1}{1-\beta }    (h_0-h)  + \Big(   \frac{l }{h_0} -  \frac{1}{1-\beta }     \Big) (h_0-h)   + \frac{l}{h h_0} (h_0-h)^2  . 
\end{align}
We observe that 
\begin{align}\label{show1}
\lceil t \rceil \ge   \Big\lceil  \frac{1}{1-\beta }    (h_0-h)\Big\rceil \text{ for all $h$}   \quad \Rightarrow  \quad \text{$h_0 = {\rm argmin}\, \mathcal{H}_{\beta,N} (h)$ },
\end{align}
\begin{align}\label{show2}
\lceil t \rceil \le  \Big \lceil  \frac{1}{1-\beta }   (h_0-h)\Big\rceil-1  \text{ for some $h$}   \quad \Rightarrow  \quad \text{ $h_0 \neq {\rm argmin}\,  \mathcal{H}_{\beta,N} (h)$.}
\end{align}

\emph{Proof of {\rm (i)}:}  Consider  $l \in \N$ with  $|l-l(h_0)| \le N_0^{1/6 - \delta}$.  In view of \eqref{show1}, for fixed $h$,  it suffices to show that $\lceil t \rceil \ge   \lceil  \frac{1}{1-\beta }    (h_0-h)\rceil$.   Since $\beta$ is algebraic, by Roth's theorem, see~\cite{Roth}, we find $c=c(\beta,\delta)>0$ such that 
$${\Big| \frac{1}{1-\beta}     - \frac{p}{q}\Big| \ge c  q^{-2-\delta} }$$
for all  $p \in \Z$ and $q \in \N$. For the choice $q = |h_0-h|$  this yields
$ |  \frac{1}{1-\beta }    |h_0-h|  - p| \ge c |h-h_0|^{-1-\delta} $ for all $p \in \Z$. Thus, writing $x = h_0 - h$, we get from \eqref{t-def}
$$ t \ge   \Big\lfloor  \frac{1}{1-\beta }    x   \Big\rfloor  + c |x|^{-1-\delta} + \Big(   \frac{l }{h_0} -  \frac{1}{1-\beta }     \Big)x   + \frac{l}{h h_0} x^2.  $$ 
By using \eqref{best approx} and the fact that $|l-l(h_0)| \le N_0^{1/6-\delta}$ we further get
$$ t \ge   \Big\lfloor  \frac{1}{1-\beta }    (h-h_0)   \Big\rfloor  + c |x|^{-1-\delta} - \Big(1 + N_0^{1/6-\delta}) \frac{1}{h_0} |x|   + \frac{l}{h h_0} x^2.  $$ 
It is not restrictive to assume that $h \in  [  \frac{1}{2}  c_\beta  N_0^{1/2},  2  C_\beta  N_0^{1/2}]$ as otherwise $h$ cannot be a minimizer, see \eqref{hrange}. We have $h_0 \in [ \frac{1}{2}  c_\beta N_0^{1/2},  2  C_\beta  N_0^{1/2}]$ by assumption and therefore also $l \ge   \frac{1}{2}  c_\beta  N_0^{1/2}$, cf.\ \eqref{best approx}. This shows 
$${c |x|^{-1-\delta} - \Big(1 + N_0^{1/6-\delta}\Big) \frac{1}{h_0} |x|   + \frac{l}{h h_0} x^2   \ge c |x|^{-1-\delta} -   4  c_\beta^{-1}  N_0^{-1/3-\delta} |x|   +  \frac{1}{8}  c_\beta C_\beta^{-2}  N_0^{-1/2} x^2.  } $$
If $|x| >  32   (C_\beta c_\beta^{-1})^2  N_0^{1/6-\delta}$,
the  sum of the  last two terms on the right-hand side  is
 positive. If $|x| < ( \frac{c_\beta c}{ 4  }
N_0^{1/3+\delta})^{1/(2+\delta)}$, the  sum of the  first two
terms on the right-hand side  is   positive. Since   $32  (C_\beta c_\beta^{-1})^2  N_0^{1/6-\delta} <   (  \frac{c_\beta c}{4}   N_0^{1/3+\delta})^{1/(2+\delta)} $ for $N_0 \ge N_{\beta,\delta}  $ for some $ N_{\beta,\delta}  $ depending on $\delta$ and $\beta$, we conclude
$$ t >   \Big\lfloor  \frac{1}{1-\beta }    (h-h_0)   \Big\rfloor.$$
This shows $ \lceil t \rceil \ge   \lceil   \frac{1}{1-\beta }    (h-h_0) \rceil $ and concludes the proof of (i).

\emph{Proof of {\rm (ii)}:}  Consider   $l \in \N$ with  $|l-l(h_0)| \ge N_0^{1/6 + \delta}$. Without restriction we treat the case $l < l(h_0)$ as the other one is completely analogous.  We indicate the adaptations at the end of the proof.   In view of \eqref{show2}, it suffices to find $h$ such that the corresponding $t$ defined in \eqref{t-def} satisfies  $\lceil t \rceil \le   \lceil  \frac{1}{1-\beta }    (h_0-h)\rceil - 1$. Choose $\alpha $ with  $ \frac{1/3 - \delta}{2-4\delta} < \alpha < \frac{1/6 + \delta}{1+3\delta} $.     By~\Cref{lemma:discr}(i) below, we can find   $x \in  \N $ with     $N_0^{\alpha(1-3\delta)} \le x \le N_0^\alpha$ such that  $  \frac{1}{1-\beta }    x - \lfloor   \frac{1}{1-\beta }    x \rfloor  \le N_0^{-\alpha (1-\delta)}$. We  set $h  =  h_0-x$ and aim to show that the corresponding $t$ satisfies $\lceil t \rceil \le   \lceil  \frac{1}{1-\beta }    (h_0-h)\rceil - 1$. Note that we can assume without restriction that $l \le  C_\beta  N_0^{1/2}$ for some $C_\beta>0$ as otherwise the configuration is clearly not optimal. Using also  $h,h_0 \ge   \frac{1}{2}  c_\beta  N_0^{1/2}$, cf.\ below \eqref{hrange}, we estimate
$$ { t \le  \Big\lfloor  \frac{1}{1-\beta }    (h_0-h) \Big\rfloor + N_0^{-\alpha (1-\delta)} + \Big(   \frac{l }{h_0} -  \frac{1}{1-\beta }     \Big) x  +   4  C_\beta c_\beta^{-2}  N_0^{-1/2} x^2 . } $$ 
Using $x>0$,  $\frac{l }{h_0} -  \frac{1}{1-\beta }     \le (- N_0^{1/6+\delta}+1) h_0^{-1} $ (see \eqref{best approx} and use the choice of $l$), as well as $h_0 \le  2  C_\beta  N_0^{1/2}$ we get
$$ t \le  \Big\lfloor  \frac{1}{1-\beta }    (h_0-h) \Big\rfloor + N_0^{-\alpha (1-\delta)}  - \frac{1}{ 4  } C_\beta^{-1}  N_0^{-1/3+\delta} x   +  4   C_\beta c_\beta^{-2}  N_0^{-1/2} x^2 ,  $$
provided that $ N_{\beta,\delta}  $ is chosen sufficiently large. Recalling that   $N_0^{\alpha(1-3\delta)} \le x \le N_0^\alpha$ and $ \frac{1/3 - \delta}{2-4\delta} < \alpha < \frac{1/6 + \delta}{1+3\delta} $,  for $N_0 \ge  N_{\beta,\delta}  $ and $ N_{\beta,\delta}  $ sufficiently large,  we find by an elementary computation that 
 $$t \le \Big\lfloor  \frac{1}{1-\beta }    (h_0-h) \Big\rfloor. $$
This  yields   $\lceil t \rceil \le   \lceil  \frac{1}{1-\beta }    (h_0-h)\rceil - 1$ and concludes the proof. 

 In the case $l >  l(h_0)$, we use~\Cref{lemma:discr}(ii) below to find $x \in \N$ with  $N_0^{\alpha(1-3\delta)} \le x \le N_0^\alpha$ and  $  \frac{1}{1-\beta }     (- x) - \lfloor   \frac{1}{1-\beta }     (-x) \rfloor  \le N_0^{-\alpha (1-\delta)}$, and define $h = h_0 + x$. Then, the statement follows by repeating the estimates above.  
\end{proof}

We check the following result which has been used in the previous proof and will be also instrumental in the proof of  \Cref{lemma: counting2} below.  

\begin{lemma}\label{lemma:discr}
Let $0<\delta<\frac{1}{6}$ and $0<\alpha < 1$. There exist $ N_{\beta,\delta}  \in \N$  (depending on $\delta$,  $\alpha$, and $\beta$) and $c>0$ (depending on $\delta$ and $\beta$) such that for $N_0 \ge N_{\beta,\delta}  $ 
\begin{itemize}
\item[(i)]  we find   $x \in  \N $ with  $N_0^{\alpha(1-3\delta)} \le x \le N_0^\alpha$ such that 
\begin{align}\label{XXX}
 c N_0^{-\alpha(1+\delta)} \leq  \frac{1}{1-\beta }    x - \left\lfloor  \frac{1}{1-\beta }    x \right\rfloor  \le N_0^{-\alpha (1-\delta)}\,;
\end{align}
\item[(ii)]  we find   $x \in  \N $ with  $N_0^{\alpha(1-3\delta)} \le x \le N_0^\alpha$ such that 
\begin{align*}
 c N_0^{-\alpha(1+\delta)} \leq \left\lceil  \frac{1}{1-\beta }    x \right\rceil -  \frac{1}{1-\beta }   x  \le N_0^{-\alpha (1-\delta)}\,.
\end{align*}

\end{itemize}

\end{lemma}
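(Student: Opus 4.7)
The plan is to set $\theta := \tfrac{1}{1-\beta}$, which is irrational and algebraic since algebraic numbers are closed under rational operations, and to pick $x$ as a suitable continued-fraction convergent denominator of $\theta$. The main input is Roth's theorem~\cite{Roth}, which gives $c = c(\beta,\delta) > 0$ such that $|q\theta - p| \ge c\, q^{-1-\delta}$ for all $q \in \N$ and $p \in \Z$.

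Writing $p_k/q_k$ for the convergents of $\theta$, I will use three standard facts: (a) $|q_k\theta - p_k| \le 1/q_{k+1}$; (b) $|q_k\theta - p_k| \ge 1/(2q_{k+1})$; and (c) the signs of $q_k\theta - p_k$ alternate with $k$. Combining (a) with Roth yields $q_{k+1} \le c^{-1} q_k^{1+\delta}$, so the sequence $(q_k)$ is dense enough on a logarithmic scale to hit the target range.

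Setting $Q = N_0^\alpha$, I select $k$ with $q_k \le Q^{1-\delta} < q_{k+1}$. The inequality $q_{k+1} \le c^{-1} q_k^{1+\delta}$ then forces
\[
q_k \ge c^{1/(1+\delta)} Q^{(1-\delta)/(1+\delta)} \ge Q^{1-3\delta}, \qquad q_{k+1} \le c^{-1} Q^{(1-\delta)(1+\delta)} = c^{-1} Q^{1-\delta^2} \le Q,
\]
both valid for $N_0 \ge N_{\beta,\delta}$ sufficiently large (depending also on $\alpha$), where the first estimate uses the elementary inequality $(1-\delta)/(1+\delta) > 1 - 3\delta$ valid for $\delta > 0$. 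Hence $q_k, q_{k+1} \in [N_0^{\alpha(1-3\delta)}, N_0^\alpha]$. For each $q \in \{q_k, q_{k+1}\}$, fact (a) together with $q_{k+1}, q_{k+2} \ge Q^{1-\delta}$ gives $|q\theta - p| \le Q^{-(1-\delta)}$, while Roth applied to $q \le c^{-1} Q^{1-\delta^2}$ gives $|q\theta - p| \ge c'\, Q^{-(1-\delta^2)(1+\delta)} \ge c'\, Q^{-(1+\delta)}$, using the easy fact $(1-\delta^2)(1+\delta) \le 1+\delta$.

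To conclude, by (c) exactly one of $q_k\theta - p_k$ and $q_{k+1}\theta - p_{k+1}$ is positive and the other negative, and since $|q\theta - p| \le Q^{-(1-\delta)} < 1/2$ for $N_0$ large, $p$ is the nearest integer to $q\theta$. Taking $x$ to be whichever of $q_k, q_{k+1}$ has the sign needed yields $\{x\theta\} = x\theta - p$ in the interval of~(i), while the other choice yields $\lceil x\theta\rceil - x\theta = p - x\theta$ in the interval of~(ii). The main technical obstacle is the \emph{lower} bound $c N_0^{-\alpha(1+\delta)}$ on the signed fractional parts: Dirichlet's theorem alone would yield the matching upper bound but no lower control, and it is essential to invoke Roth's theorem to exploit the algebraicity of $\beta$.
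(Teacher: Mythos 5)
Your proposal is correct, but it takes a genuinely different route from the paper. The paper proves existence of $x$ with small fractional part by a discrepancy argument: it introduces the $m$-discrepancy $\phi_m(s)$ of the sequence $(j\tfrac{1}{1-\beta} \,\mathrm{mod}\, 1)_j$, quotes the bound $\sup_s|\phi_m(s)|\le C_{\delta,\beta}m^{-1+\delta/2}$ for algebraic irrationals from Kuipers--Niederreiter, and applies it with $s=N_0^{-\alpha(1-\delta)}$, $m=\lfloor N_0^\alpha\rfloor$ to find a point of the orbit in $[0,s]$ (and, for part (ii), in $[1-s,1]$); Roth's theorem then supplies the lower bound $cN_0^{-\alpha(1+\delta)}$ and the lower bound $x\ge N_0^{\alpha(1-3\delta)}$, exactly as in your argument. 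You instead work directly with continued-fraction convergents of $\theta=\tfrac{1}{1-\beta}$: the standard estimate $|q_k\theta-p_k|\le 1/q_{k+1}$ combined with Roth gives $q_{k+1}\le c^{-1}q_k^{1+\delta}$, so choosing $k$ with $q_k\le N_0^{\alpha(1-\delta)}<q_{k+1}$ places both $q_k$ and $q_{k+1}$ in the required range $[N_0^{\alpha(1-3\delta)},N_0^{\alpha}]$ with the required two-sided bounds on $|q\theta-p|$, and the alternation of signs of $q_k\theta-p_k$ delivers parts (i) and (ii) simultaneously (one from each of $q_k,q_{k+1}$). Your exponent bookkeeping checks out: $(1-\delta)/(1+\delta)>1-3\delta$ and $(1-\delta^2)(1+\delta)\le 1+\delta$ are fine, the constant in the lower bound depends only on $\beta,\delta$ as required, and the dependence of $N_{\beta,\delta}$ on $\alpha$ is admissible. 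What your approach buys is a more self-contained argument that avoids invoking the discrepancy estimate (whose proof itself rests on the finite approximation type of algebraic numbers) and gets both signs at once from convergent parity; the paper's discrepancy route is somewhat more flexible, since it locates orbit points in arbitrary prescribed intervals rather than only those reachable by convergents, but for this lemma both methods suffice and both ultimately hinge on Roth for the lower bound.
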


\begin{proof} First we show {\rm (i)}.
We define the $m$-discrepancy of the sequence $(j  \frac{1}{1-\beta}     \, {\rm mod} \, 1)_{j \in \N}$ with respect to the interval $[0,s]$, $s>0$, by
$${\phi_m(s) := \frac{1}{m} \sum_{j=0}^{m-1}  \chi_{\N + [0,s]}  (j \tfrac{1}{1-\beta})   -s\,.} $$
As $\beta$ is algebraic, by \cite[Thm.\ 3.2 and Ex.\ 3.1, pp.\ 123-124]{Kuipers},  for $\delta >0$ there exists $C_{\delta,\beta}>0$ depending on $\delta$ and $\beta$ such that
$$\sup_s |\phi_m(s)| \le C_{\delta,\beta} m^{-1+\delta/2} \quad \text{for all $m \in \N$}. $$
With $s =  N_0^{-\alpha(1-\delta)}$ and $m = \lfloor N_0^{\alpha} \rfloor$ we get 
$$ {\frac{1}{m} \sum_{j=0}^{m-1}  \chi_{\N + [0,s]}  (j \tfrac{1}{1-\beta})  \ge  N_0^{-\alpha (1-\delta)} - C_{\delta,\beta} m^{-1+\delta/2} >0\,, }$$
provided that $N_0 \ge N_{\beta,\delta} $ for some $ N_{\beta,\delta}  $ depending on $\delta$, $\alpha$, and $\beta$. 
This shows that there exists $x \in \N_0$ such that $x \le N_0^\alpha$ and  $  \frac{1}{1-\beta }    x - \lfloor   \frac{1}{1-\beta }   x \rfloor  \le s= N_0^{-\alpha (1-\delta)}$,  i.e., the upper bound in \eqref{XXX} holds. By Roth's Theorem~\cite{Roth} there exists $c>0$ depending on $\delta$ and $\beta$ such that $|  q \frac{1}{1-\beta}      - p  | > \frac{c}{q^{1+\delta}}$ for all $p \in \Z$ and $q \in \N$. Thus,  $| \frac{1}{1-\beta }    x - \lfloor   \frac{1}{1-\beta }    x \rfloor | > \frac{c}{x^{1+\delta}} \geq cN_0^{-\alpha(1+\delta)}$  which yields  the lower bound    in \eqref{XXX}.  Furthermore, as this also implies $\frac{x^{1+\delta}}{c} \ge N_0^{\alpha (1-\delta)}$  by the upper bound in \eqref{XXX}, we obtain $x \ge N_0^{\alpha (1-3\delta)}$ for $N_0 \ge  N_{\beta,\delta} $, provided that $ N_{\beta,\delta} $ is chosen sufficiently large.

The proof of {\rm (ii)} follows along the same lines by using the $m$-discrepancy of the sequence $(j\frac{1}{1-\beta}     \, {\rm mod} \, 1)_{j \in \N}$ with respect to the interval $[1-s,1]$, $s>0$, namely
$$\hat\phi_m(s) := \frac{1}{m} \sum_{j=0}^{m-1}  \chi_{\N + [1-s,1]}  (j \tfrac{1}{1-\beta})   -s\,. $$
 Using again \cite{Kuipers} we can control $\hat\phi_m$ in the same way as $\phi_m$ above.  
\end{proof}

 We prove now \Cref{lemma: counting2}.

\begin{proof}[Proof of \Cref{lemma: counting2}] Let $h,l \in
  \mathbb{N}$, $N_0  =hl$, and suppose that the  $(h,l)$-rectangle is
  optimal. We only prove ({\rm i}) as the proof of ({\rm ii}) is
  completely analogous  and we  briefly indicate the necessary
   adaptations  at the end of the proof.   Let $\sigma_l = 1$ if $l- l(h)\ge 0$ and $\sigma_l = -1$ else.    We apply~\Cref{lemma:discr} ((i) or (ii), depending on the sign of $\sigma_l$)  for $\alpha = \frac{1}{6}-3\delta$ to choose $p,q \in \N$ with
  \begin{align}\label{ineq:prop-qp}
  N_0^{(\frac{1}{6}-3\delta)(1-3\delta)} \leq q \leq N_0^{\frac{1}{6}-3\delta} \quad \text{and}\quad c
  q\leq p \leq C q 
  \end{align}
  such that
  \begin{align}\label{ineq:approx}
c N_0^{-(\frac{1}{6}-3\delta)(1+\delta)} \leq  \sigma_l  \Big( q \frac{1}{1-\beta}      - p \Big) \leq N_0^{-(\frac{1}{6}-3\delta)(1-\delta)}\,.
  \end{align} 
   We claim that
  \begin{align}\label{impi}
cN_0^{-\frac{1}{3}+ 5  \delta}\leq    \sigma_l   \Big( \frac{l}{h} - \frac{p}{q} \Big) \leq CN_0^{-\frac{1}{3} + 7  \delta}\,.
  \end{align}
Indeed, by \eqref{best approx},  \eqref{ineq:prop-qp}, \eqref{ineq:approx},  and the fact that $|l - l(h)| \le N_0^{1/6+\delta}$ (see \Cref{lemma: counting}(ii)), we obtain
\begin{align*}
  \sigma_l   \Big( \frac{l}{h} - \frac{p}{q} \Big) & \le \frac{1+N_0^{1/6+\delta}}{h} +  \sigma_l   \Big( \frac{1}{1-\beta}     - \frac{p}{q}  \Big) \le \frac{CN_0^{1/6+\delta}}{h} +   N_0^{-(\frac{1}{6}-3\delta)(1-\delta)} q^{-1} \\
&\le C   N_0^{-1/3+\delta} + C N_0^{-1/3 + \frac{20}{3}\delta-12\delta^2  } \le CN_0^{-1/3 + 7  \delta}\,,
\end{align*}
where the constant depends on $\delta$ and $\beta$. Here, we used that $h \ge c_\beta N_0^{1/2}$.   On the other hand,   using again $h \ge c_\beta N_0^{1/2}$,
\begin{align*}
 \sigma_l   \Big( \frac{l}{h} - \frac{p}{q} \Big) & \ge -\frac{1+N_0^{1/6+\delta}}{h} +  \sigma_l   \Big( \frac{1}{1-\beta}      - \frac{p}{q} \Big) \ge -\frac{CN_0^{1/6+\delta}}{h} +   cN_0^{-(\frac{1}{6}- 3 \delta)(1+\delta)} q^{-1} \\
&\ge -CN_0^{-1/3+\delta} + C N_0^{-1/3 + \frac{35}{6}\delta+3\delta^2  } \ge CN_0^{-1/3 + 5  \delta}\,.
\end{align*}
 This shows \eqref{impi}.  Recalling   that   $|l-l(h)| \le N_0^{1/6+\delta}$ and  $ p \le Cq \le C N_0^{1/6-3\delta}$, we can further  choose $k \in \N$, $k \le C N_0^{1/6+\delta}/q $, such that 
\begin{align}\label{kkk}
\big| 2(1-\beta) |l-l(h)|  - 2kq - 2(1-\beta) kp \big| \le CN_0^{1/6-3\delta}
\end{align}
for  some   universal $C>0$. 
We now define $l_+ = l  - \sigma_l  kp$, $h_+ = h +  \sigma_l  kq$,
and $N_+=l_+h_+$.  We  claim  that 
\begin{align}\label{ineq:cardestimateNplus}
0 < N_+-N_0 \leq C  N_0^{\frac{1}{3} + 8  \delta}\,, 
\end{align}
 provided that $ N_{\beta,\delta} $ is sufficiently large.  Indeed, 
\begin{align*}
N_+-N_0= h_+l_+ -hl = (h +  \sigma_l  kq)(l -  \sigma_l  kp) =     \sigma_l  \big( k ql-k  h  p \big)   -k^2pq .
\end{align*}
By \eqref{impi} we find
\begin{align*}
kq \big(c  h  N_0^{-\frac{1}{3}+ 5  \delta} - kp\big)\leq   \sigma_l   \big( k ql-k h   p \big)   -k^2pq \leq CN_0^{-\frac{1}{3} +  7  \delta} k q h -k^2 pq\,.
\end{align*}
Using that $ c_\beta  N_0^{\frac{1}{2}}\le h \le  C_\beta  N_0^{1/2}$,
and  that  $kp, kq \le CN_0^{1/6+\delta}$ we get 
\begin{align*}
0<kq  (cc_\beta N_0^{\frac{1}{6}+ 5 \delta} -CN_0^{\frac{1}{6}+ \delta  }   )\leq N_+ - N_0 \le  CC_\beta    N_0^{\frac{1}{3} + 8  \delta}\,,
\end{align*}
where the first  inequality holds for $N_0\ge N_{\beta,\delta} $ for some $ N_{\beta,\delta}  $ sufficiently large depending on $\delta$  and $\beta$.  This concludes the proof of \eqref{ineq:cardestimateNplus}.  We observe that the desired estimate follows from \eqref{ineq:cardestimateNplus} by replacing $\delta$ with $ \delta/8  $. 

It remains to check that also the   $(h_+, l_+)$-rectangle is optimal. First note that  $ c_\beta N_0^{1/2} \le h \le  C_\beta  N_0^{1/2} $ implies $\frac{c_\beta}{2} N_0^{1/2} \le h_+ \le 2C_\beta N_0^{1/2} $ for $N_0 \ge N_{\beta,\delta} $ and $ N_{\beta,\delta} $ sufficiently large as $kq \le  C N_0^{1/6+\delta}$.   Thus,  in  view of ~\Cref{lemma: counting}(i),  it suffices to check that  
\begin{align}\label{suffice}
|2 h_+ - 2(1-\beta)  l_+| \le  \frac{1}{2}N_0^{1/6- 2    \delta}, 
\end{align}
as then, by using \eqref{best approx}, we get
$$    2(1-\beta)    |l_+ - l(h_+)| \le | 2(1-\beta)  l_+ - 2h_+| +   |
2(1-\beta)  l(h_+) - 2h_+| \le  \frac{1}{2}N_0^{1/6-  2   \delta}
+2\le   2(1-\beta)   N_{ \delta}^{1/6-\delta}\,,    $$
where the last inequality holds for $ N_{\beta,\delta} $ sufficiently large depending on $\delta$ and $\beta$. 
Let us finally  show \eqref{suffice}. By \eqref{best approx} and \eqref{kkk}, we calculate
\begin{align*}
\big|2 h_+ - 2(1-\beta)  l_+\big| & =   \big|2 h - 2(1-\beta)  l(h)    + 2(1-\beta)  (l(h) - l) + 2 (h_+ -  h)   -  2(1-\beta)  (l_+-l) \big| \\
& \le 2 +  \big| 2(1-\beta) (l(h) - l) + 2kq  \sigma_l   + 2(1-\beta) kp  \sigma_l   \big|  \\ 
& \le 2 + CN_0^{1/6- 3  \delta} \le \frac{1}{2}N_0^{1/6 -  2    \delta}  
\end{align*}
for $N_0\ge N_{\beta,\delta} $ and $ N_{\beta,\delta} $ sufficiently large depending on $\delta$. Here, we used that $\sigma_l = 1$ if $l- l(h)\ge 0$ and $\sigma_l = -1$ else.   This shows \eqref{suffice} and concludes the proof.

 Finally, the proof of (ii) follows along the same lines by choosing $p$ and $q$ such that \eqref{ineq:approx} is replaced by  $c N_0^{-(\frac{1}{6}-3\delta)(1+\delta)} \leq   \sigma_l  ( -q \frac{1}{1-\beta}      + p ) \leq N_0^{-(\frac{1}{6}-3\delta)(1-\delta)}$.   
\end{proof}

\section*{Acknowledgements} 
This work was supported by the DFG project FR 4083/3-1 and by the
Deutsche Forschungsgemeinschaft (DFG, German Research Foundation)
under Germany's Excellence Strategy EXC 2044 -390685587, Mathematics
M\"unster: Dynamics--Geometry--Structure. L.K. was supported by the
DFG through the Emmy Noether Programme (project number
509436910). U.S.  was partially funded by the Austrian Science Fund (FWF) projects 10.55776/F65,  10.55776/I5149. For
open-access purposes, the authors have applied a CC BY public copyright
license to any author-accepted manuscript version arising from this
submission. 

\bigskip

\section*{Conflict of interest}

The authors have no competing interests to declare that are relevant to the content of this article.

 Data sharing not applicable to this article as no datasets were generated or analysed during the current study.

 \appendix 
\section{Proof of \Cref{lemma:elementaryprop}}\label{appendix}

\begin{proof}[Proof of \Cref{lemma:elementaryprop}] Let $C_n$ be a minimizer of \eqref{eq: main energy}. For simplicity, we write $G=(V,E)$ instead of $G_{\mathrm{nat}}=(V,E_{\mathrm{nat}})$ for the associated natural bond graph.   \\
\noindent \begin{step}{1}
 We define $V_{\rm sub} = V \cup \bigcup_{x\in V}  \mathcal{N}_{\mathcal{L}^-}(x)$ and $E_{\rm sub} = E \cup \lbrace \lbrace x,y \rbrace \colon x \in V, y\in \mathcal{N}_{\mathcal{L}^-}(x) \rbrace$.  In this step, we show
\begin{align}\label{ineq:1epsregular}
|x-y| \ge  1-\varepsilon \text{ for all }  \lbrace x,y \rbrace \in E_{\rm sub}\,. 
\end{align}
 We define  
\begin{align}\label{def:M}
M:= \max_{x \in \mathbb{R}^2} \#(V_{\rm sub}   \cap B_{\frac{1}{2}(1-\varepsilon)}(x))\,,
\end{align}
 where $B_r(x)$ denotes the closed ball with radius $r>0$ centered at $x$ and we write $B_r$ in the following whenever $x=0$.     Claim \eqref{ineq:1epsregular} will follow by showing  $M=1$. Let $ x_0 \in \R^2$ be a maximizer in \eqref{def:M}. After translation of $V_{\rm sub}$, it is not restrictive to  assume  that $x_0=0$.   As $\#\{\{x,y\} \in E_{\rm sub} \colon x,y \in B_{\frac{1}{2}(1-\varepsilon)}\} \geq \frac{M(M-1)}{2}$, by assumption {\rm ($\rm{iii}_2$)}, letting $\lambda = \min\{1, 2  \beta\}$,  we have 
\begin{align}\label{estimate:insideBall}
\underset{\{x,y\} \in E}{\sum_{x,y \in B_{\frac{1}{2}(1-\varepsilon)}}} v_2(|x-y|) +  2  \beta  \underset{x \in V, y\in \mathcal{N}_{\mathcal{L}^-}(x) }{\sum_{x,y\in  B_{\frac{1}{2}(1-\varepsilon)} }}  v_2(|x-y|)   \geq \frac{\lambda}{2\varepsilon}M(M-1)\,.
\end{align}
Consider the annulus $ A_\eps :=B_{\frac{1}{2}(1-\varepsilon) +\sqrt{2}} \setminus B_{\frac{1}{2}(1-\varepsilon)} \subset B_{\frac{1}{2}+\sqrt{2}}$. By covering $B_{\frac{1}{2}+\sqrt{2}}$ with discs of radius $\frac{1}{4}$ we get that there exists $ K \in  \mathbb{N}$ and $\{z_i\}_{i=1}^{K} \subset \mathbb{R}^2$ such that  for all $0 < \eps \le \frac{1}{2}$ 
\begin{align*}
A_\eps  \subset B_{\frac{1}{2}+\sqrt{2}}  \subset  \bigcup_{i=1}^{K} B_{\frac{1}{4}}(z_i)   \subset \bigcup_{i=1}^{K} B_{\frac{1}{2}(1-\varepsilon)}(z_i)\,.
\end{align*}
Note that $K$ is independent of $\varepsilon$ (provided $\varepsilon <\frac{1}{2}$). Recalling \eqref{def:M}, we have
\begin{align}\label{ineq:cardannulus}
\#(V_{\rm sub} \cap A_\eps) \leq \#\left(V_{\rm sub}\cap  \bigcup_{i=1}^K B_{\frac{1}{2}(1-\varepsilon)}(z_i)\right)\leq \sum_{i=1}^K \#\left(V_{\rm sub}  \cap  B_{\frac{1}{2}(1-\varepsilon)}(z_i)\right)\leq K M\,.
\end{align}
By \eqref{ineq:cardannulus}, the definition of $M$, and {($\rm i_2$)}, letting $\Lambda = \max\{1,  \beta\}$ we have
\begin{align}\label{ineq:annulusinteraction}
-\underset{\{x,y\} \in E}{\sum_{x \in B_{\frac{1}{2}(1-\varepsilon)}, y \in A_\eps}} v_2(|x-y|)  -   \beta  \underset{x \in V, y\in \mathcal{N}_{\mathcal{L}^-}(x) }{\sum_{x \in B_{\frac{1}{2}(1-\varepsilon)}, y \in A_\eps}}  v_2(|x-y|)  & \leq \Lambda \cdot \#\{x \in V_{\rm sub} \cap B_{\frac{1}{2}(1-\varepsilon)}, y \in V_{\rm sub} \cap A_\eps\} \notag \\ & \leq  \Lambda KM^2\,.
\end{align}
We write $V\cap B_{\frac{1}{2}(1-\varepsilon)} =  \{x_i\}_{i=1}^{\bar{M}} $,  $\bar{M}\le M$, and consider  a competitor $\hat{V}$ (with associated natural  bond graph $\hat{G}$) given by
\begin{align*}
\hat{V} = (V \setminus B_{\frac{1}{2}(1-\varepsilon)}) \cup \bigcup_{i=1}^M \{x_i +\tau_i\}\,,
\end{align*}
where $\tau_i \in \mathbb{R}^2$ are chosen such that
\begin{align}\label{ineq:choicetaui}
\mathrm{dist}(x_i+\tau_i, \hat{V}\setminus \{x_i+\tau_i\}) \geq \sqrt{2} \text{ for all } i=1,\ldots, \bar{M}\,.
\end{align}
By \eqref{eq: relation}, \eqref{ineq:choicetaui}, {($\rm ii_2$)},  and the optimality of $G$ we have
\begin{align}\label{ineq:Xmin}
F_\beta  (G) \leq  F_\beta  (\hat{G}) &\leq F_\beta  (G) - \underset{\{x,y\} \in E}{\sum_{x,y \in B_{\frac{1}{2}(1-\varepsilon)}}} v_2(|x-y|)-   2  \beta  \underset{x \in V, y\in \mathcal{N}_{\mathcal{L}^-}(x) }{\sum_{x,y\in  B_{\frac{1}{2}(1-\varepsilon)} }}  v_2(|x-y|)  \notag \\
& \quad  - 2 \underset{\{x,y\} \in E}{\sum_{x \in B_{\frac{1}{2}(1-\varepsilon)}, y \in A_\eps}} v_2(|x-y|) -  2  \beta  \underset{x \in V, y\in \mathcal{N}_{\mathcal{L}^-}(x) }{\sum_{x \in B_{\frac{1}{2}(1-\varepsilon)}, y \in A_\eps}}  v_2(|x-y|)  \,.
\end{align}
 Now, using  \eqref{estimate:insideBall}, \eqref{ineq:annulusinteraction}, and \eqref{ineq:Xmin}, we obtain
\begin{align*}
\frac{\lambda  }{2\varepsilon}M(M-1)  \leq 2 \Lambda K M^2\,.
\end{align*}
For $\varepsilon>0$ small enough ($\varepsilon < \frac{\lambda}{5\Lambda K  }  $ suffices), this inequality can only be true for $M=1$. This yields \eqref{ineq:1epsregular} and concludes Step 1.
\end{step} \\
\noindent \begin{step}{2} In this step we prove that all bond angles satisfy 
\begin{align}\label{incl:2epsregular}
\theta \in [\pi/2-\varepsilon,\pi/2+\varepsilon] \cup [\pi-\varepsilon,\pi+\varepsilon] \cup [3\pi/2-\varepsilon,3\pi/2+\varepsilon]\,.
\end{align}
In particular, by choosing $\varepsilon< \frac{1}{10}\pi$, this will also imply that $\#\mathcal{N}(x,E) \leq 4$ and $\# \mathcal{N}_{\mathcal{L}^-}(x) \le 1$ for all $x \in V$,  where we recall that in \eqref{eq: main energy} the energy contributions is due to triples $(x,y,z) \in  \left(C_N\cup\mathcal{L}^-\right)^3$. To see \eqref{incl:2epsregular}, we   first    claim that 
\begin{align}\label{ineq:neighbourhood}
\#\mathcal{N}(x,E) + \# \mathcal{N}_{\mathcal{L}^-}(x) \leq 4\frac{\left(\sqrt{2} +\frac{1}{2}\right)^2}{(1-\varepsilon)^2} \text{ for all } x\in V\,.
\end{align} 
Indeed,  by Step~1,  {\rm ($\rm{ii}_2$)}, and by the fact that $B_{\frac{1}{2}(1-\varepsilon)}(y) \subset B_{\sqrt{2}+\frac{1}{2}}(x)$ for all $y \in \mathcal{N}(x,E) \cup \mathcal{N}_{\mathcal{L}^-}(x) $ we have
\begin{align*}
\big(\sqrt{2} +\tfrac{1}{2}\big)^2\pi = \big| B_{\sqrt{2}+\frac{1}{2}}(x)\big| \geq \sum_{y \in \mathcal{N}(x,E) \cup \mathcal{N}_{\mathcal{L}^-}(x)  } \big|B_{\frac{1}{2}(1-\varepsilon)}(y)\big| \geq  \frac{1}{4}(1-\varepsilon)^2 \pi\, \big(\# \mathcal{N}(x,E) + \# \mathcal{N}_{\mathcal{L}^-}(x)  \big) \,,
\end{align*} 
i.e., \eqref{ineq:neighbourhood} holds. We now show  \eqref{incl:2epsregular}. In fact,  assuming by contradiction that  $x$ has a bond angle $\theta_{y,x,z}$ that does not satisfy \eqref{incl:2epsregular}, we could define $\hat{V}= (V\setminus \{x\} )\cup \{x+\tau\}$ for some $\tau \in \mathbb{R}^2$ such that $\mathrm{dist}(x+\tau, \hat{V}\setminus \{x+\tau\})\geq \sqrt{2}$. Then, by  {($\rm i_2$)}, {($\rm iv_3$)}, and \eqref{ineq:neighbourhood}  we obtain a contradiction to the minimality of $G$, namely
\begin{align*}
F_\beta(\hat{G}) \leq F_\beta(G) + \max\{1,  2\beta\}  \big( \#\mathcal{N}(x,E)  + \# \mathcal{N}_{\mathcal{L}^-}(x)  \big)- v_3(\theta_{y,x,z}) <F_\beta(G)\,.
\end{align*}
 Summarizing, with choosing $\varepsilon_0 : = \min\{\frac{1}{10}\pi,  \frac{\lambda}{8K\Lambda}  \}$,  the statement holds. 
\end{step}
 \end{proof}


\end{document}